\tikzset{AUT style/.style={>=angle 60,initial text= ,every edge/.append,every state/.style={minimum size=20,inner sep=2}}}
\let\llncsproof\proof
\renewcommand{\proof}[1][]{%
  \ifx!#1!\else\renewcommand{\proofname}{#1}\fi
  \llncsproof
}
\declaretheorem[name=Observation]{observation}
\let\c@proposition\c@theorem
\let\c@corollary\c@theorem
\let\c@lemma\c@theorem
\let\c@definition\c@theorem
\let\c@example\c@theorem
\let\c@remark\c@theorem
\let\c@fact\c@theorem
\let\c@observation\c@theorem
\let\saveendexample\endexample
\def\endexample{\qed\saveendexample}
\let\saveendproof\endproof
\def\endproof{\qed\saveendproof}
\newcommand{\nats}{\mathbb{N}}
\newcommand{\set}[1]{\{#1\}}
\newcommand{\powerset}[1]{2^{#1}}
\newcommand{\step}[1]{\xrightarrow{#1}}
\newcommand{\extlabel}[2]{\mathsf{ext}(#1,#2)} 
\newcommand{\intlabel}[1]{\mathsf{int}(#1)}
\newcommand{\locallabel}{\lambda} 
\newcommand{\extbr}[2]{\xrightarrow{\extlabel{#1}{#2}}}
\newcommand{\intstep}[1]{\xrightarrow{\intlabel{#1}}}
\newcommand{\size}[1]{|#1|}
\newcommand{\length}[1]{\mathsf{len}(#1)}
\newcommand{\nset}[2]{[#1,#2]} 
\newcommand{\Fcomplexity}[1]{\mathbf{F}_{#1}} 
\newcommand{\Fomegaomega}{\Fcomplexity{\omega^{\omega}}}
\newcommand{\Ffunction}[1]{\mathscr{F}_{#1}}
\knowledgenewrobustcmd{\COVER}{\withkl{\kl[coverability problem]}{\cmdkl{\textsc{Cover}}}}
\newcommand{\prot}{\mathcal{P}} 
\newcommand{\transitions}{\Delta} 
\newcommand{\atrans}{\delta}
\newcommand{\config}{\gamma}
\renewcommand{\epsilon}{\varepsilon}
\newcommand{\brone}[1]{\textbf{br}(#1)}
\newcommand{\recone}[2]{\textbf{rec}(#1, #2)}
\newcommand{\br}[2]{\textbf{br}(#1, #2)}
\newcommand{\rec}[3]{\textbf{rec}(#1, #2, #3)}
\newcommand{\brsymb}{\textbf{br}}
\newcommand{\recsymb}{\textbf{rec}}
\newcommand{\actions}{\mathsf{Actions}}
\newcommand{\anact}{\alpha}
\newcommand{\enregact}{\downarrow}
\newcommand{\dummyact}{{*}}
\newcommand{\diseqtestact}{{\ne}}
\newcommand{\eqtestact}{{=}}
\newcommand{\st}[1]{\mathsf{st}(#1)}
\newcommand{\data}[1]{\mathsf{data}(#1)}
\newcommand{\aval}{v}
\newcommand{\messages}{\mathcal{M}}
\newcommand{\amessage}{m}
\newcommand{\regnum}{r}
\newcommand{\operations}{\mathsf{Op}}
\newcommand{\query}{\phi}
\newcommand{\run}{\rho}
\newcommand{\agents}{\mathbb{A}}
\knowledgenewrobustcmd{\statesin}[1]{\mathsf{\cmdkl{cov}}\cmdkl{(}#1\cmdkl{)}}
\knowledgenewrobustcmd{\valsof}[1]{\mathsf{\cmdkl{val}}\cmdkl{(}#1\cmdkl{)}}
\knowledgenewrobustcmd{\agentsof}[1]{\mathsf{\cmdkl{ag}}\cmdkl{(}#1\cmdkl{)}}
\knowledgenewrobustcmd{\lessthan}{~\cmdkl{\trianglelefteq}~}
\newcommand{\boss}{\mathsf{b}}
\newcommand{\clique}{\mathsf{K}}
\newcommand{\gang}{\mathsf{G}}
\newcommand{\gangset}{\mathcal{G}}
\knowledgenewrobustcmd{\gangof}[2]{\mathsf{\cmdkl{gang}}_{#1}\cmdkl{(}#2\cmdkl{)}}
\knowledgenewrobustcmd{\bossof}[2]{\mathsf{\cmdkl{\mathsf{\boss}}}_{#1}\cmdkl{(}#2\cmdkl{)}}
\knowledgenewrobustcmd{\cliqueof}[2]{\mathsf{\cmdkl{\clique}}_{#1}\cmdkl{(}#2\cmdkl{)}}
\newcommand{\noboss}{\bot}
\newcommand{\cliquesucc}[3]{\overrightarrow{#1^{#2,#3}}}
\newcommand{\covset}{S}
\newcommand{\aconfig}{\sigma}
\newcommand{\aconfigs}[1]{\Sigma_{#1}}
\newcommand{\allaconfigs}{\Sigma}
\newcommand{\aconfiginit}{\aconfig_0}
\newcommand{\aconfiginitset}{\allaconfigs_{\mathsf{init}}}
\newcommand{\arun}{\nu}
\knowledgenewrobustcmd{\absproj}[2]{\cmdkl{\mathsf{abs}}_{#1}\cmdkl{(}#2\cmdkl{)}}
\newcommand{\agentbr}{a_{\brsymb}} 
\newcommand{\agentboss}{a_{\mathsf{boss}}}
\newcommand{\statebr}{q_{\brsymb}}
\newcommand{\val}{v}
\newcommand{\trace}[1]{\mathsf{tr}(#1)}
\knowledgenewrobustcmd{\Input}[1]{\cmdkl{\mathsf{In}}(#1)}
\knowledgenewrobustcmd{\vinput}[2]{\cmdkl{\mathsf{In}}_{#1}(#2)}
\knowledgenewrobustcmd{\Output}[1]{\cmdkl{\mathsf{Out}}(#1)}
\knowledgenewrobustcmd{\voutput}[2]{\cmdkl{\mathsf{Out}}_{#1}(#2)}
\knowledgenewrobustcmd{\vproj}[2]{{#2}\cmdkl{|}_{#1}}
\newcommand{\decsymb}{\mathtt{dec}}
\knowledgenewrobustcmd{\langdec}[1]{\cmdkl{\mathcal{L}}^{\mathtt{#1}}}
\newcommand{\tree}{\tau}
\newcommand{\node}{\mu}
\knowledgenewrobustcmd{\localrunlabel}[1]{\cmdkl{\mathbf{lr}}(#1)}
\knowledgenewrobustcmd{\valuelabel}[1]{\cmdkl{\mathbf{val}}(#1)}
\knowledgenewrobustcmd{\bosslabel}[1]{\cmdkl{\mathbf{bw}}(#1)}
\knowledgenewrobustcmd{\followlabelword}[1]{\cmdkl{\mathbf{fw}}(#1)}
\knowledgenewrobustcmd{\followlabelmessage}[1]{\cmdkl{\mathbf{fm}}(#1)}
\knowledgenewrobustcmd{\speclabel}[1]{\cmdkl{\mathbf{spec}}(#1)}
\newcommand{\spec}{\mathsf{spec}}
\newcommand{\bossspec}{\mathsf{bw}}
\newcommand{\followwordspec}{\mathsf{fw}}
\newcommand{\followmessagespec}{\mathsf{fm}}
\knowledgenewrobustcmd{\TARGET}{\withkl{\kl[target problem]}{\cmdkl{\textsc{Target}}}}
\newcommand{\Loc}{\text{Loc}}
\newcommand{\cpt}{\ensuremath{\mathtt{x}}}
\newcommand{\Cpt}{\ensuremath{\mathtt{X}}}
\newcommand{\dec}[1]{\ensuremath{\mathtt{#1}-}}
\newcommand{\inc}[1]{\ensuremath{\mathtt{#1}+}}
\newcommand{\testz}[1]{\ensuremath{\mathtt{#1}=0?}}
\newcommand{\stepMM}[1]{\xrightarrow{#1}}
\newcommand{\altitude}[1]{\mathbf{alt}(#1)}
\newcommand{\altmax}{\mathbf{altmax}}
\newcommand{\altmin}{\mathbf{altmin}}
\newcommand{\localrun}{u}
\newcommand{\localdata}{\nu}
\newcommand{\pstep}[1]{\step{#1}_{p}}
\newcommand{\towerfun}{\psi}
\newcommand{\Vinit}{W}
\knowledgenewrobustcmd{\subword}{~\cmdkl{\preceq}~}
\newcommand{\binrel}[3]{#1 \mathrel{#2} #3}
\newcommand{\los}{\mathcal{L}}
\newcommand{\lstep}[1]{\xrightarrow{#1}_{\los}}
\newcommand{\lstates}{L}
\newcommand{\lstate}{l}
\newcommand{\ltrans}{d}
\newcommand{\ltransitions}{D}
\newcommand{\startstate}[1]{\mathbf{s}(#1)}
\newcommand{\transstateone}[1]{\mathbf{t}(#1)}
\newcommand{\transstatetwo}[1]{\mathbf{u}(#1)}
\newcommand{\finstate}[1]{\mathbf{f}(#1)} 
\newcommand{\waitstate}{\mathbf{wait}}
\newcommand{\popact}[1]{\mathsf{read}(#1)}
\newcommand{\pushact}[1]{\mathsf{write}(#1)}
\newcommand{\atrace}{\mathsf{tr}}
\knowledgenewrobustcmd{\memoryproj}[1]{\cmdkl{\mathsf{mem}}(#1)}
\knowledgenewrobustcmd\perfproj[1]{\cmdkl{\mathsf{perf}}(#1)}
\newcommand{\tuple}[1]{\langle #1 \rangle}
\newcommand{\exready}{\textsf{rdy}}
\newcommand{\exgotwo}{\textsf{go}}
\newcommand{\exgothree}{\textsf{hlt}}
\newcommand{\regbar}[2]{
	\node (reg1) at (#1, #2+0.15-0.5) {reg $1$};
	\node (reg2) at (#1, #2+0.15-1) {reg $2$};
	\node (reg3) at (#1, #2+0.15-1.5) {reg $3$};
}
\newcommand{\regbartworeg}[2]{
	\node (reg1) at (#1, #2+0.15-0.5) {reg $1$};
	\node (reg2) at (#1, #2+0.15-1) {reg $2$};
}
\newcommand{\onerow}[9]{
\draw  (#1,#2) rectangle (#1+1, #2-0.5);
\draw[fill = #5, opacity = 0.4]  (#1,#2) rectangle (#1+1, #2-0.5);
\draw  (#1,#2-0.5) rectangle (#1+1, #2-1);
\draw[fill = #7, opacity = 0.4]  (#1,#2-0.5) rectangle (#1+1, #2-1);
\draw  (#1,#2-1) rectangle (#1+1, #2-1.5);
\draw[fill = #9, opacity = 0.4]  (#1,#2-1) rectangle (#1+1, #2-1.5);
\node [] (state) at (#1 +0.5 ,#2 + 0.2) {#3};
\node[] (reg1) at (#1 +0.5 ,#2 + 0.25 - 0.5) {#4};
\node[] (reg2) at (#1 +0.5 ,#2 + 0.25 -1) {#6};
\node[] (reg3) at (#1 +0.5 ,#2 + 0.25 -1.5) {#8};
}
\newcommand{\transtable}[3]{
	\node [align = center] at (#1, #2+0.2) {$\rightarrow$};
	\node [align = center, font = {\scriptsize}] at (#1,#2+0.6) {#3};
}
\newcommand{\onerowtworeg}[7]{
	\draw  (#1,#2) rectangle (#1+1, #2-0.5);
	\draw[fill = #5, opacity = 0.4]  (#1,#2) rectangle (#1+1, #2-0.5);
	\draw  (#1,#2-0.5) rectangle (#1+1, #2-1);
	\draw[fill = #7, opacity = 0.4]  (#1,#2-0.5) rectangle (#1+1, #2-1);
	\node [] (state) at (#1 +0.5 ,#2 + 0.2) {#3};
	\node[] (reg1) at (#1 +0.5 ,#2 + 0.25 - 0.5) {#4};
	\node[] (reg2) at (#1 +0.5 ,#2 + 0.25 -1) {#6};
}
\DeclareMathSymbol{\mlq}{\mathord}{operators}{``}
\DeclareMathSymbol{\mrq}{\mathord}{operators}{`'}
\newcommand{\quotemarks}[1]{\mlq #1 \mrq}
\definecolor{Blue Sapphire}{HTML}{002346} 
\definecolor{Gamboge}{HTML}{ee9b00}
\definecolor{Ruby Red}{HTML}{800000}
\begin{document}
\title{Parameterized Broadcast Networks with Registers: from NP to the Frontiers of Decidability\thanks{\footnotesize{Partly supported by ANR project PaVeDyS (ANR-23-CE48-0005).}}}

\titlerunning{Parameterized broadcast networks with registers} 



\author{Lucie Guillou\inst{1} \and Corto Mascle\inst{2} \and Nicolas Waldburger\inst{3}}

\institute{IRIF, CNRS, Université Paris Cité \and 
LaBRI, Université de Bordeaux \and 
IRISA, Universit\'e de Rennes}
\authorrunning{L. Guillou, C. Mascle, N. Waldburger}



	\maketitle

\newcommand{\cortoin}[1]{\todo[color=blue!20,inline]{\small #1}}
\newcommand{\corto}[1]{\todo[color=blue!20]{\small #1}}

\newcommand{\nicoin}[1]{\todo[color=red!20,inline]{\small #1}}
\newcommand{\nico}[1]{\todo[color=red!20]{\small #1}}

\newcommand{\luin}[1]{\todo[color=teal!20,inline]{\small #1}}
\newcommand{\lu}[1]{\todo[color=teal!20]{\small #1}}

\newif\ifproofs
\proofstrue

\newif\ifintuition
\intuitionfalse

\newif\ifbasic
\basicfalse

	\begin{abstract}
	We consider the parameterized verification of networks of agents which communicate through unreliable broadcasts. In this model, agents have local registers whose values are unordered and initially distinct and may therefore be thought of as identifiers.
	When an agent broadcasts a message, it appends to the message the value stored in one of its registers. 
	Upon reception, an agent can store the received value or test it for equality against one of its own registers. 
	We consider the coverability problem, where one asks whether a given state of the system may be reached by at least one agent. We establish that this problem is decidable, although non-primitive recursive. We contrast this with the undecidability of the closely related target problem where all agents must synchronize on a given state. 
	On the other hand, we show that the coverability problem is \NP-complete when each agent only has one register.
	
	\keywords{Parameterized verification \and Well quasi-orders \and Distributed systems }
	 
	\end{abstract}

	\section{Introduction}
We consider Broadcast Networks of Register Automata (BNRA), a model for networks of agents communicating by broadcasts. These systems are composed of an arbitrary number of agents whose behavior is specified with a finite automaton. This automaton is equipped with a finite set of private registers that contain values from an infinite unordered set. Initially, registers all contain distinct values, so these values can be used as identifiers. 
A broadcast message is composed of a symbol from a finite alphabet along with the value of one of the sender's registers. When an agent broadcasts a message, any subset of agents may receive it; this models unreliable systems with unexpected crashes and disconnections. Upon reception, an agent may store the received value or test it for equality with one of its register values. For example, an agent can check that several received messages have the same value.

This model was introduced in \cite{DelzannoST13}, as a natural extension of Reconfigurable Broadcast Networks~\cite{DelzannoSZ2010Adhoc}. In \cite{DelzannoST13}, the authors established that coverability is undecidable if the agents are allowed to send two values per message. They moreover claimed that, with one value per message, coverability was decidable and \PSPACE-complete; however, the proof turned out to be incorrect \cite{ArnaudErratum}. As we will see, the complexity of that problem is in fact much higher. 

In this paper we establish the decidability of the "coverability problem" and its completeness for the hyper-Ackermannian complexity class $\Fcomplexity{\omega^\omega}$, showing that the problem has nonprimitive recursive complexity. The lower bound comes from "lossy channel systems", which consist (in their simplest version) of a finite automaton that uses an unreliable FIFO memory from which any letter may be erased at any time \cite{AbdullaJ1996verif, Schnoebelen2002verifying,ChambartS08ordinal}. 
We further establish that our model lies at the frontier of decidability by showing undecidability of the target problem (where all agents must synchronize in a given state). We contrast these results with the \NP-completeness of the "coverability problem" if each agent has only one register. 

\paragraph*{Related work} 
Broadcast protocols are a widely studied class of systems in which processes are represented by nodes of a graph and can send messages to their neighbors in the graph. There are many versions depending on how one models processes, the communication graph, the shape of messages... 
A model with a fully connected communication graph and messages ranging over a finite alphabet was presented in~\cite{emerson1998model}. When working with parameterized questions over this model (\emph{i.e.}, working with systems of arbitrary size), many basic problems are undecidable~\cite{EsparzaFM1999verification}; similar negative results were found for Ad Hoc Networks where the communication graph is fixed but arbitrary \cite{DelzannoSZ2010Adhoc}. This lead the community to consider Reconfigurable Broadcast Networks (RBN) where a broadcast can be received by an arbitrary subset of agents~\cite{DelzannoSZ2010Adhoc}.

Parameterized verification problems over RBN have been the subject of extensive study in recent years, concerning for instance reachability questions~\cite{DelzannoSTZ12, BalasubramanianGW22}, liveness~\cite{DBLP:journals/computing/ChiniMS22} or alternative communication assumptions~\cite{Balasubramanian18}; however, RBN have weak expressivity, in particular because agents are anonymous. In~\cite{DelzannoST13}, RBN were extended to BNRA, the model studied in this article, by the addition of registers allowing processes to exchange identifiers. 

Other approaches exist to define parameterized models with registers~\cite{BolligRS21}, such as dynamic register automata in which processes are allowed to spawn other processes with new identifiers and communicate integers values~\cite{AbdullaAKR14}. While basic problems on these models are in general undecidable, some restrictions on communications allow to obtain decidability~\cite{AbdullaAKR15, Rezine17}.

Parameterized verification problems often relate to the theory of well quasi-orders
and the associated high complexities obtained from bounds on the length of sequences with no increasing pair (see for example \cite{WSTS}). 
In particular, our model is linked to data nets, a classical model connected to well-quasi-orders. Data nets are Petri nets in which tokens are labeled with natural numbers and can exchange and compare their labels using inequality tests \cite{LazicNORW08}; in this model, the "coverability problem" is $\Fcomplexity{\omega^{\omega^{\omega}}}$-complete \cite{datanetsinequalityfomegaomegaomega}. When one restricts data nets to only equality tests, the "coverability problem" becomes $\Fomegaomega$-complete~\cite{Rosa-Velardo17}. Data nets with equality tests do not subsume BNRA. Indeed, in data nets, each process can only carry one integer at a time, and problems on models of data nets where tokens carry tuples of integers are typically undecidable \cite{Lasota16}.

\paragraph*{Overview}
We start with the model definition and some preliminary results in Section~\ref{sec:preliminaries}. As our decidability proof is quite technical, we start by proving decidability of the coverability problem in a subcase called \emph{signature protocols} in Section~\ref{sec:cover-decidability}.
We then rely on the intuitions built in that subcase to generalize the proof to the general case in Section~\ref{sec:cover-general-case}. We also show the undecidability of the closely-related "target problem".
Finally, we prove the \NP-completeness of the coverability problem for protocols with one register in Section~\ref{sec:cover-1BNRA}.
Due to space constraints, a lot of proofs, as well as some technical definitions, are only sketched in this version.
Detailed proofs can be found in the appendix.

		\paragraph*{}
	\AP In this document, each "notion" is linked to its ""definition"" using the \href{https://www.irif.fr/~colcombe/knowledge_en.html}{knowledge package}. On electronic devices, clicking on words or symbols allows to access their definitions.

	\section{Preliminaries}
\label{sec:preliminaries}

\subsection{Definitions of the Model}
A ""Broadcast Network of Register Automata"" ("BNRA") \cite{DelzannoST13} is a model describing broadcast networks of agents with local registers. A finite transition system describes the behavior of an agent; an agent can broadcast and receive messages with integer values, store them in local registers and perform (dis)equality tests. 
There are arbitrarily many agents. When an agent broadcasts a message, every other agent may receive it, but does not have to do so.

\begin{definition}
	A ""protocol"" with $\regnum$ registers is a tuple $\prot = (Q, \messages, \transitions, q_0)$  with $Q$ a finite set of states, $q_0 \in Q$ an initial state, $\messages$ a finite set of ""message types""  and $\transitions \subseteq Q \times \operations \times Q$ a finite set of transitions, with operations $\operations =$
	\[
	 \set{\br{\amessage}{i}, \rec{\amessage}{i}{\dummyact}, \rec{\amessage}{i}{\enregact}, \rec{\amessage}{i}{\eqtestact}, \rec{\amessage}{i}{\diseqtestact} \hspace{-1.5pt}\mid\hspace{-1.5pt} \amessage \in \messages, 1 \leq i \leq \regnum}.\]
	Label $\brsymb$ stands for ""broadcasts"" and $\recsymb$ for ""receptions"".
	In a reception $\rec{\amessage}{i}{\anact}$, $\anact$ is its \emph{action}. 
The set of actions is $\actions := \set{\eqtestact, \diseqtestact, \enregact, \dummyact}$, where 
$\quotemarks{\eqtestact}$ is an \emph{equality test}, $\quotemarks{\diseqtestact}$ is a \emph{disequality test}, $\quotemarks{\enregact}$ is a \emph{store action} and $\quotemarks{\dummyact}$ is a \emph{dummy action} with no effect.
The ""size"" of $\prot$ is $\size{\prot} := \size{Q} + \size{\messages} + \size{\transitions} + r$.

\end{definition}

We now define the semantics of those systems. Essentially, we have a finite set of agents with $r$ registers each; all registers initially contain distinct values. A step consists of an agent broadcasting a message that other agents may receive.

\begin{definition}[Semantics]
	Let $(Q,\messages, \transitions, q_0)$ be a "protocol" with $\regnum$ registers, and $\agents$ a finite non-empty set of ""agents"".
	A ""configuration"" over $\agents$ is a function $\config : \agents \to Q \times \nats^{\regnum}$ mapping each agent to its state and its register values. 
	We write $\st{\config}$ for the state component of $\config$ and $\data{\config}$ for its register component.
	
	\AP An \intro{initial configuration} $\config$ is one where for all $a \in \agents$, $\st{\config}(a) = q_0$ and $\data{\config}(a, i) \neq \data{\config}(a', i')$ for all $(a,i) \neq (a', i')$.
	
	\AP Given a finite non-empty set of agents $\agents$ and two "configurations" $\config, \config'$ over $\agents$, a ""step"" $\config \step{} \config'$ is defined when there exist $\amessage \in \messages$, $a_0 \in \agents$ and $i \in \nset{1}{r}$ such that \linebreak $(\st{\config}(a_0),\br{\amessage}{i}, \st{\config'}(a_0)) \in \transitions$, $\data{\config}(a_0) = \data{\config'}(a_0)$ and, for all $a \ne a_0$, either $\config'(a) = \config(a)$ or there exists $(\st{\config}(a),\rec{m}{j}{\anact},\st{\config'}(a)) \in \transitions$
		s.t. $\data{\config'}(a, j') = \data{\config}(a, j')$ for $j' \neq j$ and:
		\begin{itemize}
				\item if $\anact = \quotemarks{\dummyact}$ 
				then $\data{\config'}(a,j) = \data{\config}(a,j)$,
				\item if $\anact = \quotemarks{\enregact}$ then $\data{\config'}(a,j) = \data{\config}(a_0,i)$,
				\item if $\anact = \quotemarks{\eqtestact}$ then $\data{\config'}(a,j) = \data{\config}(a,j) =\data{\config}(a_0,i)$,
				\item if $\anact = \quotemarks{\diseqtestact}$ then $\data{\config'}(a,j) = \data{\config}(a,j) \ne \data{\config}(a_0,i)$.
			\end{itemize}
	
	\AP A ""run"" over $\agents$ is a sequence of steps $\run : \config_0 \step{} \config_1 \step{} \cdots \step{} \config_k$ with $\gamma_0, \dots, \gamma_k$ configurations over $\agents$. 
	We write $\config_0 \step{*} \config_k$ when there exists such a "run".
	A "run" is ""initial@@run"" when $\config_0$ is an "initial configuration".  
\end{definition}
\begin{remark}
\label{rem:several_values_per_message}
In our model, agents may only send one value per message. Indeed, "coverability" is undecidable if agents can broadcast several values at once~\cite{DelzannoST13}.
\end{remark}

\begin{example}\label{ex:example-1}

	Figure \ref{fig:ex1} shows a "protocol" with 2 registers. 
	Let $\agents = \{a_1, a_2\}$. We denote by $\tuple{\st{\config}(a_1), \data{\config}(a_1), \st{\config}(a_2),\data{\config}(a_2)}$ a "configuration" $\config$ over $\agents$. The following sequence is an "initial run":\vspace{-0.2cm}
	\begin{multline*}
		\tuple{q_0, (1,2), q_0,(3,4)} \step{} \tuple{q_1, (1,2), q_2,(1,4)} \step{} 
		\tuple{q_3, (1,4), q_3,(1,4)} \\ \step{} \tuple{q_4, (1,4), q_3,(1,4)} \step{} \tuple{q_4, (1,4), q_4,(1,4)}
	\end{multline*}
	The broadcast messages are, in this order: $(m_2,1)$ by $a_1$, $(m_3,4)$ by $a_2$, $(m_4,1)$ by $a_2$ and $(m_4,1)$ by $a_1$. In this "run", each broadcast message is received by the other agent; in general, however, this does not have to be true.
\end{example}

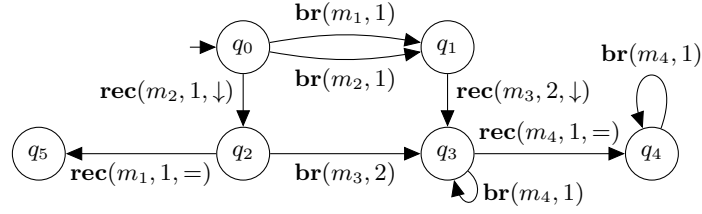
\begin{figure}[t]
	\centering
		\begin{tikzpicture}[xscale=0.5,AUT style,node distance=2cm,auto,>= triangle
	45]
	\tikzstyle{initial}= [initial by arrow,initial text=,initial
	distance=.7cm]
	
	\node[state,initial, minimum width=0.1pt] (0) at (0,0) {$q_0$};
	\node[state] [right = of 0, yshift = 0] (1) {$q_1$};
	\node[state] [below = 0.7 of 1] (4) {$q_3$};
	\node[state] [below = 0.7 of 0, yshift = 0] (5) {$q_2$};
	\node[state] [right = of 4] (7) {$q_4$};
	\node[state] [left = of 5] (5b) {$q_5$};
	
	\path[->] 	
	(0) edge [bend left = 5]  node [above, xshift = 0] {\small$\br{m_1}{1}$} (1)
	edge [bend right = 5]  node [below, xshift = 0] {\small$\br{m_2}{1}$} (1)
	edge node [left, xshift = 0, yshift = 3] {\small$\rec{m_2}{1}{\enregact}$} (5)
	(1) edge node  [right,yshift = 3] {\small$\rec{m_3}{2}{\enregact}$} (4)
	(5) edge node [below, xshift = 0, yshift = 0]  {\small$\br{m_3}{2}$} (4)
	(4)  edge [out=330,in=290,looseness=4] node[yshift = 10] {\small$\br{m_4}{1}$} (4)
	(4) edge [bend right = 0] node [above]{\small$\rec{m_4}{1}{=}$} (7)
	(7)  edge [out=60,in=120,looseness=4] node[above] {\small$\br{m_4}{1}$} (7)
	(5) edge node {\small$\rec{m_1}{1}{=}$} (5b)
	;
	
\end{tikzpicture}
	\caption{Example of a "protocol".}\label{fig:ex1}
\end{figure}

\begin{remark}
	\label{rem:copycat-principle}
	From a run $\run : \config_0 \step{\ast} \config$, we can build a larger run $\run'$ in which, for each agent $a$ of $\run$, there are arbitrarily many extra agents in $\run'$ that end in the same state as $a$, all with distinct register values. To obtain this, $\run'$ make many copies of $\run$ run in parallel on disjoint sets of agents. Because all these copies of $\run$ do not interact with one another and because all agents start with distinct values in "initial configurations", the different copies of $\run$ have no register values in common. This property is called ""copycat principle"": if state $q$ is coverable, then for all $n$ there exists an augmented "run" which puts $n$ agents on $q$.
\end{remark}

\begin{definition}

	\AP The ""coverability problem"" \intro*\COVER\ asks, given a "protocol" $\prot$ and a state $q_f$, whether there is a finite non-empty set of agents $\agents$, an "initial run" $\config_0 \step{*} \config_f$ over $\agents$ that ""covers"" $q_f$, \emph{i.e.}, there is $a \in \agents$  such that $\st{\config_f}(a)  = q_f$.
	
	\AP The ""target problem"" \intro*\TARGET\ asks, given a "protocol" $\prot$ and a state $q_f$, whether there is there is a finite non-empty set of agents $\agents$ and an "initial run" $\config_0 \step{*} \config_f$ over $\agents$  such that, for every $a \in \agents$, $\st{\config_f}(a) = q_f$, \emph{i.e.}, all agents end on $q_f$.
\end{definition}

\begin{example}\label{example-2}
	Let $\prot$ the "protocol" of Figure~\ref{fig:ex1}. As proven in Example~\ref{ex:example-1}, $(\prot,q_4)$ is a positive instance of \COVER~and \TARGET. However, let $\prot'$ the protocol obtained from $\prot$ by removing the loop on $q_4$; $(\prot',q_4)$ becomes a negative instance of \TARGET. Indeed, there must be an agent staying on $q_3$ to broadcast $m_4$. Also, $(\prot, q_5)$ is a negative instance of \COVER: we would need to be able to have one agent on $q_2$ and one agent on $q_0$ with the same value in their first registers. However, an agent in $q_0$ has performed no transition so it cannot share register values with other agents.
\end{example}

\begin{remark}
	In \cite{DelzannoST13}, the authors consider the \AP \emph{query problem} where one looks for a run reaching a "configuration" satisfying some \emph{queries}.	
	In fact, this problem exponentially reduces to \COVER\ hence our complexity result of $\Fcomplexity{\omega^\omega}$ also holds for the query problem. In the case with one register, one can even find a polynomial-time reduction hence our \NP result also holds with queries. 
\end{remark}

\AP We finally introduce \intro{signature BNRA}, an interesting restriction of our model where register $1$ is ""broadcast-only"" and all other registers are ""reception-only"". Said otherwise, the first register acts as a permanent identifier with which agents sign their messages. An example of such a protocol is displayed in \cref{fig:example-signature-protocol}. Under this restriction, a message is composed of a message type along with the identifier of the sender. This restriction is relevant for pedagogical purposes: we will see that it falls into the same complexity class as the general case but makes the decidability procedure simpler. 

\begin{definition}[Signature protocols]
A ""signature protocol"" with $\regnum$ registers is a "protocol" $\prot = (Q, \messages, \transitions, q_0)$ where register $1$ appears only in "broadcasts" in $\transitions$ and registers $i \geq 2$ appear only in "receptions" in $\transitions$. 
\end{definition}

\subsection{Classical Definitions}

\paragraph*{Fast-growing hierarchy}

For $\alpha$ an ordinal in Cantor normal form, we denote by $\Ffunction{\alpha}$ the class of functions corresponding to level $\alpha$ in the Fast-Growing Hierarchy. We denote by $\Fcomplexity{\alpha}$ the associated complexity class and use the notion of $\Fcomplexity{\alpha}$-completeness. All these notions are defined in \cite{Schmitz16}. We will specifically work with complexity class $\Fcomplexity{\omega^{\omega}}$. For readers unfamiliar with these notions, $\Fcomplexity{\omega^{\omega}}$-complete problems are decidable but with very high complexity (non-primitive recursive, and even much higher than the Ackermann class $\Fcomplexity{\omega}$). 

We highlight that our main result is the decidability of the problem. We show that the problem lies in $\Fcomplexity{\omega^{\omega}}$ because it does not complicate our decidability proof significantly; also, it fits nicely into the landscape of high-complexity problems arising from well quasi-orders. 

\paragraph*{Well-quasi orders}

For our decidability result, we rely on the theory of well quasi-orders in the context of subword ordering.
Let $\Sigma$ be a finite alphabet, $w_1, w_2 \in \Sigma^*$, $w_1$ is a ""subword"" of $w_2$, denoted $w_1 \intro*\subword w_2$, when $w_1$ can be obtained from $w_2$ by erasing some letters. 
A sequence of words $w_0, w_1, \ldots$ is \emph{good} if there exist $i<j$ such that $w_i \subword w_j$, and ""bad"" otherwise. Higman's lemma \cite{Higman52} states that every "bad" sequence of words over a finite alphabet is finite, but there is no uniform bound.
In order to bound the length of all "bad" sequences, one must bound the growth of the sequence of words. 
We will use the following result, known as the Length function theorem \cite{SchmitzS2011upperHigman}:

\begin{theorem}[""Length function theorem"" \cite{SchmitzS2011upperHigman}]
	\label{thm:lengthfcttheorem}
	Let $\Sigma$ a finite alphabet and $g : \nats \to \nats$ a primitive recursive function.
	There exists a function $f \in \Ffunction{\omega^{\size{\Sigma} - 1}}$ such that, for all $n \in \nats$, every "bad" sequence $w_1, w_2, \ldots$ such that $\size{w_i} \leq g^{(i)}(n)$ for all $i$ has at most $f(n)$ terms (where $g^{(i)}$ denotes $g$ applied $i$ times). 
\end{theorem}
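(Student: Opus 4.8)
The plan is to derive the bound from the general theory of \emph{controlled bad sequences} in well quasi-orders: I would reduce the combinatorics of the subword ordering to a length estimate for strictly descending sequences of ordinals, and then read off the fast-growing complexity class from the ordinal that governs that descent. The first ingredient is the \emph{maximal order type} of the subword wqo. Writing $k = \size{\Sigma}$, the classical computation of maximal order types (de Jongh--Parikh, Schmidt) gives $o(\Sigma^*, \subword) = \omega^{\omega^{k-1}}$; the upper bound is obtained by exhibiting an order-reflecting linearization of $(\Sigma^*,\subword)$ into this ordinal, and the lower bound by exhibiting linear extensions attaining it (for $k=1$ the order is just $(\nats,\le)$ of type $\omega = \omega^{\omega^0}$, and one recurses on the alphabet). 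This is the quantity that will control everything.

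Second, I would turn an arbitrary \emph{bad} sequence into a strictly descending sequence of ordinals below $\omega^{\omega^{k-1}}$. Given a bad sequence $w_1, w_2, \dots$, assign to its length-$i$ prefix the maximal order type $\alpha_i$ of the \emph{residual} sub-wqo consisting of those words $w$ with $w_j \not\subword w$ for every $j \le i$ (i.e. the words remaining after deleting the up-closure $\{w : w_j \supword \,\cdot\,\}$ of all earlier terms). Since the sequence is bad, $w_{i+1}$ lies in this residual, and removing the up-closure of $w_{i+1}$ strictly shrinks the residual's maximal order type, so $\alpha_1 > \alpha_2 > \cdots$ is a genuine ordinal descent. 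Hence the length of the bad sequence is at most the length of this descent, and the whole problem is transported into ordinal arithmetic.

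Third, I would make the descent \emph{controlled} and bound its length. Because $\size{w_i} \le g^{(i)}(n)$, only finitely many words are ever available, and one checks that the Cantor normal form of each $\alpha_i$ has a \emph{norm} bounded by a primitive-recursive function of $g^{(i)}(n)$. I then invoke the standard length theorem for controlled ordinal descents: a $g$-controlled strictly descending sequence below $\gamma$ has length bounded by a Cichoń/Hardy-style function $\mathcal{M}_{\gamma,g}$. Using the classical identity $H^{\omega^{\beta}} = F_{\beta}$ relating the Hardy hierarchy to the fast-growing hierarchy, the descent below $\gamma = \omega^{\omega^{k-1}}$ is controlled by a function at level $\beta = \omega^{k-1}$, hence a member of $\Ffunction{\omega^{k-1}} = \Ffunction{\omega^{\size{\Sigma}-1}}$; as $g$ is primitive recursive and $\omega^{k-1}\ge\omega$ for $k\ge 2$, the contribution of $g$ is absorbed into this class. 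Composing the three steps yields the desired $f$.

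The hard part will be Steps two and three. The reflection from bad sequences to descents must be simultaneously order-reflecting and norm-bounded, so that a bad sequence of given control really induces a descent of comparable control; making this bookkeeping uniform is the crux. Equally delicate is the precise passage $\omega^{\omega^{k-1}} \rightsquigarrow \Ffunction{\omega^{k-1}}$: this is exactly where the ``logarithm'' between the maximal order type and the fast-growing index appears (via $H^{\omega^{\beta}} = F_{\beta}$), and where the $\mathscr{F}$-membership of $\mathcal{M}_{\gamma,g}$ must be verified through monotonicity and subrecursion estimates. Rather than redo these ordinal-arithmetic estimates from scratch, I would rely on the analysis of \cite{SchmitzS2011upperHigman}, which carries out precisely this calculation for Higman's lemma.
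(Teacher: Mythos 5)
The paper does not prove this statement at all: it is imported as a black box from \cite{SchmitzS2011upperHigman}, so there is no internal proof to compare against. Your outline is a faithful summary of the strategy actually used in that reference (maximal order types of $(\Sigma^*,\subword)$ via de Jongh--Parikh, reflection of a bad sequence into a controlled descending sequence of ordinals via residual sub-wqos, and the Cichoń/Hardy-to-fast-growing translation placing the length function in $\Ffunction{\omega^{\size{\Sigma}-1}}$), and since you explicitly defer the genuinely hard steps --- the norm-boundedness of the reflection and the $\Ffunction{}$-membership estimates --- to that same source, your argument and the paper's amount to the same citation.
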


\subsection{A Complexity Lower Bound for COVER Using LCS}

""Lossy channel systems"" ("LCS") are systems where finite-state processes communicate by sending messages from a finite alphabet through lossy FIFO channels. Unlike in the non-lossy case \cite{BZ83}, reachability of a state is decidable for "lossy channel systems" \cite{AbdullaJ1996verif}, but has non-primitive recursive complexity \cite{Schnoebelen2002verifying} and is in fact $\Fcomplexity{\omega^{\omega}}$-complete \cite{ChambartS08ordinal}. 
By simulating LCS using BNRA, we obtain our $\Fcomplexity{\omega^{\omega}}$ lower bound for the "coverability problem":

\begin{restatable}{proposition}{propReductionLCS}
	\label{prop:reduction-LCS}
	\COVER\ for "signature BNRA" is $\Fcomplexity{\omega^\omega}$-hard.
\end{restatable}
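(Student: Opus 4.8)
The plan is to encode an arbitrary "lossy channel system" (LCS) into a "signature BNRA" in such a way that a designated control state of the LCS is reachable if and only if some target state $q_f$ is coverable in the constructed protocol. Since reachability for LCS is $\Fcomplexity{\omega^\omega}$-complete \cite{ChambartS08ordinal}, and the reduction will be polynomial-time (indeed logspace) in the size of the LCS, this yields $\Fcomplexity{\omega^\omega}$-hardness of "coverability" for "signature BNRA". The central difficulty is simulating the single FIFO channel of the LCS, which has unbounded contents, using finitely many agents each with a fixed number of registers: no single agent can hold the whole channel word.

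The key idea is to let the channel be represented \emph{distributively} by a chain of agents, each agent holding one letter of the channel word in its control state and using its "broadcast-only" first register as a permanent identifier (signature). To chain the agents together, each agent stores in a "reception-only" register the identifier of its predecessor in the FIFO order, so the channel is encoded as a linked list threaded through the agents' signatures. A write to the channel is simulated by recruiting a fresh agent (there are arbitrarily many available by the "copycat principle", Remark~\ref{rem:copycat-principle}), which stores the identifier of the current tail and broadcasts its own identifier to become the new tail; a read is simulated by having the head agent broadcast its stored letter and then hand off the ``head'' role to the agent whose predecessor-register matches its own identifier, detected via an "equality test" "reception". One designated ``controller'' agent carries the finite control state of the LCS and coordinates these operations through message types encoding the read/write actions. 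The crucial point is that lossiness of the channel is obtained \emph{for free}: because "broadcasts" in a BNRA may be received by an arbitrary subset of agents (possibly none), and because agents are never forced to receive, an arbitrary suffix or arbitrary letters of the simulated channel may simply ``fall out'' of the linked list exactly as messages are dropped in an LCS. This matches the nondeterministic unreliable semantics of the model rather than fighting it.

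The steps I would carry out, in order, are: first, fix the formal definition of LCS and its reachability problem and recall the $\Fcomplexity{\omega^\omega}$-hardness from \cite{ChambartS08ordinal,Schnoebelen2002verifying}; second, describe the protocol $\prot$, giving explicitly the state set (product of the LCS control states with a small gadget for the channel-agent roles), the message types (one per channel letter together with synchronization symbols for write-start, write-end, read-request, and role-handoff), and the transitions implementing recruit/enqueue and dequeue as sketched above, checking that register $1$ is used only in "broadcasts" and registers $i\ge 2$ only in "receptions" so that $\prot$ is genuinely a "signature protocol"; third, prove soundness, i.e. that every "initial run" of $\prot$ covering $q_f$ projects to a valid lossy run of the LCS reaching the target, by reading off the channel contents from the linked list of agents at each step and observing that dropped links correspond precisely to message losses; fourth, prove completeness, i.e. that any lossy run reaching the LCS target can be simulated step-by-step, using the "copycat principle" to supply the fresh agents needed for each write and choosing the receiving subset of each "broadcast" to realize exactly the intended enqueue, dequeue, or loss.

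The main obstacle I expect is the \textbf{faithfulness of the FIFO ordering}: I must ensure that the distributed linked list genuinely preserves first-in-first-out order and that no spurious interleavings let the simulated controller read letters out of order or desynchronize from the channel agents. Because messages are unordered values and "broadcasts" are globally visible, I have to use the "equality test" mechanism and the per-agent predecessor-identifier registers carefully to guarantee that exactly one agent at a time holds the ``head'' token and one holds the ``tail'' token, and that handoffs advance these tokens consistently; this is where the detailed transition bookkeeping and the invariant maintained across the bisimulation-style argument will require the most care. A secondary subtlety is guaranteeing that the only deviations from a perfect (non-lossy) simulation are \emph{losses} and never phantom insertions of letters, which follows from the fact that letters can only ever enter the channel through the explicit recruit-and-enqueue gadget controlled by the controller.
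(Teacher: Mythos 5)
Your high-level strategy coincides with the paper's: reduce from reachability in lossy channel systems, whose $\Fcomplexity{\omega^\omega}$-hardness is known, and obtain lossiness for free from the fact that broadcasts may be received by an arbitrary (possibly empty) subset of agents. However, your channel encoding is genuinely different from the paper's, and I believe it contains a gap that cannot be repaired within the signature restriction. You encode the channel as a linked list of agents, each holding one letter, with a single controller agent holding the LCS control state; pops require the controller to receive the head's letter and the head role to migrate along the list. The obstruction is authentication of the moving head. In a signature protocol an agent can only broadcast messages stamped with its own permanent identifier, and a receiver can only verify a sender by equality-testing against an identifier it has \emph{already} stored; it can only ever learn a new identifier from an unauthenticated broadcast by the identifier's owner. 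Your controller must therefore accept, over an unbounded run, a sequence of letters from a sequence of distinct head agents whose identifiers it cannot have pre-stored and cannot learn in a verified way (the old head cannot vouch for the new head, since it cannot broadcast the new head's identifier). Consequently any agent that has wandered into the letter gadget can impersonate the head and feed the controller a letter that was never enqueued. This is not mere loss: it is a phantom insertion, and channel systems with insertion errors do not have $\Fcomplexity{\omega^\omega}$-hard reachability, so soundness of the reduction fails. The same mechanism also breaks FIFO order, since several agents may simultaneously store the same predecessor identifier and all become heads. You correctly name both issues as the main obstacles, but the plan offers no mechanism to resolve them, and the signature discipline appears to rule one out.

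The paper's construction avoids this entirely by never storing the channel distributively. Each agent simulates one \emph{step} of the LCS: it stores, once and for all, the identifier of a single predecessor, equality-tests every subsequent reception against it, and receives from that predecessor the entire configuration (control location followed by the channel contents, letter by letter, terminated by a marker), relaying it on the fly with its own identifier after applying one push or pop at the appropriate end. Because each agent authenticates exactly one permanently fixed sender, no phantom letters can enter a branch, the order of letters is preserved by the order of broadcasts of a single sender, and dropped receptions account exactly for lossiness. If you want to salvage your write-up, I would recommend switching to this one-agent-per-step, stream-the-configuration encoding rather than the one-agent-per-letter linked list.
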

\begin{proof}[Proof sketch]
Given an "LCS" $\los$, we build a "signature protocol" $\prot$ with two registers. 
Each agent starts by receiving a foreign identifier and storing it in its second register; using equality tests, it then only accepts messages with this identifier.
Each agent has at most one predecessor, so the communication graph is a forest where messages propagate from roots to leaves.
 Each branch simulates an execution of $\los$. Each agent of the branch simulates a step of the execution: it receives from its predecessor a "configuration" of $\los$, chooses the next "configuration" of $\los$ and broadcasts it, sending first the location of $\los$ and then, letter by letter, the content of the channel. 
 It could be that some messages are not received, hence the lossiness.  
The full proof can be found in Appendix~\ref{app:reduction-lcs}. 
\end{proof}


	\section{Coverability Decidability for Signature Protocols}
\label{sec:cover-decidability}

This section and the next one are dedicated to the proof of our main result:

\begin{restatable}{theorem}{decidablecover}
\label{thm:decidable-cover}
\COVER\ for BNRA is decidable and $\Fcomplexity{\omega^\omega}$-complete.
\end{restatable}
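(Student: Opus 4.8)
The plan is to get the lower bound essentially for free and to concentrate all the work on the upper bound, namely membership in $\Fomegaomega$ (which in particular yields decidability). For hardness, \cref{prop:reduction-LCS} already gives $\Fcomplexity{\omega^\omega}$-hardness of \COVER\ for "signature BNRA"; since every "signature protocol" is a "protocol", the very same reduction witnesses $\Fcomplexity{\omega^\omega}$-hardness of \COVER\ for general BNRA, so only the upper bound remains. Following the overview, I would first establish the upper bound for "signature protocols" in \cref{sec:cover-decidability} and then lift the argument to arbitrary "protocols" in \cref{sec:cover-general-case}.

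For the upper bound, the idea is to replace the unbounded population of "agents" by a finite tree-shaped certificate of coverability. The enabling observation is the "copycat principle": if a state is coverable then it is coverable with arbitrarily many agents, so one may dedicate a fresh agent to every broadcast that another agent needs to receive. This lets me encode a covering "run" as an "unfolding tree" whose root is the agent reaching $q_f$ and whose children recursively supply the broadcasts consumed by the receptions of their parent. Since register contents are unordered identifiers, only the pattern of equalities and disequalities between them is relevant; in the "signature" case each agent signs with the fixed value of register $1$, so a node can be annotated with a "specification" recording, up to the finite control, which sequences of "message types" it must emit and receive. Two things must then be checked: that any realisable "run" projects to such a tree, and conversely that any tree satisfying the local consistency constraints can be stitched back into an actual covering "run".

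Finiteness, and the precise complexity, come from well-quasi-order reasoning. The data labelling the nodes are sequences over the fixed finite alphabet of "message types", compared by the "subword" order, which is a well-quasi-order by Higman's lemma. If a minimal certificate were too large it would exhibit two comparable node labels, and a shortening argument powered again by the "copycat principle" would let me excise the redundant part, contradicting minimality; hence a minimal certificate induces a "bad" sequence of controlled growth. Applying the "Length function theorem" (\cref{thm:lengthfcttheorem}) with $\Sigma$ the message alphabet bounds its length by a function in $\Ffunction{\omega^{\size{\Sigma}-1}}$. Because $\size{\Sigma}$ is polynomial in $\size{\prot}$ and the levels $\omega^{\size{\Sigma}-1}$ accumulate to $\omega^{\omega}$, this places the minimal certificate size, and therefore the whole search, in $\Fomegaomega$: the decision procedure simply enumerates candidate certificates up to the computed bound, which gives both decidability and the $\Fomegaomega$ upper bound.

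The hard part is the passage from "signature protocols" to general "protocols". In the "signature" case identifiers are in bijection with agents, so the certificate decomposes cleanly into groups of agents sharing one identifier. In general a register may be broadcast and then overwritten, so a single agent can successively act under several values and identifiers proliferate in a way that is not a priori tree-structured. To recover the argument I would organise the agents that share a value into a "gang" led by the agent that created it (its "boss") together with the agents that received it (its "followers"), and introduce a rank such as the "altitude" controlling how values are created and consumed along a "run", so that the same "subword"-based bound still applies after a suitable "decomposition". The delicate accounting is to keep the alphabet feeding the "Length function theorem" polynomial in $\size{\prot}$, since this is exactly what pins the complexity at $\Fomegaomega$ rather than higher up in the fast-growing hierarchy.
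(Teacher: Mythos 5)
Your plan is correct and follows essentially the same route as the paper: lower bound inherited from the LCS reduction for signature protocols, and upper bound via unfolding-tree certificates whose minimal size is bounded by combining a local-run shortening argument with bad sequences under the "subword" order and the "Length function theorem", first for "signature protocols" and then, using "boss"/"follower" nodes and the "altitude" rearrangement, for general "protocols", followed by enumeration of certificates up to the bound. (Only a terminological slip: the "gang"/"clique" vocabulary belongs to the 1-BNRA abstraction of Section~\ref{sec:cover-1BNRA}, not to the general unfolding trees, but the underlying idea you describe is the right one.)
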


For the sake of clarity, in this section, we will first focus on the case of "signature BNRA". 
As a preliminary, we start by defining a notion of "local run" meant to represent the projection of a run onto a given agent.

\subsection{Local runs}
\AP A ""local configuration"" is a pair $(q, \localdata) \in Q \times \nats^r$.  
\AP An ""internal step"" from $(q,\localdata)$ to $(q',\localdata')$ with transition $\atrans \in \transitions$, denoted $(q,\localdata) \intstep{\atrans} (q',\localdata')$, is defined when $\localdata = \localdata'$ and $\atrans =(q, \br{m}{i}, q')$ is a "broadcast".  
\AP A ""reception step"" from $(q,\localdata)$ to $(q',\localdata')$ with transition $\atrans \in \transitions$ and value $\aval \in \nats$, denoted $(q,\localdata) \extbr{\atrans}{\aval} (q',\localdata')$, is defined when $\atrans$ is of the form $(q,\rec{m}{j}{\anact},q')$ with $\localdata(j') = \localdata'(j')$ for all $j' \neq j$ and:
	
	\begin{minipage}[t]{5.6cm}
		\begin{itemize}
			\item if $\anact = \quotemarks{\dummyact}$ 
			then $\localdata(j) = \localdata'(j)$,
			\item if $\anact = \quotemarks{\enregact}$ then $\localdata'(j) = v$,
		\end{itemize}
	\end{minipage}
	\begin{minipage}[t]{5.6cm}
		\begin{itemize}
			\item if $\anact = \quotemarks{\eqtestact}$ then $\localdata(j) = \localdata'(j)= v$,
			\item if $\anact = \quotemarks{\diseqtestact}$ then $\localdata(j) = \localdata'(j) \ne v$.
		\end{itemize}
	\end{minipage}

	\AP Such a reception step corresponds to receiving message $(m,v)$; in a "local run", one does not specify the origin of a received message.
	A ""local step"" $(q,\localdata) \step{} (q',\localdata')$ is either a "reception step" or an "internal step". A ""local run"" $\localrun$ is a sequence of "local steps" denoted $(q_0, \nu_0) \step{*} (q, \nu)$. Its ""length"" $\size{\localrun}$ is its number of steps. 
	
	A value $\aval \in \nats$ appearing in $\localrun$ is ""initial"" if it appears in $\nu_0$ and ""non-initial"" otherwise. 
	For $\aval \in \nats$, the $\aval$-""input"" $\intro*\vinput{\aval}{\localrun}$ (resp. $\aval$-""output"" $\intro*\voutput{\aval}{\localrun}$) is the sequence $m_0 \cdots m_{\ell} \in \messages^*$ of message types received (resp. broadcast) with value $\aval$ in $\localrun$.

\subsection{Unfolding Trees}
\label{sec:unfolding_tree_signature}

We first prove decidability of \COVER\ for "signature BNRA". Note that, in "signature protocols", the initial values of "reception-only" registers are not relevant as they can never be shared with other agents. We deduce from this idea the following informal observation:
\begin{observation}
\label{obs:sg_reception} 
In "signature BNRA", when some agent receives a message, it can compare the value of the message only with the ones of previously received messages, \emph{i.e.}, check whether the sender is the same.
\end{observation}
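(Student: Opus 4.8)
The plan is to show that the initial contents of the reception-only registers $2,\dots,r$ are \emph{useless}: they are pairwise fresh values that can never travel inside a message, so that comparing an incoming value against such an untouched register conveys no information. What remains informative is precisely a comparison against a value that was received and stored earlier, which amounts to a check of sender identity. The whole argument is a freshness invariant on register contents.

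First I would observe that in a signature protocol register $1$ is immutable along every run. Indeed, a broadcast or internal step leaves all register values unchanged, and a reception only modifies the single target register $j$ of some $\rec{m}{j}{\anact}$; since register $1$ is broadcast-only, it never occurs as such a $j$. Hence each agent keeps its initial register-$1$ value forever; write $I_1$ for the set of these (pairwise distinct) identifiers. Because every broadcast in a signature protocol is of the form $\br{m}{1}$, every message emitted along any run is a pair $(m,\val)$ with $\val \in I_1$, namely the permanent identifier of its sender. I would then establish, by a straightforward induction along the run, the invariant that every value carried by a message and every value ever written into a register $j\geq 2$ belongs to $I_1$: a message value lies in $I_1$ by the above, and a store action $\enregact$ copies a message value into register $j$, hence again a value of $I_1$.

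Next I would analyse the contents of a reception register $j\geq 2$ of an agent $a$. At any point its value is either the initial value $\data{\config_0}(a,j)$ or, after a store, some identifier $\val'\in I_1$ that is the identifier of an earlier sender. The crucial point is that $\data{\config_0}(a,j)$ is, by definition of an initial configuration, distinct from every register-$1$ initial value, hence from every element of $I_1$, hence from every value that can ever appear in a message. Consequently, when $a$ receives $(m,\val)$ and tests it against a register $j$ still holding its initial value, an equality test $\rec{m}{j}{\eqtestact}$ can never fire while a disequality test $\rec{m}{j}{\diseqtestact}$ fires vacuously; in both cases the register plays no role. The only tests that genuinely constrain the run are those against a register holding a stored identifier $\val'$, and such a test checks exactly whether the current sender coincides with ($\eqtestact$) or differs from ($\diseqtestact$) the sender of the message from which $\val'$ was stored. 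This is precisely the content of the observation.

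The argument is elementary, and the only point requiring care is the bookkeeping underlying the freshness invariant: one must verify that the fresh initial values of reception registers provably never resurface in any message or comparison, so that tests against untouched registers may legitimately be treated as trivial. Once this invariant is in place the observation follows immediately, and it is what will later allow us to track an agent's behaviour purely through the \emph{sequence of messages it receives} — recording only which receptions share a sender — rather than through the concrete values stored in its registers.
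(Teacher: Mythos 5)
Your proof is correct and follows essentially the same reasoning the paper uses (the paper leaves this as an informal observation justified by the single remark that initial values of reception-only registers can never be shared with other agents, which is exactly your freshness invariant). Your write-up merely makes explicit the immutability of register $1$ and the induction showing that every message value and every stored value lies in the set of sender identifiers, so nothing is missing.
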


If we want to turn a "local run" $u$ of an agent $a$ into an actual "run", we must match $a$'s receptions with broadcasts. Because of Observation~\ref{obs:sg_reception}, what matters is not the actual values of the receptions in $u$ but which ones are equal to which.  Therefore, for a value $v$ received in $u$, if $m_1 \dots m_k \in \messages^*$ are the message types received in $u$ with value $v$ in this order, it means that to execute $u$, $a$ need another agent $a'$ to broadcast messages types $m_1$ to $m_k$, all with the same value.  
We describe what an agent needs from other agents as a set of "specifications@@sg" which are words of $\messages^*$. 

To represent runs, we consider "unfolding trees@@signature" that abstract runs by representing such specifications, dependencies between them and how they are carried out. In this tree, each node is assigned a "local run" and the "specification@@sg" that it carries out. 
Because of copycat arguments, we will in fact be able to duplicate agents so that each agent only accomplishes one task, hence the tree structure.

\begin{definition}
\label{def:unfolding_tree_signature}
\AP An ""unfolding tree@@sg"" $\tree$ over $\prot$ is
a finite tree where nodes $\node$ have three labels:
\begin{itemize}
	\item a "local run" of $\prot$, written $\intro*\localrunlabel{\node}$;
	
	\item a value in $\nats$, written $\intro*\valuelabel{\node}$;
	
	\item a ""specification@@sg"" $\intro*\speclabel{\node} \in \messages^*$.
\end{itemize} 
Moreover, all nodes $\node$ in $\tree$ must satisfy the three following conditions:
\begin{enumerate}[label= (\roman*), ref=(\roman*)]
	\item \label{item:condition1_initial_value_sg} Initial values of $\localrunlabel{\node}$ are never received in $\localrunlabel{\node}$,
	\item \label{item:condition3_boss_node_sg} $\speclabel{\node} \subword \voutput{\valuelabel{\node}}{\localrunlabel{\node}}$, \emph{(recall that $\subword$ denotes the "subword" relation)}
	\item \label{item:condition2_non_initial_value_sg} For each value $\aval$ received in $\localrunlabel{\node}$, $\node$ has a child $\node'$ s.t. $\vinput{\aval}{\localrunlabel{\node}} \subword \speclabel{\node'}$.
\end{enumerate}

\AP Lastly, given $\tree$ an "unfolding tree@@sg", we define its ""size@@treesg"" by $\size{\tree} := \sum_{\node \in \tree} \size{\node}$ where $\size{\node} := \size{\localrunlabel{\node}} + \size{\speclabel{\node}}$. Note that the "size@@treesg" of $\tree$ takes into account the size of its nodes, so that a tree $\tree$ can be stored in space polynomial in $\size{\tree}$ (renaming the values appearing in $\tree$ if needed). 
\end{definition}

We explain this definition. Condition \ref{item:condition1_initial_value_sg} enforces that the "local run" cannot cheat by receiving its "initial values". 
 Condition \ref{item:condition3_boss_node_sg} expresses that $\localrunlabel{\node}$ broadcasts (at least) the messages of $\speclabel{\node}$. We can use the subword relation $\subword$ (instead of equality) because messages do not have to be received.  
Condition \ref{item:condition2_non_initial_value_sg} expresses that, for each value $v$ received in the "local run" $\localrunlabel{\node}$, $\node$ has a child who is able to broadcast the sequence of messages that $\localrunlabel{\node}$ receives with value $v$. 
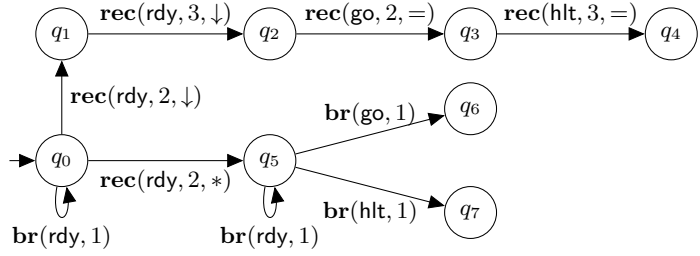
\begin{figure}[t]
	\centering
	\resizebox*{!}{3.5cm}{
	\begin{tikzpicture}[xscale=0.5,AUT style,node distance = 2cm,auto,>= triangle 45]
	\tikzstyle{initial}= [initial by arrow,initial text=,initial
	distance=.7cm]

	\node[state, initial] (0) {$q_0$};
	\node[state] [above = 1 of 0] (1) {$q_1$};
	\node[state] [right = 2.1 of 1] (2) {$q_2$};
	\node[state] [right = 2 of 2] (3) {$q_3$};
	\node[state] [right = 2 of 3] (4) {$q_4$};
	\node[state] [right = 2.1 of 0] (5) {$q_5$};
	\node[state] [right = 2 of 5, yshift = 20] (6) {$q_6$};
	\node[state] [right = 2 of 5, yshift = -20] (7) {$q_7$};
	

	\path[->] 	
			
	(5) edge node[above] {\small$\br{\exgotwo}{1}$} (6)
	(5) edge node[below, yshift = -2] {\small$\br{\exgothree}{1}$} (7)
	(5) edge [out = -105, in = -75, looseness = 4] node[below] {\small$\br{\exready}{1}$} (5)
	(0) edge [out = -105, in = -75, looseness = 4] node[below] {\small $\br{\exready}{1}$} (0)
	(0) edge node[right] {\small $\rec{\exready}{2}{\enregact}$} (1)
	(1) edge node {\small $\rec{\exready}{3}{\enregact}$} (2)
	(2) edge node {\small $\rec{\exgotwo}{2}{\eqtestact}$} (3)
	(3) edge node {\small $\rec{\exgothree}{3}{\eqtestact}$} (4)
	(0) edge node [below]{\small $\rec{\exready}{2}{\dummyact}$} (5)

	;
	
\end{tikzpicture}
	}
	\caption{Example of a "signature protocol".}\label{fig:ex2}\label{fig:example-signature-protocol}
\end{figure}
\begin{example}
\label{ex:protocol-signature}
	Figure~\ref{fig:example-signature-protocol} provides an example of a "signature protocol".
Let $\agents = \set{a_1, a_2,a_3}$. We denote a configuration $\config$ by $\tuple{\st{\config}(a_1),(\data{\config}(a_1)), \\ \st{\config}(a_2),(\data{\config}(a_2)), \st{\config}(a_3), (\data{\config}(a_3))}$. Irrelevant register values are denoted by $\_$. Let $\run$ be the run over $\agents$ of initial configuration  \\$\tuple{q_0, (1,\_,\_), q_0, (2,\_,\_), q_0, (3,\_,\_)}$ where the following occurs:
\begin{itemize}
\item $a_2$ broadcasts $\exready$, $a_1$ receives: $\tuple{q_1, (1,2,\_), q_0, (2,\_,\_), q_0, (3,\_,\_)}$,
\item $a_3$ broadcasts $\exready$, $a_1$ and $a_2$ receive: $\tuple{q_2, (1,2,3), q_5, (2,\_,\_), q_0, (3,\_,\_)}$,
\item $a_2$ broadcasts $\exready$, $a_3$ receives: $\tuple{q_2, (1,2,3), q_5, (2,\_,\_), q_5, (3,\_,\_)}$,
\item $a_2$ broadcasts $\exgotwo$, $a_1$ receives: $\tuple{q_3, (1,2,3), q_6, (2,\_,\_), q_5, (3,\_,\_)}$,
\item $a_3$ broadcasts $\exgothree$, $a_1$ receives: $\tuple{q_4, (1,2,3), q_6, (2,\_,\_), q_7, (3,\_,\_)}$.
\end{itemize}
	
Figure~\ref{fig:ex-unfolding-tree-signature} provides an "unfolding tree@@sg" derived from $\run$ by applying a procedure introduced later. Because agents $a_2$ and $a_3$ broadcast to several other agents, they each correspond to several nodes of the tree.  

	We explain why this tree is an "unfolding tree@@sg". Condition \ref{item:condition1_initial_value_sg} is trivially satisfied. 
	Condition \ref{item:condition3_boss_node_sg} holds at every node because the local run of each node exactly broadcasts the "specification@@sg" of the node. Condition  \ref{item:condition2_non_initial_value_sg} is satisfied at $\node_1$: $\vinput{2}{\localrunlabel{\node_1}}= \exready \cdot \exgotwo = \speclabel{\node_2}$ and $\vinput{3}{\localrunlabel{\node_1}}= \exready \cdot \exgothree = \speclabel{\node_3}$.
	It is also satisfied at $\node_2$, $\node_3$ and $\node_5$ because their local runs only receive $\exready$ and they each have a child with "specification@@sg" $\exready$. 
	It is trivially satisfied at $\node_4$ and $\node_6$ as their "local runs" have no reception. 
\end{example}

\begin{figure}[t]
	\centering
	\newcommand{\tikzabstractconfig}[5]{
	\begin{tikzpicture}
		\draw (#1,#2) rectangle (1 + #1,0.5 +#2);
		\draw (#1,-0.5 + #2) rectangle (1 +#1 ,#2);
		\node[] (x1) at (0.5 + #1, 0.15 +#2) {#3};
		\node[] (x2) at (0.5 +#1 , 0.15- 0.5 +#2) {#4};
		\node [] (q) at (0.5 +#1, 0.7+#2) {#5};
	\end{tikzpicture}
}

\newcommand{\tikzvalue}[4]{
	\begin{tikzpicture}
		\node [] (v) at (#1 + #3 + 0.3, #2 + 1.2) {$\mathbf{v}= #4$};
	\end{tikzpicture}
}

\newcommand{\tikzspec}[4]{
	\begin{tikzpicture}
		\node [] (sp) at (#3 /2, #2 - 0.2) {#4};
	\end{tikzpicture}
}

\newcommand{\displayone}[1]{{\color{Yellow!50!black} #1}}
\newcommand{\displaytwo}[1]{{\color{cyan} #1}}
\newcommand{\displaythree}[1]{{\color{Red} #1}}
\begin{tikzpicture} [xscale = 0.94, yscale = 0.6, every node/.style = {scale = 0.9}]

	\node [font = {\footnotesize}] (sp) at (6.5, 0.2-0.5) {Node $\node_1$ ($a_1$ in $\run$)};
	\node [font = {\footnotesize}] (sp) at (6.5, 0.2-1) {$\spec = \epsilon$};
	\node [font = {\footnotesize}] (sp) at (6.5, 0.2-1.5) {$v = 1$};
	\begin{scope}[xscale = 1.2, xshift = -10]
	\regbar{-1}{0}
	\onerow{-0.5}{0}{$q_0$}{$1$}{white}{$\_$}{white}{$\_$}{white}
	\transtable{0.5}{0}{$\displayone{\extlabel{\atrans_{01}}{2}}$}
	\onerow{0.5}{0}{$q_1$}{$1$}{white}{$2$}{white}{$\_$}{white}
	\transtable{1.5}{0}{$\displayone{\extlabel{\atrans_{12}}{3}}$}
	\onerow{1.5}{0}{$q_2$}{$1$}{white}{$2$}{white}{$3$}{white}
	\transtable{2.5}{0}{$\displaytwo{\extlabel{\atrans_{23}}{2}}$}
	\onerow{2.5}{0}{$q_3$}{$1$}{white}{$2$}{white}{$3$}{white}
	\transtable{3.5}{0}{$\displaythree{\extlabel{\atrans_{34}}{3}}$}
	\onerow{3.5}{0}{$q_4$}{$1$}{white}{$2$}{white}{$3$}{white}
	\end{scope}
	\draw (2, -1.7) -- (0, -2.5);
	\draw (2, -1.7) -- (4, -2.5);
	
	\begin{scope}[xshift = -5]
	\onerow{-2}{-3.5}{$q_0$}{$2$}{white}{$\_$}{white}{$\_$}{white}
	\transtable{-1}{-3.5}{$\displayone{\intlabel{\atrans_{00}}}$}
	\onerow{-1}{-3.5}{$q_5$}{$2$}{white}{$\_$}{white}{$\_$}{white}
	\transtable{0}{-3.5}{$\displayone{\extlabel{\atrans_{05}}{4}}$}
	\onerow{-0}{-3.5}{$q_5$}{$2$}{white}{$\_$}{white}{$\_$}{white}
	\transtable{1}{-3.5}{\displaytwo{$\intlabel{\atrans_{56}}$}}
	\onerow{1}{-3.5}{$q_6$}{$2$}{white}{$\_$}{white}{$\_$}{white}
	\node [align=center, font = {\footnotesize}] (sp) at (-3.2, -3-0.8) {Node $\node_2$ ($a_2$)};
	\node [align = center, font = {\footnotesize}] (sp) at (-3.2,-3.5-0.8) {$\spec = \displayone{\exready} \cdot \displaytwo{\exgotwo}$};
	\node [align = center, font = {\footnotesize}] (sp) at (-3.2, -4-0.8) {$v = 2$};
	\end{scope}
	\begin{scope}[xshift = 5]
	\onerow{2}{-3.5}{$q_0$}{$3$}{white}{$\_$}{white}{$\_$}{white}
	\transtable{3}{-3.5}{\displayone{$\intlabel{\atrans_{00}}$}}
	\onerow{3}{-3.5}{$q_0$}{$3$}{white}{$\_$}{white}{$\_$}{white}
	\transtable{4}{-3.5}{$\displayone{\extlabel{\atrans_{05}}{5}}$}
	\onerow{4}{-3.5}{$q_5$}{$3$}{white}{$\_$}{white}{$\_$}{white}
	\transtable{5}{-3.5}{\displaythree{$\intlabel{\atrans_{57}}$}}
	\onerow{5}{-3.5}{$q_7$}{$3$}{white}{$\_$}{white}{$\_$}{white}
	\node [align=center, font = {\footnotesize}] (sp) at (7.2, -3-0.8) {Node $\node_3$ ($a_3$)};
	\node [align = center, font = {\footnotesize}] (sp) at (7.2,-3.5-0.8) {$\spec = \displayone{\exready} \cdot \displaythree{\exgothree}$};
	\node [align = center, font = {\footnotesize}] (sp) at (7.2, -4-0.8) {$v = 3$};
	\end{scope}
\draw (0, -3.5-1.7) -- (0, -3.5-1.7-0.7);
\draw (4, -3.5-1.7) -- (4, -3.5-1.7-0.7);
\begin{scope}[xscale = 1.2, xshift = 0]
	\onerow{-1}{-7}{$q_0$}{$4$}{white}{$\_$}{white}{$\_$}{white}
	\transtable{0}{-7}{\displayone{$\intlabel{\atrans_{00}}{}$}}
	\onerow{-0}{-7}{$q_0$}{$4$}{white}{$\_$}{white}{$\_$}{white}
	\node [align=center, font = {\footnotesize}] (sp) at (-2, -3.5-0.8-3) {Node $\node_4$ ($a_3$)};
	\node [align = center, font = {\footnotesize}] (sp) at (-2,-4-0.8-3) {$\spec = {\displayone{\exready}}$};
	\node [align = center, font = {\footnotesize}] (sp) at (-2, -4.5-0.8-3) {$v = 4$};
	\end{scope}
	\begin{scope}[xscale = 1.2, xshift = -19]
	\onerow{2.5}{-7}{$q_0$}{$5$}{white}{$\_$}{white}{$\_$}{white}
	\transtable{3.5}{-7}{$\displayone{\extlabel{\atrans_{05}}{6}}$}
	\onerow{3.5}{-7}{$q_5$}{$5$}{white}{$\_$}{white}{$\_$}{white}
	\transtable{4.5}{-7}{\displayone{$\intlabel{\atrans_{55}}$}}
	\onerow{4.5}{-7}{$q_5$}{$5$}{white}{$\_$}{white}{$\_$}{white}
	\node [align=center, font = {\footnotesize}] (sp) at (6.5, -3.5-0.8-3) {Node $\node_5$ ($a_2$)};
	\node [align = center, font = {\footnotesize}] (sp) at (6.5,-4-0.8-3) {$\spec = {\displayone{\exready}}$};
	\node [align = center, font = {\footnotesize}] (sp) at (6.5, -4.5-0.8-3) {$v = 5$};
\end{scope}
\draw (4, -7-1.7) -- (4, -7-1.7-0.7);

	\onerow{3}{-10.5}{$q_0$}{$6$}{white}{$\_$}{white}{$\_$}{white}
	\transtable{4}{-10.5}{\displayone{$\intlabel{\atrans_{00}}$}}
	\onerow{4}{-10.5}{$q_0$}{$6$}{white}{$\_$}{white}{$\_$}{white}
	\node [align=center, font = {\footnotesize}] (sp) at (6.5, -10-0.8) {Node $\node_6$ ($a_3$)};
	\node [align = center, font = {\footnotesize}] (sp) at (6.5,-10.5-0.8) {$\spec = {\displayone{\exready}}$};
	\node [align = center, font = {\footnotesize}] (sp) at (6.5, -11-0.8) {$v = 6$};

\end{tikzpicture}
	\vspace{-0.5cm}
	\caption{Example of an "unfolding tree@@sg" derived from $\run$. Grids correspond to "local runs", a column of a grid is a "local configuration". Transition $\atrans_{ij}$ is the transition between state $q_i$ and state $q_j$, for example $\atrans_{01} = (q_0, \rec{\exready}{2}{\enregact}, q_1)$. If $\atrans$ is a reception of $m\in \messages$, $\extlabel{\atrans}{v}$ corresponds to receiving message $(m,v)$; if $\atrans$ is a broadcast of $m \in \messages$, $\intlabel{\atrans}$ corresponds to broadcasting $(m,\mathsf{id})$ where $\mathsf{id}$ is the value in the first register of the agent. Initial values of "reception-only" registers are irrevelant and written as $\quotemarks{\_}$. Colors correspond to message types.}
	\label{fig:ex-unfolding-tree-signature}
\end{figure}

\begin{lemma}
\label{lem:coverability_witness_sg}
Given a "signature protocol" $\prot$ with a state $q_f$, $q_f$ is coverable in $\prot$ if and only if there exists an "unfolding tree@@sg" whose root is labelled by a "local run" covering $q_f$. We call such an "unfolding tree@@sg" a ""coverability witness@@sg"".
\end{lemma}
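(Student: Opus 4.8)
The plan is to prove the two implications separately. For the forward direction (coverability $\Rightarrow$ witness), I would start from an initial run $\config_0 \step{*} \config_k$ covering $q_f$, say with agent $a$ satisfying $\st{\config_k}(a) = q_f$, and build a witness by projecting the run onto individual agents. The root is the projection $\localrunlabel{\node_0}$ of the run onto $a$, truncated at the moment it reaches $q_f$, with $\valuelabel{\node_0}$ the identifier of $a$ and $\speclabel{\node_0} = \epsilon$. The crucial point, following Observation~\ref{obs:sg_reception}, is that in a signature protocol every value $\aval$ carried by a message is the permanent identifier of its unique sender, so all messages received with value $\aval$ by $a$ were broadcast by a single agent. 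For each such $\aval$ I would attach a child obtained by projecting the run onto that sender, with value label $\aval$ and specification exactly $\vinput{\aval}{\localrunlabel{\node_0}}$, and then recurse.

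The key to keeping the tree finite is to truncate each child's local run. I would assign to each node a deadline: the index of the last step of the original run used by its local run. A child attached for value $\aval$ is truncated at the step of the last reception of $\aval$ by its parent, which is exactly the step at which the sender performs the matching broadcast. Since in that step the sender broadcasts and therefore does not receive, every reception in the child occurs strictly before its deadline; hence deadlines strictly decrease along every edge not incident to the root, the depth is bounded by $k+1$, and, as each finite local run receives finitely many values, the tree is finitely branching and thus finite. Verifying the three conditions is then routine: condition~\ref{item:condition1_initial_value_sg} holds because no agent receives its own identifier (agents do not receive their own broadcasts) nor the globally fresh initial values of its reception-only registers; condition~\ref{item:condition3_boss_node_sg} holds because a sender truncated at the deadline has already broadcast, with value $\aval$, a superword of the messages its parent received with value $\aval$; and condition~\ref{item:condition2_non_initial_value_sg} holds by construction, since the child's specification is precisely the parent's $\aval$-input.

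For the backward direction (witness $\Rightarrow$ coverability) I would take a witness $\tree$, assign to every node a distinct fresh agent whose identifier matches its value label (renaming value labels so that the value a parent receives from a child coincides with that child's identifier), and assemble a single initial run by interleaving the local runs $\localrunlabel{\node}$. Each reception of value $\aval$ in a node is served by the broadcast of the corresponding child: conditions~\ref{item:condition3_boss_node_sg} and~\ref{item:condition2_non_initial_value_sg} give $\vinput{\aval}{\localrunlabel{\node}} \subword \speclabel{\node'} \subword \voutput{\aval}{\localrunlabel{\node'}}$, so the child broadcasts, in order, a superword of what the parent needs, and the parent simply ignores the surplus broadcasts by lossiness. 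Running the root's local run to completion then covers $q_f$.

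The main obstacle is the scheduling in this backward direction: one must produce a single global step sequence in which, whenever a node is about to receive $(m,\aval)$, its child is simultaneously about to broadcast $(m,\aval)$. I would resolve this by induction on $\tree$, exploiting that dependencies point strictly downward (a node receives only from its children, never the converse), so the subtree rooted at a node can always be advanced to emit its next required broadcast without waiting on anything above it; acyclicity of the tree thus rules out deadlock. No agent duplication is needed here, since the witness already assigns one task per agent. Formalizing this interleaving, rather than either subword verification, is where the real work lies.
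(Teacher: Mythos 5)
Your proposal is correct; both directions go through. The tree-to-run direction is essentially the paper's argument: one fresh agent per node, values renamed so that a child's identifier is the value its parent receives, the subtrees' runs composed in parallel with each reception matched to a broadcast via the chain $\vinput{\aval}{\localrunlabel{\node}} \subword \speclabel{\node'} \subword \voutput{\aval}{\localrunlabel{\node'}}$, and surplus broadcasts simply not received; the paper formalizes your ``no deadlock because dependencies only point downward'' as a strong induction on the depth of a node. The run-to-tree direction is where you genuinely diverge. The paper proves, by induction on the length of the run, that for every agent $a$ and value $v$ there is a tree whose root carries the projection of the run onto $a$ and the $v$-output of $a$ as specification, grafting below the root a freshly built subtree for the sender each time the last step is a reception. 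You instead build the tree top-down in one pass, projecting onto the unique sender of each received value --- which exists precisely because in a "signature protocol" register $1$ is never overwritten, so a value can only ever be broadcast by the agent whose identifier it is --- and truncating each child at the broadcast matching its parent's last reception of that value. Finiteness then requires your explicit deadline argument, which is sound: every non-root node ends with a broadcast, so all its receptions occur at strictly earlier indices of the global run and the deadline strictly decreases below it (only the root-to-child edges may keep the deadline constant, as you note). Your route is more direct and makes the signature-specific structure visible; the price is that it does not survive the generalization to arbitrary protocols in Proposition~\ref{prop:trees-sound-complete}, where a value no longer has a unique permanent sender and the paper's induction on run length (together with the decomposition machinery) is what actually scales.
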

\begin{proof}
Given a "run" $\run$, agent $a$ \emph{satisfies a "specification@@sg"} $w \in \messages^*$ in $\run$ if the sequence of "message types" broadcast by $a$ admits $w$ as "subword".

Let $\tree$ be a "coverability witness@@sg". 
We prove the following property by strong induction on the depth of $\node$: for every $\node$ in $\tree$, there exists a "run" $\run$ with an agent $a$ whose "local run" in $\run$ is $\localrunlabel{\node}$ and who satisfies "specification@@sg" $\speclabel{\node}$. This is trivially true for leaves of $\tree$ because their "local runs" have no reception (by condition~\ref{item:condition2_non_initial_value_sg}) hence are actual "runs" by themselves. 
Let $\node$ a node of $\tree$, $\localrun := \localrunlabel{\node}$ and $\aval_1, \dots, \aval_c$ the values received in $\localrun$. 
These values are "non-initial" thanks to condition~\ref{item:condition1_initial_value_sg}; applying condition \ref{item:condition2_non_initial_value_sg} gives the existence of corresponding children $\node_1, \dots, \node_c$ in $\tree$. 
We apply the induction hypothesis on the subtrees rooted in $\node_1, \dots, \node_c$ to obtain "runs" $\run_1, \dots, \run_c$ satisfying the "specifications@@sg" of the children of $\node$. 
Up to renaming agents, we can assume the set of agents of these runs are disjoint; up to renaming values, we can assume that $\aval_j = \valuelabel{\node_j}$ for all $j$ and that all agents start with distinct values. 
We build an "initial run" $\run$ whose "agents" is the union of the "agents" of the $c$ runs along with a fresh agent $a$. In $\run$, we make $\run_1$ to $\run_c$ progress in parallel and make $a$ follow the "local run" $\localrun$, matching each reception with value $v_j$ in $\localrun$ with a broadcast in $\run_j$. 
This is possible because, for all $j$, $\vinput{v_j}{\localrun} \subword \speclabel{\node_j} \subword \voutput{v_j}{\run_j}$ (by \ref{item:condition3_boss_node_sg}). 

Conversely, we prove the following by induction on the length of $\run$: for every "initial run" $\run$, for every agent $a$ in $\run$ and for every $v \in \nats$, there exists an "unfolding tree@@sg" whose root has as "local run" the projection of $\run$ onto $a$ and as "specification@@sg" the $v$-"output" of $a$ in $\run$. If $\run$ is the empty run, consider the "unfolding tree@@sg" with a single node whose "local run" and "specification@@sg" are empty. Suppose now that $\run$ has non-zero length, let $a$ an agent in $\run$, $v \in \nats$ and let $\run_p$ the prefix run of $\run$ of length $\size{\run}-1$.
Let $\tree_1$ the "unfolding tree@@sg" obtained by applying the induction hypothesis to $\run_p$, $a$ and $v$, and consider $\tree_2$ obtained by simply appending the last step of $a$ in $\run$ to the "local run" at the root of $\tree_1$. If this last step is a broadcast, we obtain an "unfolding tree@@sg"; if the broadcast value is $v$, we append the broadcast "message type" to the "specification@@sg" at the root of $\tree_2$ and we are done. 
Suppose that, in the last step of $\run$, $a$ performs a reception $(q, \rec{m}{i}{\anact},q')$ of a message $(m,\aval')$. We might need to adapt $\tree_2$ to respect condition \ref{item:condition2_non_initial_value_sg} at the root. Let $a'$ the agent broadcasting in the last step of $\run$. Let $\tree_3$ the "unfolding tree@@sg" obtained by applying the induction to $\run_p$, $a'$ and $v'$. Let $\tree_4$ the "unfolding tree@@sg" obtained by appending the last broadcast to the "local run" at the root of $\tree_3$ and the corresponding "message type" to the "specification@@sg" at the root of $\tree_3$. Attaching $\tree_4$ below the root of $\tree_2$ gives an "unfolding tree@@sg" satisfying the desired properties. 
\end{proof}

	The "unfolding tree@@sg" $\tree$ of Figure~\ref{fig:ex-unfolding-tree-signature} is built from $\run$ of Example~\ref{ex:protocol-signature} using the previous procedure.
	 Observe that the "unfolding tree@@sg" $\tree$  is a "coverability witness@@sg" for $q_4$. However, one can find a smaller "coverability witness@@sg". 
	Indeed, in the right branch of $\tree$, $\node_5$ and $\node_6$ have the same "specification@@sg", therefore $\node_5$ can be deleted and replaced with $\node_6$. More generally, we would have also been able to shorten the tree if we had $\speclabel{\node_5} \subword \speclabel{\node_6}$.
	
	\begin{remark}\label{remark:tree-root-spec-1}
	With the previous notion of "coverability witness", the root has to cover $q_f$ but may have an empty specification. However, we will later need the length of the specification of a node to be equal to the number of tasks that it must carry out. For this reason, we will, in the rest of this paper, consider that the roots of "coverability witnesses" have a specification of length $1$. This can be formally achieved by introducing a new message type $m_f$ that may only be broadcast from $q_f$ and require that, at the root, $\spec = m_f$. 
	\end{remark}

\subsection{Bounding the Size of a Coverability Witness}
In all the following, we fix a positive instance $(\prot,q_f)$ of \COVER\ with $r+1$ registers (\emph{i.e.}, $r$ registers used for reception) and a  "coverability witness@@sg" $\tree$ of minimal size.
We turn the observation above into an argument that will be useful towards bounding the length of branches of a "coverability witness@@sg":

\begin{lemma}
\label{lem:no_subword_in_branch_sg}
If a "coverability witness@@sg" $\tau$ for $(\prot, q_f)$ of minimal size has two nodes $\node, \node'$ with $\node$ a strict ancestor of $\node'$ then  $\speclabel{\node}$ cannot be a subword of $\speclabel{\node'}$. 
\end{lemma}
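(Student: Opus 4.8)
The plan is to argue by contradiction through a pruning operation that strictly shrinks the witness. Suppose $\tau$ is a minimal "coverability witness@@sg" that nevertheless has a strict ancestor $\node$ of some node $\node'$ with $\speclabel{\node} \subword \speclabel{\node'}$. Write $\tree_{\node}$ and $\tree_{\node'}$ for the subtrees rooted at $\node$ and $\node'$; since $\node'$ is a strict descendant of $\node$, the subtree $\tree_{\node'}$ lies properly inside $\tree_{\node}$. I would build a new tree $\tau'$ by deleting $\tree_{\node}$ and grafting $\tree_{\node'}$ in its place: if $\node$ has a parent $\node_p$, attach $\tree_{\node'}$ as a child of $\node_p$; if $\node$ is the root, simply set $\tau' := \tree_{\node'}$. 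No label inside $\tree_{\node'}$ is altered. Intuitively, the hypothesis says that $\node'$ already promises to broadcast everything $\node$ promised, so it can take over $\node$'s role.

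The main step is to check that $\tau'$ is again a "coverability witness@@sg". Conditions \ref{item:condition1_initial_value_sg}, \ref{item:condition3_boss_node_sg} and \ref{item:condition2_non_initial_value_sg} are local to each node or to a parent/child pair, and every "local run", value and "specification@@sg" label inside the reused subtree, as well as everywhere outside $\tree_{\node}$, is untouched; so all three conditions persist there. The only new parent/child adjacency is between $\node_p$ and $\node'$, so the only thing to reverify is condition \ref{item:condition2_non_initial_value_sg} at $\node_p$. Its "local run" is unchanged, hence it receives the same values; any value formerly served by a child other than $\node$ is still served, and for any value $\aval$ formerly served by $\node$, i.e. $\vinput{\aval}{\localrunlabel{\node_p}} \subword \speclabel{\node}$, transitivity of $\subword$ together with $\speclabel{\node} \subword \speclabel{\node'}$ yields $\vinput{\aval}{\localrunlabel{\node_p}} \subword \speclabel{\node'}$, so $\node'$ serves it. Crucially, condition \ref{item:condition2_non_initial_value_sg} constrains only the children's "specifications@@sg", not their value labels $\valuelabel{\cdot}$, so no relabelling is needed.

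It then remains to confirm the root still covers $q_f$ and that the size drops. If $\node$ is not the root, the root and its "local run" are untouched, so coverage is immediate; if $\node$ is the root, the convention of Remark~\ref{remark:tree-root-spec-1} gives $\speclabel{\node} = m_f$, whence $m_f \subword \speclabel{\node'} \subword \voutput{\valuelabel{\node'}}{\localrunlabel{\node'}}$ by condition \ref{item:condition3_boss_node_sg}, so $\localrunlabel{\node'}$ broadcasts $m_f$; as $m_f$ is broadcast only from $q_f$, the "local run" $\localrunlabel{\node'}$ covers $q_f$ and $\node'$ is a legitimate new root. For the size, $\tau'$ retains $\tree_{\node'}$ but discards at least $\node$, and in a minimal witness every node carries a nonempty "specification@@sg" (the root by Remark~\ref{remark:tree-root-spec-1}, and any other node because it must serve some received value of its parent, otherwise it and its subtree could be pruned, and the input for a received value is nonempty); hence $\size{\speclabel{\node}} \ge 1$ and $\size{\tau'} < \size{\tau}$, contradicting minimality.

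I expect the delicate point to be precisely the re-verification of condition \ref{item:condition2_non_initial_value_sg} at the re-parented node $\node_p$ — making sure that \emph{every} value $\node$ used to serve is still served, and that the value labels $\valuelabel{\cdot}$ need not be reconciled — which ultimately reduces to transitivity of the "subword" order; the case where $\node$ is the root is a minor special case handled through the $m_f$ convention.
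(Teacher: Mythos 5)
Your proposal is correct and is exactly the argument the paper intends: it replaces the subtree rooted at $\node$ by the one rooted at $\node'$, re-checks condition \ref{item:condition2_non_initial_value_sg} at the former parent of $\node$ via transitivity of $\subword$, and invokes minimality (the paper compresses all of this into a single sentence). Your extra care about the root case and about why the size strictly drops only fills in details the paper leaves implicit.
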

\begin{proof}
Otherwise, replacing the subtree rooted in $\node$ with the one rooted in $\node'$ would contradict minimality of $\tree$.
\end{proof}

We would now like to use the "Length function theorem" to bound the height of $\tree$, using the previous lemma. To do so, we need a bound on the "size@@treesg" of a node with respect to its depth. The following lemma bounds the number of steps of a "local run" between two local configurations: we argue that if the "local run" is long enough we can replace it with a shorter one that can be executed using the same input. This will in turn bound the "length" of a "local run" of a node with respect to the "size@@sg" of its "specification@@sg", which is the first step towards our goal.

\begin{lemma}
\label{lem:towerbound_signature}
There exists a primitive recursive function $\towerfun$ so that, for every local run $\localrun: (q,\localdata) \step{*} (q', \localdata')$, there exists $u' : (q,\localdata) \step{*} (q',\localdata')$ with $\size{u'} <~\towerfun(\size{\prot},r)$ and for all value $v' \in \nats$, there exists $v \in \nats$ such that  $\vinput{v'}{u'} \subword \vinput{v}{u}$. 
\end{lemma}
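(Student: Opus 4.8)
The plan is to prove the statement by induction on the number of reception registers $r$, which is exactly what produces the tower-of-exponentials shape of $\towerfun$. Throughout, I will only ever obtain the shorter run $u'$ from $u$ by two operations: \emph{deleting steps} and then \emph{consistently renaming} the values occurring in the steps that remain. The reason for restricting to these operations is that they make the input condition almost automatic. If $u'$ keeps only some of the steps of $u$, and the renaming is injective on the values that are received in the kept steps, then every reception of $u'$ carrying a value $v'$ was, in $u$, a reception carrying a single value $v$, so $\vinput{v'}{u'}$ is a subsequence -- hence a "subword" -- of $\vinput{v}{u}$. Keeping the renaming injective on received values, while still forcing the start and end local configurations to be \emph{exactly} $(q,\localdata)$ and $(q',\localdata')$, is therefore the invariant I must preserve at every surgery. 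The legitimacy of renaming at all rests on Observation~\ref{obs:sg_reception}: only the equality pattern of values matters, so any injective relabelling maps a valid local run to a valid local run.

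For the base case $r=0$ there are no reception registers, hence no reception transitions, so $\localrun$ is a path in the finite graph on $Q$ built from "broadcasts"; a simple-path argument gives $\size{u'} < \size{Q}$ and the input is empty. For the inductive step I would isolate the last register and cut $\localrun$ at the "store action" steps that overwrite it. Between two consecutive stores, this register holds a fixed value $w$ and is only read by equality and disequality tests, so each such \emph{phase} can be viewed as a run of an auxiliary protocol $\prot'$ with $r-1$ registers whose message alphabet is $\messages$ enriched by one bit recording whether the received value equals $w$: a test $\rec{m}{r}{\eqtestact}$ (resp.\ $\rec{m}{r}{\diseqtestact}$) becomes the reception of a positively (resp.\ negatively) marked copy of $m$. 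Here $\size{\prot'}$ is polynomial in $\size{\prot}$, so the induction hypothesis shortens each phase to length below $\towerfun(\size{\prot'}, r-1)$ while preserving its endpoints (the state and registers $1,\dots,r-1$, register $r$ being constant) and the input-subsumption guarantee; translating the marked messages back, all positively marked receptions are receptions of the single value $w$, so per-value subsumption is retained.

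It remains to bound the \emph{number} of phases, since $\localrun$ may store into the last register many times. For this I would pump at the level of phases: abstract each phase by its entry and exit state together with the equality type of all registers relative to the fixed boundary values $\localdata$ and $\localdata'$. The number of such abstractions is bounded by a function of $\size{Q}$ and $r$, so in a long run two phases share an abstraction and play interchangeable roles; one then excises the intervening phases and renames the suffix so that it still starts and ends at the prescribed values. Multiplying the (now bounded) number of phases by the (already bounded) maximal phase length yields a recurrence of the form $\towerfun(\size{\prot}, r) \approx 2^{\mathrm{poly}(\towerfun(\size{\prot}, r-1))}$, i.e.\ a tower of height $r$; as a two-argument function defined by primitive recursion on $r$, this is primitive recursive.

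The main obstacle throughout is the bookkeeping of values under excision, and it is sharpest in the phase-cutting step. A value received in the middle of $\localrun$ may be forced to coincide with one of the fixed final values in $\localdata'$, so it cannot be renamed freely, and one must verify that such forced coincidences never merge two distinct received values -- which would destroy the $\vinput{v'}{u'} \subword \vinput{v}{u}$ guarantee by shuffling two inputs into one. Showing that a renaming can always be chosen injective on received values while simultaneously respecting the prescribed start and end configurations and the marked-alphabet encoding of the eliminated register is the delicate, technical heart of the argument; everything else is routine once this invariant is in place.
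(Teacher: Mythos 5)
Your overall strategy (induction on the number of registers, cutting the run into phases at the $\quotemarks{\enregact}$ actions on one distinguished register, encoding that register's tests into a marked alphabet, and pumping at the level of phases) is genuinely different from the paper's, which inducts on the number of \emph{active} registers and pumps on a repeated block of $2M$ consecutive transitions. But your sketch has a real gap, and it sits exactly at the point you defer as ``the delicate, technical heart'': the claim that, after excising everything between two phases with the same abstraction, one can rename the suffix injectively on received values so that it glues onto the prefix and still ends in $\localdata'$, while preserving $\vinput{v'}{u'} \subword \vinput{v}{u}$ for a single $v$ per $v'$. A single-point cut at a phase boundary forces you to identify each register content at the end of the prefix with the corresponding register content at the start of the suffix; when both are received (non-boundary) values, say $a$ in the prefix and $b$ in the suffix, the renaming $b \mapsto a$ makes the $a$-input of $u'$ the \emph{concatenation} of $a$'s prefix-input with $b$'s suffix-input, and this concatenation is in general not a subword of $\vinput{v}{u}$ for \emph{any} single $v$ --- this is precisely the interleaving failure the paper illustrates with $m_1$ occurring before $m_2$ in one input and after it in the other. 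The paper's construction exists to defeat exactly this: it does not cut at a point but overlaps two occurrences of an identical block of $2M$ transitions containing \emph{two} stores per active register, switches from prefix values to suffix values register by register between the first and the last store, and inserts fresh values in the gap so that no concatenation of two distinct old inputs ever arises. Your phase-level cut has no such overlap, so the invariant you rely on is not achievable as stated; it is not a verification to be carried out but an obstruction to be engineered around.

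A second, more patchable, problem is that the lemma as stated is too weak to serve as your induction hypothesis. When you shorten a phase of the auxiliary protocol with $r-1$ registers, the guarantee ``for all $v'$ there exists $v$ with $\vinput{v'}{u'} \subword \vinput{v}{u}$'' does not prevent the shortened phase from placing the fixed value $w$ of the distinguished register on a reception that was a disequality test $\rec{m}{r}{\diseqtestact}$ in the original protocol (which would then illegally receive $w$ itself), nor does it prevent the ``fresh'' values from colliding with $w$ or with the boundary values of $\localdata$ and $\localdata'$ of the outer run. You would need to strengthen the statement with a set of protected values that may be neither merged nor reused --- which is exactly what the paper does in its appendix generalization with the sets $V$ and $W$. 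Until the junction renaming is actually constructed, presumably by importing something like the paper's first-store/last-store/fresh-value surgery into your phase boundaries, the proof is not complete.
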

\begin{proof}
Let $\towerfun(n,0) = n+1$ and $\towerfun(n,k+1) = 2 \, \towerfun(n,k) \cdot ({\size{\transitions}}^{2\,\towerfun(n,k)}+1)+1$ for all $k$. Observe that $\towerfun(n,k)$ is a tower of exponentials of height $k$, which is primitive-recursive although non-elementary. A register $i \geq 2$ is ""active@@sg"" in a "local run" $u$ if $u$ has some $\quotemarks{\enregact}$ action on register $i$. Let $u$ a "local run", $k$ the number of "active@@sg" registers in $\localrun$, $n := \size{\prot}$ and $M := \towerfun(n,k)$.
We prove by induction on the number $k$ of "active@@sg" registers in $u$  that if $\size{u} \geq \towerfun(n,k)$ then $u$ can be shortened.

If $k=0$, any state repetition can be removed. Suppose that $\size{u} > \towerfun(n,k+1)$ and that the set $I$ of "active@@sg" registers of $u$ is such that $\size{I} = k+1$. If there exists an infix run of $u$ of length $M$ with only $k$ "active@@sg" registers, we shorten $u$ using the induction hypothesis. Otherwise, every sequence of $M$ steps in $u$ has a $\quotemarks{\enregact}$ on every register of $I$. Because $\size{\localrun} > 2 M \, (\size{\transitions}^{2M} +1)$, $\localrun$ contains at least $\size{\transitions}^{2M}+1$ disjoint sequences of length $2M$ and some $s \in \transitions^{2M}$ appears twice: in infix run $u_1$ first, then in infix run $u_2$. We build a shorter run $u'$ by removing all steps between $u_1$ and $u_2$ and merging $u_1$ and $u_2$ (see \cref{fig:proof-pumping-signed}). 
We need suitable values for the reception steps in $s$ in the shortened run $u'$. 
For a given register $i \in I$, we would like to pick a $\quotemarks{\enregact}$ step on register $i$ in $s$, use values from $u_1$ before that step and values from $u_2$ after that step. This would guarantee that all equality and disequality tests still pass. 
However, there is an issue if a value $v$ appears in several registers in $u$. For example, if $v_1 = v_2 = v$ in Figure~\ref{fig:proof-pumping-signed}, we might interleave receptions of $v$ on registers $2$ and $4$: if we had a $\extlabel{\rec{{m_1}}{2}{\eqtestact}}{v}$ in $u_1$ and a $\extlabel{\rec{{m_2}}{4}{\eqtestact}}{v}$ in $u_2$, we could have ${m_1}$ before ${m_2}$ in $\vinput{v}{u}$ but ${m_1}$ after ${m_2}$ in $\vinput{v}{u'}$, so that we do not have $\vinput{v}{u'} \subword \vinput{v}{u}$. We solve this issue by introducing fresh values between values of $u_1$ and values of $u_2$; because $\size{s} = 2M$, there is a $\quotemarks{\enregact}$ for each register in $I$ in each half of $s$. In the shortened run $u'$, before the \emph{first} $\quotemarks{\enregact}$ on register $i$ (excluded), we use values of $u_1$, and after the \emph{last} $\quotemarks{\enregact}$ on register $i$ (included), we use values of $u_2$. For every value $v$ appearing in register $i$ between these two steps in $u_1$, we select a fresh value $v_f$ (\emph{i.e.}, a value that does not appear anywhere in the run) and consistently replace $v$ with $v_f$ (hatched blocks in \cref{fig:proof-pumping-signed}).
With this technique, receptions with values from $u_1$ and receptions with values from $u_2$ cannot get interleaved in $u'$. Therefore, for every value that appeared in $u$, we have $\vinput{v}{u'} \subword \vinput{v}{u}$. Also, for every fresh value $v'$ there is a value $v$ such that $\vinput{v'}{u'} \subword \vinput{v}{u}$. Moreover, $u'$ is shorter than $u$; we conclude by iterating this shortening procedure. 
\end{proof}

\begin{figure}[t]

	\centering
	\usetikzlibrary{calc,shapes.arrows,decorations.pathreplacing, patterns}

\begin{tikzpicture}[yscale = 0.8]

	\node at (3.8,1) [align = center] {Original \\ local run};
	

	\node at (5,1.2) {reg $2$};
	\node at (5,0.7) {reg $3$};
	\node at (5,0.2) {reg $4$};
	\draw (5.5,0) rectangle (10.9,0.5);
	\draw (5.5,0.5) rectangle (10.3,1);
	\draw (5.5,1) rectangle (10.3,1.5);

	\draw[white,fill=blue!20] (7,0) rectangle (10,1.5);
	\draw[white,fill=green!20] (11.5,0) rectangle (14.5,1.5);
	
	\draw[white,fill=blue!30] (7,0) rectangle (7.5,0.5);
	\draw[white,fill=blue!40] (7.5,0) rectangle (8.8,0.5);
	\draw[white,fill=blue!45] (8.8,0) rectangle (10,0.5);
	\draw (5.5,0) rectangle (6.5,0.5);
	\draw (5.5,0) rectangle (7.5,0.5);
	\draw (5.5,0) rectangle (8.8,0.5);
	
	\draw[white,fill=green!20] (11.5,0) rectangle (12,0.5);
	\draw[white,fill=green!40] (12,0) rectangle (13.3,0.5);
	\node at (12.65,0.2) {$v_2$};
	\draw[white,fill=green!57] (13.3,0) rectangle (14.5,0.5);
	\draw (5.5,0) rectangle (12,0.5);
	\draw (5.5,0) rectangle (13.3,0.5);
	
	\draw[white,fill=blue!50] (7,0.5) rectangle (7.7,1);
	\draw[white,fill=blue!20] (7.7,0.5) rectangle (9.4,1);
	\draw[white,fill=blue!35] (9.4,0.5) rectangle (10,1);
	\draw (5.5,0.5) rectangle (5.8,1);
	\draw (5.8,0.5) rectangle (6.8,1);
	\draw (5.5,0.5) rectangle (7.7,1);
	\draw (5.5,0.5) rectangle (9.4,1);
	
	\draw[white,fill=green!70] (11.5,0.5) rectangle (12.2,1);
	\draw[white,fill=green!20] (12.2,0.5) rectangle (13.9,1);
	\draw[white,fill=green!40] (13.9,0.5) rectangle (14.5,1);
	\draw (5.5,0.5) rectangle (12.2,1);
	\draw (5.5,0.5) rectangle (13.9,1);
	
	\draw[white,fill=blue!35] (7,1) rectangle (8.1,1.5);
	\draw[white,fill=blue!55] (8.1,1) rectangle (9.5,1.5);
	\draw[white,fill=blue!25] (9.5,1) rectangle (10,1.5);
	\draw (5.5,1) rectangle (6.7,1.5);
	\draw (5.5,1) rectangle (8.1,1.5);
	\draw (5.5,1) rectangle (9.5,1.5);
	
	\draw[white,fill=green!37] (11.5,1) rectangle (12.6,1.5);
	\draw[white,fill=green!60] (12.6,1) rectangle (14,1.5);
	\draw[white,fill=green!25] (14,1) rectangle (14.5,1.5);
	\draw (5.5,1) rectangle (12.6,1.5);
	\draw (5.5,1) rectangle (14,1.5);
	\draw (5.5,1) rectangle (14.7,1.5);
	\node at (7.4,1.2) {$v_1$};

	\node at (7.3,2) [align = center] {$\quotemarks{\enregact}$ actions};
	\draw[->] (6.7,1.8) -- (6.7,1.6);
	\draw[->] (8.1,1.8) -- (8.1,1.6);

	%
	%
	
	\begin{scope}[xshift = 2cm, yshift = 0.5cm]
		
		\node at (3.5, -2) [align = center] {Shortened \\ local run};
		\node at (5,-1.8) {reg $2$};
		\node at (5,-2.3) {reg $3$};
		\node at (5,-2.8) {reg $4$};
		\draw[<->] (7,-1.3) -- (10,-1.3);
		\node at (8.5, -1.1) {$s$};
		\draw[white,fill=blue!30] (7,-3) rectangle (7.5,-2.5);
		\draw[white,pattern=north west lines, pattern color=brown!80!black] (7.5,-3) rectangle (8.8,-2.5);
		
		\node at (8.15,-2.75) {$v_2'$};
		\draw[white,fill=green!57] (8.8,-3) rectangle (10,-2.5);
		
		\draw[white,fill=blue!50] (7,-2.5) rectangle (7.7,-2);
		\draw[white,,pattern=north west lines, pattern color=red!80!black] (7.7,-2.5) rectangle (9.4,-2);
		\draw[white,fill=green!40] (9.4,-2.5) rectangle (10,-2);
		
		\draw[white,fill=blue!35] (7,-2) rectangle (8.1,-1.5);
		\draw[white,fill=green!55,pattern=north west lines, pattern color=orange!80!white] (8.1,-2) rectangle (9.5,-1.5);
		\draw[white,fill=green!25] (9.5,-2) rectangle (10,-1.5);

		\draw[white] (7, -1.5) rectangle (10.5, -1.5);
		

		\draw[->] (7.6,-3.3) -- (7.6,-2.75);
		\draw[->] (7.9,-3.8) -- (7.9,-1.75);
		\node at (7.6, -3.5) {\small $m_2$};
		\node at (7.9, -4) {\small $m_1$};
		
		\node at (8.9, -3.8) {\begin{tabular}{l}
				fresh\\ 
				values
			\end{tabular}};
		\draw[->] (8.9,-3.3) -- (8.9,-1.75);
		\draw[->] (8.8,-3.3) -- (8.8,-2.25);
		\draw[->] (8.7,-3.3) -- (8.7,-2.75);
		
		\draw (5.5,-3) rectangle (10.5,-2.5);
		\draw (5.5,-3) rectangle (10.5,-2);
		\draw (5.5,-3) rectangle (10.5,-1.5);
		
		\draw (5.5,-3) rectangle (6.5,-2.5);
		\draw (5.5,-3) rectangle (7.5,-2.5);
		\draw (5.5,-3) rectangle (8.8,-2.5);
		
		\draw (5.5,-2.5) rectangle (7.7,-2);
		\draw (5.5,-2.5) rectangle (9.4,-2);
		
		\draw (5.5,-2) rectangle (6.7,-1.5);
		\draw (5.5,-2) rectangle (8.1,-1.5);
		\draw (5.5,-2) rectangle (9.5,-1.5);
		\draw (5.5,-2) rectangle (10.2,-1.5);
		
		\draw (5.8,-2.5) rectangle (6.8,-2);
		\node at (7.4,-1.8) {$v_1$};
		
	\end{scope}
	
	\draw (5.5,0) rectangle (15,0.5);
	\draw (5.5,0) rectangle (15,1);
	\draw (5.5,0) rectangle (15,1.5);
	\draw (5.5,0) rectangle (15,1.5);
	
	\draw[<->] (7,-0.2) -- (10,-0.2);
	\draw[<->] (11.5,-0.2) -- (14.5,-0.2);
	
	\node (s1) at (8.5,-0.4) {$s$};
	\node (s2) at (13,-0.4) {$s$};
\end{tikzpicture}
	\caption{Illustration of the proof of \cref{lem:towerbound_signature}. 
	}
	\label{fig:proof-pumping-signed}
\end{figure}

Using the previous lemma, we will bound the size of a node in $\tree$ with respect to its "specification@@sg" therefore with respect to its parent's size. By induction, we will then obtain a bound depending on the depth, and apply the "Length function theorem" to bound the height of the tree. 

\begin{lemma}
\label{lem:bounds_tree_sg}
For all nodes $\node, \node'$ in $\tree$: \begin{enumerate}
\item \label{item:bound_node_1_sg} $\size{\localrunlabel{\node}} \leq \towerfun(\size{\prot}, r) \, \size{\speclabel{\node}}$,
\item \label{item:bound_node_2_sg} if $\node$ is the child of $\node'$, $\size{\speclabel{\node}} \leq \towerfun(\size{\prot}, r) \, \size{\speclabel{\node'}}$.
\end{enumerate} 
\end{lemma}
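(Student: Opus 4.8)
The plan is to establish the first inequality by a pumping/shortening argument and then derive the second from it, using throughout that $\tree$ has minimal "size@@treesg". First I would record two normalizations that minimality grants for free. (a) At each node $\node$, the "local run" $\localrunlabel{\node}$ may be assumed to stop right after the last broadcast realizing a letter of $\speclabel{\node}$: any later step is either a broadcast unused by the specification or a reception, and in the latter case the corresponding child subtree can be deleted, so truncating yields a valid, strictly smaller tree. (b) If $\node$ is a child of $\node'$ serving a value $v$ received in $\localrunlabel{\node'}$, then $\speclabel{\node}=\vinput{v}{\localrunlabel{\node'}}$: condition~\ref{item:condition2_non_initial_value_sg} only requires $\vinput{v}{\localrunlabel{\node'}}\subword\speclabel{\node}$, and a strict inclusion would let us shrink $\speclabel{\node}$, hence (by condition~\ref{item:condition3_boss_node_sg}) also $\localrunlabel{\node}$, contradicting minimality. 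I would also use that in a "signature protocol" all broadcasts of an agent carry its fixed identifier, so $\valuelabel{\node}=v$ and $\voutput{\valuelabel{\node}}{\localrunlabel{\node}}$ is the full broadcast sequence of $\node$.

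For the first bound set $M:=\towerfun(\size{\prot},r)$ and $s:=\size{\speclabel{\node}}$. Using condition~\ref{item:condition3_boss_node_sg}, fix broadcast positions $p_1<\dots<p_s$ in $\localrunlabel{\node}$ that realize $\speclabel{\node}$; by~(a) the run ends at $p_s$, so these positions cut $\localrunlabel{\node}$ into $s$ blocks, each an infix followed by one specified broadcast. It then suffices to show that every such infix has length below $M$, which gives $\size{\localrunlabel{\node}}\le sM=\towerfun(\size{\prot},r)\,\size{\speclabel{\node}}$. Suppose some infix $w$, read as a "local run" between two "local configurations", has length at least $M$. Then \cref{lem:towerbound_signature} yields a strictly shorter $w'$ with the same endpoints such that every value received in $w'$ has its input a "subword" of the input of some value received in $w$. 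Splicing $w'$ back in produces a strictly shorter local run $u'$ for $\node$.

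The crux is to promote $u'$ to an actual "unfolding tree@@sg", which would contradict minimality. Conditions~\ref{item:condition1_initial_value_sg} and~\ref{item:condition3_boss_node_sg} are cheap: the $s$ specified broadcasts lie outside $w$, so they survive and $\speclabel{\node}\subword\voutput{\valuelabel{\node}}{u'}$ still holds, while choosing the fresh values of \cref{lem:towerbound_signature} outside the initial register contents keeps initial values unreceived. The step I expect to be the main obstacle is restoring condition~\ref{item:condition2_non_initial_value_sg}, since the shortening renames some receptions to fresh values and thus changes the multiset of received values. Lifting the input-subword guarantee from $w$ to the whole run, each value $v'$ received in $u'$ satisfies $\vinput{v'}{u'}\subword\vinput{v}{\localrunlabel{\node}}\subword\speclabel{\node_v}$ for some value $v$ received in $\localrunlabel{\node}$, so $v'$ can be served by a relabelled copy of the child $\node_v$. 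The difficulty is to perform this reassignment injectively, so that the children part does not grow while $\localrunlabel{\node}$ strictly shrinks: one must argue that the old values freed by the renaming pay for the fresh values introduced (noting that each fresh value occurs only inside $w'$, hence serves a specification of length below $M$), so that the total "size@@treesg" strictly decreases.

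Finally, the second bound follows from the first and normalization~(b): if $\node$ is a child of $\node'$ serving $v$, then $\size{\speclabel{\node}}=\size{\vinput{v}{\localrunlabel{\node'}}}$ is at most the number of receptions in $\localrunlabel{\node'}$, hence at most $\size{\localrunlabel{\node'}}$, which by the first bound is at most $\towerfun(\size{\prot},r)\,\size{\speclabel{\node'}}$.
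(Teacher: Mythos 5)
Your proof follows the paper's argument for both items: for the first bound the paper likewise notes that by minimality $\localrunlabel{\node}$ ends with the last broadcast required by $\speclabel{\node}$, marks the broadcast steps witnessing the specification, and shortens the infixes between them via \cref{lem:towerbound_signature}; for the second it uses exactly your chain $\size{\speclabel{\node}} \leq \max_{v} \size{\vinput{v}{\localrunlabel{\node'}}} \leq \size{\localrunlabel{\node'}} \leq \towerfun(\size{\prot},r)\,\size{\speclabel{\node'}}$. The one step you single out as the main obstacle --- reassigning children \emph{injectively} after fresh values are introduced --- is in fact a non-issue: condition~\ref{item:condition2_non_initial_value_sg} only requires that each received value have \emph{some} child whose specification dominates its input, with no injectivity, so several fresh values may all point to the same existing child $\node_v$, no subtree is duplicated, and the tree strictly shrinks because only the local run changed.
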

\begin{proof}
Thanks to \cref{remark:tree-root-spec-1}, we assume that the specification at the root is of length $1$.	
For the first item, by minimality of $\tree$, $\localrunlabel{\node}$ ends with the last broadcast required by $\speclabel{\node}$; we identify in $\localrunlabel{\node}$ the broadcast steps witnessing $\speclabel{\node}$ and shorten the "local run" between these steps using Lemma~\ref{lem:towerbound_signature}.  We thus obtain
 $\size{\localrunlabel{\node}} \leq \towerfun(\size{\prot},r) \, \size{\speclabel{\node}}$, proving \ref{item:bound_node_1_sg}.
For the second item, by minimality of $\tree$, $\size{\speclabel{\node}} \leq \max_{v \in \nats} \size{\vinput{v}{\localrunlabel{\node'}}} \leq \size{\localrunlabel{\node'}} \leq \towerfun(\size{\prot}, r) \, \size{\speclabel{\node'}}$. 
\end{proof}

\begin{proposition}
\label{prop:bounded_witness_sg}
There exists a function $f$ of class $\Ffunction{\omega^{\size{\messages}-1}}$ s.t. $\size{\tree} \leq f(\size{\prot})$. 
\end{proposition}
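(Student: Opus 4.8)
The plan is to bound separately the height of $\tree$, the size of each individual node, and the branching degree, and then to combine these three bounds into a single function of $\size{\prot}$ lying in $\Ffunction{\omega^{\size{\messages}-1}}$. For the height, I would fix an arbitrary branch of $\tree$ and list the specifications of its nodes from the root downwards as $w_1, w_2, \dots$. By \cref{lem:no_subword_in_branch_sg}, no earlier $w_i$ is a "subword" of a later $w_j$, so $(w_i)_i$ is a "bad" sequence over the finite alphabet $\messages$. By \cref{remark:tree-root-spec-1} we may assume $\size{w_1} = 1$, and by the second part of \cref{lem:bounds_tree_sg} the specification size grows by a factor of at most $c := \towerfun(\size{\prot}, r)$ from a node to its child, so $\size{w_i} \le c^{\,i-1}$. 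Writing $n := \size{\prot}$ and using $r \le \size{\prot}$ together with the monotonicity of $\towerfun$, I would set $g(x) := \towerfun(x,x)\cdot x$, which is primitive recursive and satisfies $g(x) \ge c\cdot x$ and $g(x) \ge x$ for $x \ge n$; a short induction then gives $g^{(i)}(n) \ge c^{\,i} \ge \size{w_i}$. Applying the "Length function theorem" (\cref{thm:lengthfcttheorem}) with $\Sigma = \messages$, this $g$, and this $n$ produces a function $f_0 \in \Ffunction{\omega^{\size{\messages}-1}}$ such that every branch has at most $H := f_0(\size{\prot})$ nodes, so the height of $\tree$ is at most $H$.

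Next I would bound each node and the branching degree. By the first part of \cref{lem:bounds_tree_sg}, $\size{\localrunlabel{\node}} \le c\,\size{\speclabel{\node}}$, and since a node at depth $d$ has $\size{\speclabel{\node}} \le c^{d} \le c^{H}$, every node satisfies $\size{\node} = \size{\localrunlabel{\node}} + \size{\speclabel{\node}} \le (c+1)\,c^{H}$. For the branching degree, minimality of $\tree$ ensures that a node has exactly one child per value received in its "local run" (a superfluous child could be deleted, contradicting minimality), so its number of children is at most the number of distinct received values, itself at most $\size{\localrunlabel{\node}} \le c\,c^{H} = c^{H+1}$. Since a tree of height $H$ with branching degree at most $B := c^{H+1}$ has at most $B^{H}$ nodes, I obtain
\[
\size{\tree} \;\le\; B^{H}\cdot (c+1)\,c^{H} \;\le\; (c+1)\,c^{\,(H+1)H + H}.
\]

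Finally I would check that this bound, read as a function of $\size{\prot}$, lies in $\Ffunction{\omega^{\size{\messages}-1}}$. The right-hand side is a fixed primitive recursive function (built from multiplication, exponentiation and squaring) applied to the two quantities $c = \towerfun(\size{\prot}, r)$ and $H = f_0(\size{\prot})$, the former being primitive recursive in $\size{\prot}$ and the latter lying in $\Ffunction{\omega^{\size{\messages}-1}}$. Because the function class $\Ffunction{\omega^{\size{\messages}-1}}$ is closed under composition with primitive recursive functions, the whole expression defines a function $f \in \Ffunction{\omega^{\size{\messages}-1}}$ with $\size{\tree} \le f(\size{\prot})$. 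I expect this last step to be the main obstacle: one must carefully invoke the closure properties of the fast-growing hierarchy so that the primitive-recursive per-node and branching overhead is absorbed into the class obtained for the height, rather than pushing the bound to a strictly higher level.
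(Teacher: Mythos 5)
Your proof is correct and follows essentially the same route as the paper's: a "bad sequence" of specifications along a branch (via \cref{lem:no_subword_in_branch_sg}), growth controlled by \cref{lem:bounds_tree_sg} so that the "Length function theorem" applied to $g(x)=x\cdot\towerfun(x,x)$ bounds the height, then per-node size and branching degree bounded via the depth, with the final bound absorbed into $\Ffunction{\omega^{\size{\messages}-1}}$ by closure under composition with primitive recursive functions. The closure step you flag as the main obstacle is handled exactly as in the paper, so there is nothing to fix.
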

\begin{proof}
Let $n := \size{\prot}$, let $r+1$ be the number of registers in $\prot$. Thanks to Lemma~\ref{lem:no_subword_in_branch_sg}, for all $\node \ne \node'$ in 
$\tree$ with $\node$ ancestor of $\node'$, $\speclabel{\node}$ is not a subword of $\speclabel{\node'}$.  Let $\node_1, \dots, \node_m$ the node appearing in a branch of $\tree$, from root to leaf. The sequence $\speclabel{\node_1}, \dots, \speclabel{\node_m}$ is a "bad sequence".
For all $i \in \nset{1}{m}$, $\size{\speclabel{\node_{i+1}}}  \leq \towerfun(n,r) \, \size{\speclabel{\node_i}}$ by Lemma~\ref{lem:bounds_tree_sg}. By direct induction, $\size{\speclabel{\node_i}}$ is bounded by $g^{(i)}(n)$ where $g: n \mapsto n  \, \towerfun(n,n)$ is a primitive recursive function. Let $h$ of class $\Ffunction{\omega^{\size{\messages}-1}}$ the function obtained when applying the "Length function theorem" on $g$ and $\messages$; we have $m \leq h(n)$. 

By immediate induction, thanks to Lemma~\ref{lem:bounds_tree_sg}.\ref{item:bound_node_2_sg}, for every node $\node$ at depth $d$, $\size{\speclabel{\node}} \leq \towerfun(n,r)^{d+1}$ which, by Lemma~\ref{lem:bounds_tree_sg}.\ref{item:bound_node_1_sg} and because $d \leq h(n)$, bounds the size of every node by $h'(n) = \towerfun(n,n)^{h(n) + 2}$. 
By minimality of $\tree$, the number of children of a node is bounded by the number of values appearing in its "local run" hence by $h'(n)$, so the total number of nodes in $\tree$ is bounded by $h'(n)^{h(n)+1}$ and the "size@@treesg" of $\tree$ by $f(n) := h'(n)^{h(n)+2}$. Because $\Ffunction{\omega^{\size{\messages}-1}}$ is closed under composition with primitive-recursive functions, $f$ is in $\Ffunction{\omega^{\size{\messages}-1}}$.
\end{proof}

The previous argument shows that \COVER\ for "signature protocols" is decidable and lies in complexity class $\Fcomplexity{\omega^\omega}$. Because the hardness from Proposition~\ref{prop:reduction-LCS} holds for "signature protocols", \COVER\ is in fact complete for this complexity class.

We now extend this method to the general case. 


\section{Coverability Decidability in the General Case}
\label{sec:cover-general-case}

\subsection{Generalizing Unfolding Trees}
\label{sec:unfolding-trees-general}
In the general case, a new phenomenon appears: an agent may broadcast a value that it did not initially have but that it has received and stored. In particular, an agent starting with value $v$ could broadcast $v$ then require someone else to make a broadcast with value $v$ as well. For example, in the run described in \cref{ex:example-1}, $1$ is initially a value of $a_1$ that $a_2$ receives and rebroadcasts to $a_1$.

 We now have two types of specifications. ""Boss specifications"" describe the task of broadcasting with one of its own initial values; this is the "specification@@sg" we had in "signature protocols" and, as before, it consists of a word $\bossspec \in \messages^\ast$ describing a sequence of "message types" that should be all broadcast with the same value. ""Follower specifications"" describe the task of broadcasting with a non-initial value received previously. More precisely, a "follower specification" is a pair $(\followwordspec, \followmessagespec) \in \messages^*\times \messages$ asking to broadcast a message $(\followmessagespec,v)$ under the condition of previously receiving the sequence of "message types" $\followwordspec$ with value $v$.


A key idea is that, if an agent that had $v$ initially receives some message $(m,v)$, then intuitively we can isolate a subset of agents that did not have $v$ initially but that are able to broadcast $(m,v)$ after receiving a sequence of messages with that value. We can then copy them many times in the spirit of the "copycat principle". Each copy receives the necessary sequence of messages in parallel, and they then provide us with an unbounded supply of messages $(m,v)$. In short, if an agent broadcasts $(m,v)$ while not having $v$ as an "initial value", then we can consider that we have an unlimited supply of messages $(m,v)$.

\begin{example}

\label{ex:decomposition}
Assume that $\agents = \set{a_1,a_2,a_3}$ and let $v$ be initial for $a_1$. Consider an execution where the broadcasts with value $v$ are: $a_1$ broadcasts $\textsf{a} \cdot \textsf{b}$, then $a_2$ broadcasts $\textsf{c}$, then $a_1$ broadcasts $\textsf{a}^3$ then $a_3$ broadcasts $\textsf{b}$. The "follower specification" of $a_2$'s task would be of the form $(w, \textsf{c})$ where $w \subword \textsf{a} \cdot \textsf{b}$: $a_2$ must be able to broadcast $(c,v)$ once $\textsf{a} \cdot \textsf{b}$ has been broadcast with value $v$. By contrast, $a_3$'s "follower specification" would be of the form $(w \cdot w', \textsf{c})$ where $w \subword \textsf{a} \cdot \textsf{b}$ and $w' \in \set{\textsf{a},\textsf{c}}^*$ is a subword of $\textsf{a}^3$ enriched with as many $\textsf{c}$ as desired, because $a_2$ may be cloned at will.
For example, one could have $w= \textsf{b} $ and $w' = \textsf{c} \cdot \textsf{a} \cdot \textsf{c}^4 \cdot \textsf{a} \cdot \textsf{c}^2$. This idea is formalized in Appendix~\ref{app:def-trees} with the notion of \emph{decomposition}. Using this notion, the previous condition becomes: $w \cdot w'$ \emph{admits decomposition} $(\textsf{a} \cdot \textsf{b}, \textsf{c}, \textsf{a}^3)$.   
\end{example}

\AP In our new ""unfolding trees"", a node is either a \emph{boss node} or a \emph{follower node}, depending on its type of specification. 
A ""boss node"" with a "boss specification" $\bossspec$ must broadcast that sequence of "message types" with one of its "initial values". 
A ""follower node"" $\node$ with "follower specification" $(\followwordspec, \followmessagespec)$ is allowed to receive sequence of messages $\followwordspec$ with value $\valuelabel{\node}$ (which must be "non-initial") without it being broadcast by its children. 
Other conditions are similar to the ones for "signature protocols": if $\node$ is a node and $v \ne \valuelabel{\node}$ a "non-initial" value received in its "local run", $\node$ must have a "boss" child broadcasting this word. Moreover, for each $(m,v)$ received where $v$ is an "initial value" of the "local run", $\node$ must have a "follower" child that is able to broadcast $(m,v)$ after receiving messages sent previously with value $v$; the formal statement is more technical because it takes into account the observation of Example~\ref{ex:decomposition}. 
The formal definition of \emph{unfolding tree} is given in Appendix~\ref{app:def-trees}.

\begin{example}
	Figure~\ref{fig:ex-unfolding-tree} depicts the "unfolding tree" associated to $a_1$ in the "run" of \cref{ex:example-1}. 
	"Follower node" $\node_3$ can have a $m_2$ reception that is not matched by its children because $m_2$ is in $\followwordspec(\node_3)$. $\node_1$ broadcasts $(m_2,1)$ before receiving $(m_4,1)$ hence the "follower specification" of $\node_3$ witnesses broadcast of $(m_4,1)$. 
\end{example}
\begin{figure}[t]
	\begin{center}
					\begin{tikzpicture} [xscale = 1, yscale = 0.6, every node/.style = {scale = 0.9}]

	\node [align = center , font = {\scriptsize}] (sp) at (6.8, 0.7-0.5) {Boss node $\node_1$ ($a_1$ in $\run$)};
	\node [align =center, font = {\scriptsize}] (sp) at (6.8, 0.7-1) {$\bossspec = \epsilon$};
	\node [align=center, font = {\scriptsize}] (sp) at (6.8, 0.7-1.5) {$v = 1$};
	\begin{scope}[xscale = 1.2, xshift = -10]
		\regbartworeg{-1}{0}
		\onerowtworeg{-0.5}{0}{$q_0$}{$1$}{orange}{$\_$}{white}
		\transtable{0.5}{0}{\color{OrangeRed}$\intlabel{\atrans_{b2}}$}
		\onerowtworeg{0.5}{0}{$q_1$}{$1$}{orange}{$\_$}{white}
		\transtable{1.5}{0}{\color{ForestGreen!80!white}$\extlabel{\atrans_{r3}}{2}$}
		\onerowtworeg{1.5}{0}{$q_3$}{$1$}{orange}{$2$}{cyan}
		\transtable{2.5}{0}{\color{Mahogany}$\extlabel{\atrans_{r4}}{1}$}
		\onerowtworeg{2.5}{0}{$q_4$}{$1$}{orange}{$2$}{cyan}
		\transtable{3.5}{0}{\color{Mahogany}$\intlabel{\atrans_{b4}}$}
		\onerowtworeg{3.5}{0}{$q_4$}{$1$}{orange}{$2$}{cyan}
	\end{scope}
	\draw[->]  (2, -1.2) -- (0, -2);
	\draw[->]  (2, -1.2) -- (4, -2);
	
	\begin{scope}[xshift = 0.7cm, yshift=0.5cm]
		\onerowtworeg{-2}{-3.5}{$q_0$}{$\_$}{white}{$2$}{cyan}
		\transtable{-1}{-3.5}{\color{OrangeRed}$\extlabel{\atrans_{r2}}{3}$}
		\onerowtworeg{-1}{-3.5}{$q_2$}{$3$}{LimeGreen}{$2$}{cyan}
		\transtable{0}{-3.5}{\color{ForestGreen!80!white} $\intlabel{\atrans_{b3}}$}
		\onerowtworeg{-0}{-3.5}{$q_3$}{$3$}{LimeGreen}{$2$}{cyan}
		\node [align=center, font = {\scriptsize}] (sp) at (-3.1, -2.5-0.8) {Boss node $\node_2$ ($a_2$)};
		\node [align = center, font = {\scriptsize}] (sp) at (-3.1,-3-0.8) {$\bossspec = {\color{ForestGreen!80!white} m_3} $};
		\node [align = center, font = {\scriptsize}] (sp) at (-3.1, -3.5-0.8) {$v = 2$};
	\end{scope}
	\begin{scope}[xshift = -0.2, yshift=0.5cm]
		\onerowtworeg{2}{-3.5}{$q_0$}{$\_$}{white}{$\_$}{white}
		\transtable{3}{-3.5}{\color{OrangeRed}$\extlabel{\atrans_{r2}}{1}$}
		\onerowtworeg{3}{-3.5}{$q_2$}{$1$}{orange}{$\_$}{white}
		\transtable{4}{-3.5}{\color{ForestGreen!80!white} $\intlabel{\atrans_{b3}}$}
		\onerowtworeg{4}{-3.5}{$q_3$}{$1$}{orange}{$\_$}{white}
		\transtable{5}{-3.5}{\color{Mahogany}$\intlabel{\atrans_{b4}}$}
		\onerowtworeg{5}{-3.5}{$q_3$}{$1$}{orange}{$\_$}{white}
		\node [align=center, font = {\scriptsize}] (sp) at (7.3, -2.5-0.8) {Follower node $\node_3$ ($a_2$)};
		\node [align = center, font = {\scriptsize}] (sp) at (7.3,-3-0.8) {$\followwordspec={\color{OrangeRed}m_2}, \followmessagespec = { \color{Mahogany}m_4} $};
		\node [align = center, font = {\scriptsize}] (sp) at (7.3, -3.5-0.8) {$v = 1$};
	\end{scope}
	\draw[->] (0, -2.5-1.7) -- (0, -2.5-1.7-0.7);
	\begin{scope}[xscale = 1.2, xshift = 0cm, yshift = 1.1cm]
		\onerowtworeg{-1}{-7}{$q_0$}{$3$}{LimeGreen}{$\_$}{white}
		\transtable{0}{-7}{\color{OrangeRed}$\intlabel{\atrans_{b2}}{}$}
		\onerowtworeg{-0}{-7}{$q_1$}{$3$}{LimeGreen}{$\_$}{white}
		\node [align=center, font = {\scriptsize}] (sp) at (-2, -3-0.8-3) {Boss node $\node_4$ ($a_1$)};
		\node [align = center, font = {\scriptsize}] (sp) at (-2,-3.5-0.8-3) {$\bossspec = {\color{OrangeRed} m_2}$};
		\node [align = center, font = {\scriptsize}] (sp) at (-2, -4-0.8-3) {$v = 3$};
	\end{scope}

\end{tikzpicture}
		\end{center}
	\vspace*{-0.5cm}
	\caption{Example of an "unfolding tree". $\delta_{ri}$ (resp. $\delta_{bi}$) denotes the reception (resp. broadcast) transition of message $m_i$ in the protocol described in \cref{fig:ex1}. Values that are never broadcast are omitted and written as $\quotemarks{\_}$.}\label{fig:ex-unfolding-tree}
\end{figure}
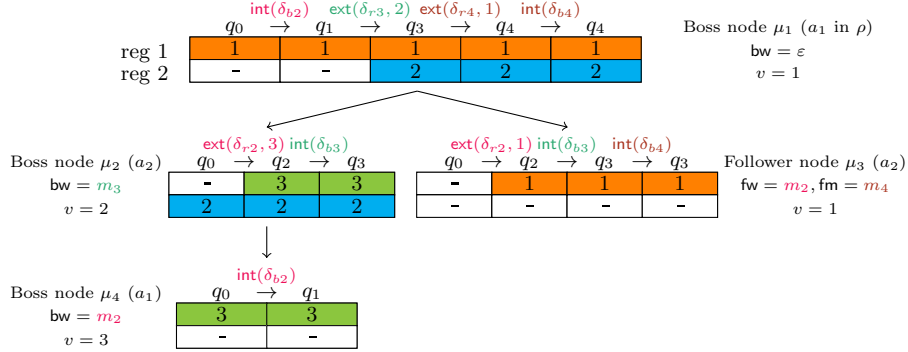

A ""coverability witness"" is again an "unfolding tree" whose root covers $q_f$ (or broadcasts a message $m_f$, see \cref{remark:tree-root-spec-1}), with the extra condition that the root is a "boss node" (a "follower node" implicitly relies on its parent's ability to broadcast). 

\begin{restatable}{proposition}{treessoundcomplete}
	\label{prop:trees-sound-complete}
	An instance of \COVER\ $(\prot,q_f)$ is positive if and only if there exists a "coverability witness" for that instance.
\end{restatable}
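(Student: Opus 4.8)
The plan is to mirror the two inductions used to prove Lemma~\ref{lem:coverability_witness_sg} for "signature protocol"s, adapting each direction to the presence of two kinds of nodes. The genuinely new ingredient, already visible in Example~\ref{ex:decomposition}, is that "follower node"s must be handled through the "copycat principle": a follower producing a broadcast $(m,v)$ with $v$ a "non-initial value" can be cloned arbitrarily many times, so its contribution is best thought of as an unlimited supply of $(m,v)$-messages rather than a single one. The "admits decomposition" relation is exactly the bookkeeping device that records how an owner's broadcasts and its followers' contributions interleave, so I would lean on it throughout.

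For the soundness direction ($\Leftarrow$), I would prove by strong induction on the height of a node $\node$ in a "coverability witness" $\tree$ the statement: there is an "initial run" $\run$ with an agent $a$ whose "local run" is $\localrunlabel{\node}$ and which, if $\node$ is a "boss node", broadcasts $\bosslabel{\node}$ with value $\valuelabel{\node}$, and if $\node$ is a "follower node", broadcasts $\followlabelmessage{\node}$ with value $\valuelabel{\node}$ after receiving the sequence $\followlabelword{\node}$ on that value. The base case (leaves) is immediate, their "local run"s having no matched reception. For the inductive step I compose, on pairwise disjoint agent sets, the runs obtained for the children: each "non-initial" received value $v \neq \valuelabel{\node}$ is supplied by a "boss node" child exactly as in the signature case, while each reception of the form $(m,\valuelabel{\node})$ coming from $a$'s own "initial value" is supplied by a "follower node" child. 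For the latter I invoke the "copycat principle" to duplicate the follower subrun as many times as there are such receptions, and I schedule the copies against $a$'s own broadcasts following the decomposition certified at $\node$, so that every reception of $a$ is matched.

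For the completeness direction ($\Rightarrow$), I would induct on the length of an "initial run" $\run$: for every agent $a$ and every value $v$, I build an "unfolding tree" whose root realizes the projection of $\run$ onto $a$ and whose specification records the broadcasts of $a$ with value $v$. Appending the last step of $a$ is routine when it is a broadcast, or a reception of a foreign "non-initial" value (attach a "boss node" child built from the broadcaster, as before). The new case is a reception $(m,v')$ where $v'$ is an "initial value" of $a$ that $a$ had itself previously broadcast and that another agent rebroadcast; here I attach a "follower node" child whose "follower specification" $(\followlabelword{\node}, \followlabelmessage{\node})$ records the messages that agent received on $v'$ before rebroadcasting, and I check that $a$'s earlier broadcasts on $v'$ make the decomposition condition hold.

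I expect the main obstacle to be the follower-node machinery in the soundness direction: one must verify that the interleaving prescribed by "admits decomposition", together with the cloning of followers, can be realized as a single legal "run" in which all equality and disequality tests still pass and no two cloned agents collide on register values. The signature case sidestepped this entirely, since every received value came from a dedicated boss; here the same value $v$ is simultaneously owned by a boss ancestor and produced by several follower copies, so the scheduling and value-freshness arguments are the crux and are precisely what forces the more technical formal definition deferred to Appendix~\ref{app:def-trees}.
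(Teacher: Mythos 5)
Your overall two-induction plan matches the paper's, but the soundness direction as you state it has a genuine gap: the inductive invariant is false for follower nodes. You claim that for a follower node $\node$ there is a (complete) initial run in which an agent broadcasts $\followlabelmessage{\node}$ with value $\valuelabel{\node}$ \emph{after receiving} $\followlabelword{\node}$ on that value. But the subtree rooted at $\node$ cannot supply those receptions: by the definition of unfolding trees, receptions with value $\valuelabel{\node}$ at a follower node are precisely the ones exempted from having a witnessing child, because they are meant to be provided by the parent (and by sibling followers). So no composition of the runs obtained from $\node$'s descendants can match them, and the claimed complete run does not exist in general. The paper repairs this by strengthening the invariant to \emph{partial runs} -- initial runs in which the only unmatched receptions are on $\valuelabel{\node}$, with $v$-input a subword of $\followlabelword{\node}$ -- and then combining the parent's run with the followers' partial runs via an explicit composition operation. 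This composition is not a parallel product on disjoint agent sets followed by copycat cloning: matching an unmatched reception of $(m_j,v)$ by splicing in a copy of the $j$-th follower's partial run can introduce fresh unmatched receptions of $(m_i,v)$ for $i<j$, so one needs an iteration whose termination is argued by a lexicographic ordering on the multiset of unmatched $m_\ell,\dots,m_1$. Your closing paragraph correctly identifies this as the crux, but the induction hypothesis you propose makes the argument unworkable as written.

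A smaller issue on the completeness side: plain induction on the length of the run does not quite go through. When the last step of the run is the broadcast $(m_\ell,v)$ by the very agent $a_\ell$ that witnesses the last follower child of the decomposition for an initial value of $a$, the sub-run needed for that child is the \emph{whole} run, only with a follower specification instead of a boss one. The paper therefore inducts on the pair (run length, specification type) ordered lexicographically with follower below boss; without that refinement your induction hypothesis is not applicable in this corner case.
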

\begin{proof}[Proof sketch]
	The proof is quite similar to the one of Lemma~\ref{lem:coverability_witness_sg}, but is made more technical by the addition of "follower" nodes. 
	When translating an "unfolding tree" to a "run", if the root of the tree is a "follower node" $\node$ of specification $(\followwordspec, \followmessagespec)$, then we actually obtain a \emph{partial run}, \emph{i.e.}, a "run" except that the receptions from $\followwordspec$ are not matched by broadcasts in the "run". We then combine this partial run with the run corresponding to the parent of $\node$ and with the runs of other children of $\node$ so that every reception is matched with a broadcast.
	For the translation from "run" to tree, we inductively construct the tree by extracting from the run the agents and values responsible for satisfying the specifications of each node and analyzing the messages they receive to determine their set of children (as in Example~\ref{ex:decomposition}). 
	See Appendix~\ref{app:trees-sound-complete} for the proof.
\end{proof}

\subsubsection{Bounding the Size of the Unfolding Tree.}
\label{sec:tree-bounds}

Our aim is again to bound the size of a minimal "coverability witness". In the following, we fix an instance $(\prot,q_f)$ with $r$ registers and a "coverability witness" of minimal size. We start by providing new conditions under which a branch can be shortened; for "boss specifications", it is the condition of Lemma~\ref{lem:no_subword_in_branch_sg} but for "follower specifications", the subword relation goes the opposite direction because the shorter the requirement $\followwordspec$, the better.
Details can be found in Appendix~\ref{app:proofs-reduction-branches}.

\begin{restatable}{lemma}{lemShorteningBranches} 
\label{lem:shortening-branches}
	Let $\node \ne \node'$ be two nodes of $\tree$ such that $\node$ is an ancestor of $\node'$. If one of those conditions holds, then $\tree$ can be shortened  (contradicting its minimality):
	\begin{itemize}
	\item $\node$ and $\node'$ are "boss nodes" with "boss specifications" respectively $\bossspec$ and $\bossspec'$, and $\bossspec \subword \bossspec'$; 
	\item $\node$ and $\node'$ are "follower nodes" with "follower specifications" respectively $(\followwordspec, \followmessagespec)$ and $(\followwordspec', \followmessagespec')$, and $\followwordspec' \subword \followwordspec$ and $\followmessagespec'=\followmessagespec$.
	\end{itemize} 
\end{restatable}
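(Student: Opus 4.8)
The plan is to lift the one-line argument behind Lemma~\ref{lem:no_subword_in_branch_sg} to the two-sorted setting. In either case I would build a strictly smaller tree $\tree'$ by the same surgery: delete the subtree rooted at $\node$ and graft in its place the subtree rooted at its strict descendant $\node'$ (in the "follower" case, renaming the values of that subtree so that $\valuelabel{\node'}$ matches the value on which the former parent of $\node$ expects the broadcast, exactly as values are renamed in the proof of Lemma~\ref{lem:coverability_witness_sg}). Because $\node'$ is a proper descendant of $\node$, the grafted subtree is a proper subtree of the deleted one, so $\size{\tree'} < \size{\tree}$ in both cases; the entire work is therefore to check that $\tree'$ is again a "coverability witness", which contradicts minimality of $\tree$.

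First I would note that every node strictly inside the subtree rooted at $\node'$ keeps all three of its labels, so each clause of the "unfolding tree" definition that is local to such a node is inherited verbatim from $\tree$. The only edge where a clause could fail is the one joining the former parent $p$ of $\node$ to its new child $\node'$, so it suffices to show that $\node'$ discharges the obligation that $p$ had placed on $\node$.

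For the first item, $p$ matched some non-initial value $v \neq \valuelabel{p}$ received in $\localrunlabel{p}$ by the "boss node" $\node$, which means $\vinput{v}{\localrunlabel{p}} \subword \bossspec$. The hypothesis $\bossspec \subword \bossspec'$ and transitivity of $\subword$ then give $\vinput{v}{\localrunlabel{p}} \subword \bossspec'$, so $\node'$ serves the same role (the value carried by a "boss node" is its own fresh value, so no value-matching constraint is created). For the second item, $p$ received some $(\followmessagespec, v)$ with $v$ an "initial value" of $\localrunlabel{p}$ and relied on the "follower node" $\node$ to broadcast it; the precise obligation is that the word $p$ broadcasts with value $v$ \emph{admits a decomposition} realizing the requirement $\followwordspec$ (in the sense of Example~\ref{ex:decomposition} and Appendix~\ref{app:def-trees}). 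The key point is that this obligation is antitone in the requirement word: since $\followwordspec' \subword \followwordspec$ and $\followmessagespec' = \followmessagespec$, any decomposition witnessing the requirement $(\followwordspec, \followmessagespec)$ of $\node$ still witnesses the weaker requirement $(\followwordspec', \followmessagespec')$ of $\node'$, so $\node'$ again fits where $\node$ was.

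I expect the main obstacle to be this antitonicity in the "follower" case, since $p$'s obligation is stated through the decomposition machinery rather than a bare $\subword$ inequality; I would isolate as an auxiliary fact that \emph{admits decomposition} is preserved when the required word is replaced by one of its subwords with the same final "message type", after which the surgery goes through. The one remaining boundary case is the first item with $\node$ the root (it cannot occur in the second item, since a "coverability witness" has a "boss node" root): there $\bossspec = m_f$ by Remark~\ref{remark:tree-root-spec-1}, and $m_f \subword \bossspec' \subword \voutput{\valuelabel{\node'}}{\localrunlabel{\node'}}$ forces $\localrunlabel{\node'}$ to broadcast $m_f$, hence to cover $q_f$; trimming $\node'$'s "boss specification" back to $m_f$ (which only shrinks the tree further) yields a genuine smaller "coverability witness", completing the contradiction.
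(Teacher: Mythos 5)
Your proof is correct and follows essentially the same route as the paper's: replace the subtree rooted at $\node$ by the one rooted at $\node'$, observe that only the edge to the former parent can break, and discharge the boss case by transitivity of $\subword$ and the follower case by the fact that $\langdec{\decsymb_i}$ is closed under subwords (your ``antitonicity'' observation is exactly the paper's key step). Your explicit treatment of the root boundary case is a small addition the paper leaves implicit, but it does not change the argument.
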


We can generalize \cref{lem:towerbound_signature} to bound the size of a node by the number of messages that it must broadcast times a primitive-recursive function $\towerfun(\size{\prot},r)$. The proof is more technical than the one of \cref{lem:towerbound_signature} but the idea is essentially the same. The formal statement is given below.
, and the proof can be found in Appendix~\ref{app:tower-lemma}. 
One can therefore bound the size of a node with respect to the size of the nodes that it must broadcast to.

	\begin{restatable}{lemma}{lemShortLocalRuns}
	\label{lem:short-local-runs}
	There exists a primitive recursive function $\towerfun$ such that, for every protocol $\prot$ with $r$ registers, 
	for all "local runs" $\localrun_0: (q_0, \localdata_0) \step{*} (q, \localdata)$, $\localrun: (q, \localdata) \step{*} (q', \localdata')$, $\localrun_f: (q', \localdata') \step{*} (q_f, \localdata_f)$, there exists a "local run" $\localrun': (q, \localdata) \step{*} (q', \localdata')$ with $\size{\localrun'} \leq \towerfun(\size{\prot} ,r)$ and for all $\aval' \in \nats$: 
	\begin{enumerate}
		\item if $\aval'$ appears in $\localrun_0$, $\localrun$, or $\localrun_{f}$, $\vinput{\aval'}{\localrun'} \subword \vinput{\aval'}{\localrun}$,
		\item  otherwise, there exists $\aval \in \nats$, not "initial" in $\localrun_0$, such that $\vinput{\aval'}{\localrun'} \subword \vinput{\aval}{\localrun}$.
	\end{enumerate}
\end{restatable}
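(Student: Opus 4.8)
The plan is to follow the proof of \cref{lem:towerbound_signature} almost verbatim, reusing the same primitive recursive $\towerfun$ and inducting on the number of "active@@sg" registers of $\localrun$ (those written by an $\enregact$ action somewhere in $\localrun$). Two features are genuinely new: the flanking runs $\localrun_0$ and $\localrun_f$, whose only role is to pin down the set of values whose inputs must be preserved \emph{exactly}, and the refined two-case conclusion, which must be carried through the pumping. Crucially, the shortening argument itself is barely affected by the fact that in the general case registers may be both written and broadcast: broadcasts always fire regardless of the stored value, so, exactly as before, the only constraints to maintain are the equality and disequality tests on receptions together with the fixed endpoint contents $\localdata,\localdata'$.

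For the induction, the base case has no "active@@sg" register, so the register contents stay equal to $\localdata=\localdata'$ throughout and a "local configuration" is determined by its state; deleting state repetitions gives a run of length at most $\size{Q}$ whose receptions are a sub-sequence of those of $\localrun$, so item~1 holds with the same value everywhere and item~2 is vacuous. In the inductive step, let $I$ be the set of "active@@sg" registers with $\size{I}=k+1$. If some infix of length $M:=\towerfun(\size{\prot},k)$ writes only $k$ registers, I absorb the two surrounding pieces of $\localrun$ into $\localrun_0$ and $\localrun_f$ and apply the induction hypothesis to that infix; the subword conclusions then compose, and "non-initial" is preserved because enlarging $\localrun_0$ does not change $\localdata_0$. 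Otherwise every length-$M$ window writes every register of $I$, and the counting argument of \cref{lem:towerbound_signature} yields a block $s\in\transitions^{2M}$ occurring at two disjoint infixes $u_1$ then $u_2$. I keep the prefix up to $u_1$, a single copy of $s$, and the suffix after $u_2$; this is a valid run from $(q,\localdata)$ to $(q',\localdata')$ because $u_1,u_2$ realise the same $s$ (so states match), each register of $I$ is written in each half of $s$ (so its post-$s$ value can be set to the one expected by the suffix), and registers outside $I$ stay constant.

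The heart of the argument, and the step I expect to be the main obstacle, is the reassignment of the values received inside the surviving block $s$. As in the signature case I would, for each register $i\in I$, feed receptions with $u_1$-side values before its first $\enregact$ and $u_2$-side values after its last $\enregact$, interposing fresh values so that two receptions of one value on two different registers cannot be swapped (the interleaving depicted in \cref{fig:proof-pumping-signed}). The new obstruction is that \cref{lem:towerbound_signature} was free to rename \emph{every} received value, reception-only registers never being rebroadcast there, whereas here item~1 forbids touching any value occurring in $\localrun_0$, $\localrun$ or $\localrun_f$: such a value must satisfy $\vinput{\aval'}{\localrun'}\subword\vinput{\aval'}{\localrun}$ under the \emph{same} value. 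I would therefore keep the receptions of each such value from a single fixed side and overwrite the receptions coming from the other side with fresh values; dropping receptions can only shrink an input, so the surviving ones form a "subword" of the original $\aval'$-input and the interleaving is ruled out.

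It then remains to check the two clauses for the fresh values and to dispose of one genuinely new subtlety. Each fresh value carries in $\localrun'$ a sub-sequence of the messages that a single value $\aval$ of $\localrun$ receives, where $\aval$ is a value stored into a register inside $s$; hence $\vinput{\aval'}{\localrun'}\subword\vinput{\aval}{\localrun}$, which is item~2 whenever $\aval$ is "non-initial". The remaining difficulty is an \emph{initial} value appearing on both sides of $s$ and needing de-interleaving: refreshing its receptions would cover a fresh value by an initial input, while keeping both could break the subword relation of item~1. Since there are at most $r$ such values, I would handle them separately, choosing for each register the splitting $\enregact$ so that every initial value is read from one side only and never needs refreshing; this is where the general case demands real care beyond \cref{lem:towerbound_signature}. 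Finally the construction strictly shortens $\localrun$, so iterating it terminates with a run of length at most $\towerfun(\size{\prot},r)$.
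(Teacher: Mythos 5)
Your skeleton matches the paper's: induction on the number of registers written by an $\enregact$, the pigeonhole on a repeated block $s\in\transitions^{2M}$, and the per-register zoning (values of $u_1$ before the first store, values of $u_2$ after the last store, consistently renamed fresh values in between). You also correctly isolate the one genuinely new difficulty: a value of $W$ (initial in $\localrun_0$) cannot be refreshed without violating item~2, yet keeping its receptions from both copies of $s$ can interleave them and break item~1. The problem is that your resolution of exactly this difficulty does not work. You propose to ``choose for each register the splitting $\enregact$ so that every initial value is read from one side only.'' But the zones are not freely choosable: a reception on register $i$ at a position before the \emph{first} store on $i$ tests against the register content inherited from the prefix, so it is forced to carry the $u_1$-side value, and a reception after the \emph{last} store on $i$ must be consistent with the final contents $\localdata'$, so it is forced to the $u_2$ side. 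An initial value received both in such a forced prefix zone and in such a forced suffix zone (possibly on different registers, at interleaved positions) can therefore never be confined to one side, whatever split points you pick; and with several initial values the per-register side constraints can be mutually contradictory. The count ``at most $r$ such values'' does not rescue this.

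The paper's fix is different and is the essential new ingredient: it strengthens the pigeonhole so that the two occurrences of the block agree not only on the sequence of transitions but also on the exact annotated pattern $\Theta$ of receptions of $W$-values (which positions receive a $W$-value, and which one). Then at every position of the merged block the two copies receive the \emph{same} $W$-value, the injective renaming fixes $W$ pointwise, and the input of each $w\in W$ in the merged block equals its input in either copy --- no interleaving can arise. This costs an extra factor $(\size{W}+1)^{2M}\leq (r+1)^{2M}$ in the counting, which is why $\towerfun$ is \emph{not} the same function as in Lemma~\ref{lem:towerbound_signature}, contrary to what you assume. A secondary issue: for values of $V\setminus W$ your ad hoc ``keep one side, overwrite the other side with fresh values'' is both unnecessary (the standard fresh middle zone already yields $\vinput{\aval'}{\localrun'}\subword\vinput{\aval'}{\localrun}$, since the surviving $u_1$-side receptions all sit in the first half and the surviving $u_2$-side ones in the second half) and unsound as stated, because replacing an individual received value by a fresh one can falsify an equality test unless the corresponding store is renamed consistently over the whole zone, which is what the paper's injective map $\phi$ does.
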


It is however now much harder than in the "signature" case to bound the size of the "coverability witness". Indeed, the broadcasts no longer go only from children to parents in the "unfolding tree". If $\node_p$ is the parent of $\node_c$, then $\node_c$ broadcasts to $\node_p$ if $\node_c$ is a "boss node", but $\node_p$ broadcasts to $\node_c$ if $\node_c$ is a "follower node", in which case $\node_c$ only broadcasts one message to $\node_p$. Therefore, we cannot in general bound $\size{\node_p}$ with respect to $\size{\node_c}$ nor $\size{\node_c}$ with respect to $\size{\node_p}$, making us unable to apply the "Length function theorem" immediately. 

This leads us to arrange the "unfolding tree" so that long broadcast sequences are sent upwards, using the notion of "altitude" depicted in Figure~\ref{fig:rearrange-tree}, formally defined as follows.
The ""altitude"" of the root is $0$, the altitude of a "boss node" is the altitude of its parent minus one, and the altitude of a "follower node" is the altitude of its parent plus one.
We denote the "altitude" of $\node$ by $\altitude{\node}$.
This way the nodes of maximal "altitude" are the ones that do not need to send long sequences of messages. We will bound the size of nodes with respect to their "altitude", from the highest to the lowest, and then use the "Length function theorem" to bound the maximal and minimal "altitudes". We present here a sketch of the proof.
; details are postponed to Appendix~\ref{app:proofs_bounds}.

\begin{figure}[t]
	\begin{tikzpicture}[ >=stealth', shorten >=1pt,node distance=2cm,on grid, initial text = {}] 
	
	\tikzset{
		tnode/.style = {rectangle, draw, thick, minimum width=1cm,
			minimum height = 0.6cm}
	}


	\draw[black, ->, thick](-5,-2.2)--(-5,1.4);
	\node[black, thick] at (-5,1.6) {altitude};
	\foreach \y in {-2,...,1} \draw[black, thick] (-5.1, \y )node[left, thick]{\y} -- (-4.9, \y);
	
	 \node[tnode, fill=pink!] (root) { Boss};
	 \node[tnode] (n1) [above = 1cm of root, xshift = 2cm]{ Follower};
	 \node[tnode] (n2) [below = 1cm of root, xshift = 2cm]{ Boss};
	 \node[tnode] (n3) [right = 4cm of root]{ Boss};
	 \node[tnode] (n4) [above = 1cm of n3, xshift = 1.8cm]{ Follower};
	 \node[tnode] (n5) [below = 1cm of n3, xshift = 1.8cm]{ Boss};

	\node[tnode] (m1) [below = 1cm of root, xshift = -2cm]{ Boss};
	\node[tnode] (m2) [above = 1cm of m1, xshift = -2cm]{Follower};
	\node [tnode] (m3) [above = 1cm of m2, xshift = 2cm] {Follower};
	\node[tnode] (n9) [below =  1cm of m1, xshift = 2cm]{ Boss};

	\path[->]
	(root) edge [ ]  node {} (n1)
				edge node {} (n2)
				edge [ ] node {} (m1)
	(n1) edge node {} (n3)
	(n3) edge node {}  (n4)
			edge node {}   (n5)
			
	(m1) edge node {} (m2)
	(m2) edge node {} (m3)
	(m1) edge [ ] node {} (n9)
	;

		\path[->, dashed, teal, thick] 
		(root.east) edge node {} ([xshift = -0.2cm]n1.south)
		(n2.west) edge node {} (root.south)
		([xshift = 0.2cm]m1.north) edge node {} (root.west)
		(n3.west) edge node {} ([xshift = 0.2cm]n1.south)
		([yshift = 0.1cm]n3.east) edge node {} ([xshift = -0.2cm]n4.south)
		 
		([xshift = -0.2cm]n5.north) edge node {}  ([yshift = -0.1cm]n3.east) 
		
		([xshift = -0.2cm]m1.north) edge node {} ([yshift = -0.1cm]m2.east)
		([xshift = 0.2cm]m2.north) edge node {} (m3.west)
		 ([xshift = -0.2cm]n9.north) edge node {} (m1.east) 
		;
		
%

	

\end{tikzpicture}
	\caption{Rearrangement of a tree. The root is in red, black solid arrows connect parents to children, blue dashed arrows highlight that long words of messages are sent upwards.}
	\label{fig:rearrange-tree}
\end{figure}
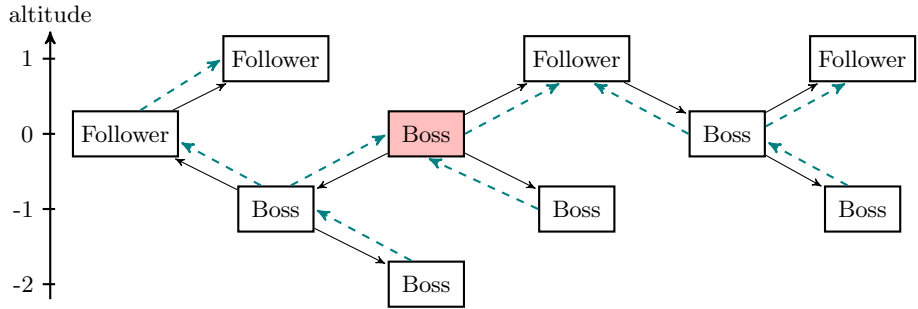

Let $\altmax \geq 0$ (resp. $\altmin \leq 0$) denote the maximum (resp. minimum) altitude in $\tree$.
We first bound the size of a node with respect to the difference between its altitude and $\altmax$.

\begin{restatable}{lemma}{lemBoundLengthHeightH}
	\label{lem:bound-length-at-height-h}
	There is a primitive recursive function $f_0$ such that, for every node $\node$ of $\tree$, $\size{\node} \leq f_0(\size{\prot} + \altmax - \altitude{\node})$.
\end{restatable}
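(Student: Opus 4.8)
The plan is to bound $\size{\node}$ by induction on $d := \altmax - \altitude{\node}$, treating the nodes from the top altitude ($d=0$) downwards. A first simplification reduces everything to the local run: since a boss node satisfies $\bossspec \subword \voutput{\valuelabel{\node}}{\localrunlabel{\node}}$ and a follower node satisfies $\followwordspec \subword \vinput{\valuelabel{\node}}{\localrunlabel{\node}}$, in both cases $\size{\speclabel{\node}} \le \size{\localrunlabel{\node}} + 1$, so $\size{\node} \le 2\,\size{\localrunlabel{\node}} + 1$ and it suffices to bound $\size{\localrunlabel{\node}}$.

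To bound the local run I would count the broadcast steps of $\localrunlabel{\node}$ that cannot be discarded, and apply the shortening lemma between them. Call a broadcast \emph{required} if it is read by $\node$'s parent (witnessing $\node$'s own specification) or by one of $\node$'s follower children; by minimality of $\tree$ every broadcast of $\localrunlabel{\node}$ is required. The $N$ required broadcasts cut $\localrunlabel{\node}$ into $N+1$ reception-only segments, and by minimality each has length at most $\towerfun(\size{\prot},r)$: otherwise Lemma~\ref{lem:short-local-runs} would replace it by a local run with the same endpoints whose $v$-inputs only shrink (for values already present) or are dominated by an existing value's input (for fresh values), so that the children conditions are preserved and $\tree$ becomes a valid witness of no larger size while shortening a branch, contradicting minimality. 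Hence $\size{\localrunlabel{\node}} \le \towerfun(\size{\prot},r)\,(N+1)$, and everything reduces to bounding $N$.

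This is exactly where the altitude design pays off: every node to which $\node$ must send a long word lies at altitude $\altitude{\node}+1$, i.e. at induction depth $d-1$. Indeed, the specification $\bossspec$ of a boss node $\node$ equals what its parent reads, and that parent, call it $\node_p$, has altitude $\altitude{\node}+1$, so $\size{\bossspec} \le \size{\localrunlabel{\node_p}} \le f_0(\size{\prot}+d-1)$ by induction; a follower node instead emits a single message $\followmessagespec$ to its (lower) parent, and the root emits a single message by Remark~\ref{remark:tree-root-spec-1}. As for the follower children, each $\node_c$ also has altitude $\altitude{\node}+1$, so $\size{\followwordspec_c} \le \size{\node_c} \le f_0(\size{\prot}+d-1)$ by induction, and the number of broadcasts of $\node$ read by follower children is at most $\sum_c \size{\followwordspec_c}$.

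The main obstacle is then to bound the \emph{number} of follower children, since $\node$ may receive back many messages carrying one of its own initial values (Example~\ref{ex:decomposition}) and bounding this count by $\size{\localrunlabel{\node}}$ would be circular. I would resolve it with a copycat argument: if $\node$ had two follower children broadcasting the same message $m$ with the same value $v$, the copycat principle lets one of them be cloned to play both roles, so deleting the other subtree yields a strictly smaller valid witness, contradicting minimality. Thus $\node$ has at most one follower child per pair $(m,v)$ with $m\in\messages$ and $v$ an initial value of $\localrunlabel{\node}$, hence at most $r\,\size{\messages}$ follower children. Putting the pieces together gives $N \le (r\,\size{\messages}+1)\,f_0(\size{\prot}+d-1)+1$ and therefore $\size{\node} \le 2\,\towerfun(\size{\prot},r)\bigl((r\,\size{\messages}+1)\,f_0(\size{\prot}+d-1)+1\bigr)+1$; since $r \le \size{\prot}$ and $\towerfun$ is primitive recursive, this is a valid primitive recursive recurrence defining $f_0$ on the single argument $\size{\prot}+d$, with base case $d=0$ (no follower children, so $N \le 1$) giving $\size{\node} \le f_0(\size{\prot})$.
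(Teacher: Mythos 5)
Your proof is correct and follows essentially the same route as the paper: induction on $\altmax - \altitude{\node}$, bounding the number of broadcasts a node must perform by the sizes of its higher-altitude neighbours (its parent if it is a "boss node", its "follower" children in any case), and invoking Lemma~\ref{lem:short-local-runs} between these required broadcasts, exactly as in the paper's Lemma~\ref{lem:bound-successor-height}. The only cosmetic differences are that the paper bounds the number of "follower" children per "initial value" by $\size{\messages}$ directly from the definition of a "decomposition" (the $m_i$ are pairwise distinct) rather than by your copycat/merging argument, and that your inequality $\followlabelword{\node} \subword \vinput{\valuelabel{\node}}{\localrunlabel{\node}}$ is the reverse of condition \ref{item:condition3_follower_node} as literally stated, though it does hold in a minimal tree since $\langdec{\decsymb}$ is closed under subwords.
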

\begin{proof}[Proof sketch]
	We proceed by induction on the altitude, from highest to lowest. 
	A node of maximal "altitude"  has at most one message to broadcast (a "follower node" must broadcast one message to its parent), so its size is bounded by $\towerfun(\size{\prot},r)$ by \cref{lem:short-local-runs} (applying the Lemma to its local run minus its final step, \emph{i.e.}, the step making the broadcast to its parent).
	Let $\node$ be a node of $\tree$ whose neighbors of higher altitude have size bounded by $K$.
	We claim that $\size{\node} \leq (\towerfun(\size{\prot}, r)+2) \, (\size{\messages} \, r \, K + K)$, with $\psi$ the primitive-recursive function defined in Lemma~\ref{lem:short-local-runs}. 
	The idea is similar to the one for Lemma~\ref{lem:bounds_tree_sg}. The neighbors of higher "altitude" are the nodes which require sequences of messages from $\node$. Their size bounds the number of messages that $\node$ needs to send; we then apply Lemma~\ref{lem:short-local-runs} to bound the size of the local run of $\node$.
	Lemma~\ref{lem:bound-successor-height} in the appendix details these ideas.
We finally obtain $f_0$ by iteratively applying the inequality above.
\end{proof}

We now bound $\altmax$ and $\altmin$:
\begin{lemma}
$\altmax$ and $|\altmin|$ are bounded by a function of class $\Ffunction{\omega^{\size{\messages}}}$. 
\end{lemma}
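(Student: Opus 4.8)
The plan is to bound $\altmax$ and $|\altmin|$ separately. In each case I would isolate a single branch, read the specifications of one type of node along it as a \emph{bad} sequence, and invoke the \cref{thm:lengthfcttheorem}; the only real work is controlling how fast the sizes grow along the branch, which is where \cref{lem:bound-length-at-height-h} enters. Throughout, I would use the shortening conditions of \cref{lem:shortening-branches} together with minimality of $\tree$ to get badness.

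\emph{Bounding $\altmax$.} Fix a node $\node^\ast$ with $\altitude{\node^\ast}=\altmax$ and look at the branch from the root to $\node^\ast$. Since a boss child lowers the altitude by one and a follower child raises it by one, this branch carries at least $\altmax$ follower nodes. Reading its follower nodes from $\node^\ast$ up to the root, their follower specifications form a sequence of pairs in $\messages^\ast\times\messages$; by \cref{lem:shortening-branches} and minimality, a deeper node's specification $(\followwordspec',\followmessagespec')$ cannot satisfy $\followwordspec'\subword\followwordspec$ with $\followmessagespec'=\followmessagespec$ for a shallower $(\followwordspec,\followmessagespec)$, so this sequence is bad for the product order comparing two pairs by equality of the message and the subword relation on the words. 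The crucial point is the growth control: a direct count along the branch shows that the $i$-th follower node from $\node^\ast$ has altitude at least $\altmax-(i-1)$, so by \cref{lem:bound-length-at-height-h} the size of its specification is at most $f_0(\size{\prot}+i-1)$. To stay within the subword version of the length theorem, I would embed the product order into the subword order over the alphabet $\messages\uplus\{\#\}$ via $(w,m)\mapsto m\,\#\,w$, which is an order embedding and raises sizes by only $2$. Applying \cref{thm:lengthfcttheorem} with this alphabet of size $\size{\messages}+1$ and a primitive recursive $g$ dominating $x\mapsto f_0(x+1)+2$ then bounds the number of follower nodes on the branch, hence $\altmax$, by a function of class $\Ffunction{\omega^{\size{\messages}}}$.

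\emph{Bounding $|\altmin|$.} Symmetrically, I would take $\node^\ast$ with $\altitude{\node^\ast}=\altmin$, whose branch from the root carries at least $|\altmin|$ boss nodes. Read from the root downwards, their boss specifications form a sequence of words in $\messages^\ast$ that is bad for the subword order by the first item of \cref{lem:shortening-branches} (no earlier specification is a subword of a later one). Here no message pairing occurs, so the relevant alphabet is just $\messages$. The same counting argument gives that the $i$-th boss node from the root has altitude at least $-i$, hence $\altmax-\altitude{\cdot}\le\altmax+i$, so \cref{lem:bound-length-at-height-h} bounds its specification length by $f_0(\size{\prot}+\altmax+i)$. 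Since $\altmax$ has already been bounded, this is dominated by $g^{(i)}(\size{\prot}+\altmax)$ for a suitable primitive recursive $g$, and \cref{thm:lengthfcttheorem} over $\messages$ bounds the number of boss nodes, hence $|\altmin|$, by a function of class $\Ffunction{\omega^{\size{\messages}-1}}$ evaluated at the bound on $\altmax$; this composition lies in $\Ffunction{\omega^{\size{\messages}}}$.

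The main obstacle is exactly this growth control. A naive approach that splits the follower specifications by their message component and applies Higman's bound to each class separately fails: between two follower nodes carrying the same message the altitude may drop arbitrarily, so the position of a node in its per-message subsequence no longer controls its size, and the $g^{(i)}$ hypothesis of the length theorem breaks. Treating all messages simultaneously through the product order, together with the precise altitude-versus-position count, is what restores the iterate $g^{(i)}$ and keeps the bound at $\Ffunction{\omega^{\size{\messages}}}$ (the extra separator letter accounting for the single increment of the exponent) rather than a strictly higher class. The apparent circularity between the two altitudes in the growth bound is resolved by ordering the two arguments: $\altmax$ is bounded first with no reference to $\altmin$, and its bound is then substituted into the growth control for $|\altmin|$.
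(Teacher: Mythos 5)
Your proposal is correct and follows essentially the same route as the paper: bound $\altmax$ first via the bad sequence of follower specifications along a branch reaching maximal altitude (encoded over $\messages\cup\{\#\}$), control growth with \cref{lem:bound-length-at-height-h}, then bound $|\altmin|$ symmetrically via boss specifications using the already-established bound on $\altmax$. The only (immaterial) difference is that you take all follower (resp.\ boss) nodes on the branch and prove the altitude-versus-position inequality directly, whereas the paper restricts to the first node reaching each altitude level, which makes that inequality an equality.
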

\begin{proof}[Proof sketch]
We first bound $\altmax$. Consider a branch of $\tree$ that has a node at "altitude" $\altmax$. We follow this branch from the root to a node of altitude $\altmax$: for every $j \in \nset{1}{\altmax}$, let $\node_{j}$ be the first node of the branch that has altitude $j$. All such nodes are necessarily "follower nodes" as they are above their parent. Sequence $\node_{\altmax}, \dots, \node_2, \node_1$ is so that the $i$th term is at altitude $\altmax-i$ hence its size is bounded by $f_0(\size{\prot} + i)$ (Lemma~\ref{lem:bound-length-at-height-h}). With the observation of Lemma~\ref{lem:shortening-branches}, we retrieve from the "follower specifications" of this sequence of nodes a "bad sequence" and we apply the "Length function theorem" to bound $\altmax$.
This yields in turn a bound on the size of the root of $\tree$. In order to bound $\altmin$, we proceed similarly, using "boss nodes" this time. We follow a branch from the root to a node of "altitude" $\altmin$. The sequence of nodes that are lower than all previous ones yields a sequence of "boss specifications", which is a "bad sequence" by Lemma~\ref{lem:shortening-branches}, and whose growth can be bounded using Lemma~\ref{lem:bound-length-at-height-h} and the bound on $\altmax$. We apply the "Length function theorem" to bound $|\altmin|$.
\end{proof}

Once we have bounded $\altmax$ and $\altmin$, we can infer a bound on the size of all nodes (Lemma~\ref{lem:bound-length-at-height-h}), and then on the length of branches: by minimality, a branch cannot have two nodes with the same specification. 
The bound on the size of the tree then follows from the observation that bounding the size of nodes of $\tree$ also allows to bound their number of children.



We obtain a computable bound (of the class $\Ffunction{\omega^\omega}$) on the size of a minimal "coverability witness" if it exists. 
Our decidability procedure computes that bound, enumerates all trees of size below the bound and checks for each of them whether it is "coverability witness". This yields the main result of this paper, whose proof can be found in Appendix~\ref{app:decidable_cover}:

\decidablecover*

	\subsection{Undecidability of the "target problem"}
\label{sec:undec-target}

A natural next problem, after \COVER, is the "target problem" (\TARGET).  
Our \COVER\ procedure heavily relies on the ability to add agents at no cost. For \TARGET\ we need to guarantee that those agents can then reach the target state, which makes the problem harder. 
In fact, \TARGET\ is undecidable, which indicates that our model lies at the frontier of decidability.

\begin{restatable}{proposition}{propTargetUndecidable}
\label{prop:target-undec}
\TARGET\ is undecidable for "BNRA", even with two registers.
\end{restatable}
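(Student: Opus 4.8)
The plan is to reduce from the halting problem of deterministic two-counter Minsky machines, which is undecidable. Given such a machine $M$ with counters $x_1,x_2$ and a halting location, I would build a protocol $\prot$ with two registers and a target state $q_f$ such that all agents can synchronize on $q_f$ if and only if $M$ halts. The overall architecture uses one distinguished \emph{leader} agent to simulate the finite control of $M$, while the remaining agents act as \emph{counter tokens}: the value of $x_c$ is represented by the number of agents currently in a state $T_c$. The leader's first register holds a value that serves as its permanent identifier; every other agent stores this identifier in its second register at its first interaction and thereafter only accepts messages carrying it (via equality tests), so that all followers obey a single, fixed leader.

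The central idea is that the target requirement upgrades the inherently unreliable broadcast into \emph{reliable} communication, and it is this reliability that yields undecidability (in contrast with \COVER, where lost messages merely correspond to channel lossiness). Concretely, each instruction is implemented by a handshake: the leader broadcasts an order and moves to a waiting state from which $q_f$ is reachable only after it has received a matching acknowledgement; symmetrically, a token that is solicited must acknowledge before it may progress toward $q_f$. Because every agent is required to reach $q_f$, an order that is broadcast but never received leaves the leader blocked forever in its waiting state, and a spurious acknowledgement strands its sender in a dead end, so no synchronizing run can drop or invent messages. Increments recruit a fresh idle agent into $T_c$ (arbitrarily many are available by the copycat principle of Remark~\ref{rem:copycat-principle}); a decrement forces a genuine $T_c$ token to respond, which prevents the leader from decrementing a counter that is zero.

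The hard part is the \emph{zero-test}, that is, reliably certifying the \emph{absence} of $T_c$ tokens, since there is no agent with which to handshake when the counter is empty. To handle this I would not represent a counter as an unstructured pool but as a linked chain of tokens threaded through the registers, with a distinguished bottom sentinel whose identity the leader knows; a token records the identifier of its predecessor, so that a decrement returns control to the previous link and a zero-test reduces to checking that the current top coincides with the sentinel. The delicate engineering, and the reason the statement emphasizes \emph{two} registers, is to maintain and navigate these chains while the leader simultaneously keeps its own identifier, so that absence of tokens becomes locally observable and a false zero-test necessarily leaves a token in a state from which $q_f$ is unreachable.

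Finally I would prove the two directions. For completeness (if $M$ halts, \TARGET\ holds), I would replay the unique halting computation of $M$, choosing at each step exactly which agents receive each broadcast so that every handshake completes and, once the leader reaches the halting location, it broadcasts a terminal message driving all idle and consumed agents to $q_f$. For soundness (if \TARGET\ holds, $M$ halts), I would argue that in any run ending with all agents on $q_f$ the leader's trajectory through the locations of $M$ is forced to be a faithful computation: the handshakes rule out dropped orders and decrements-from-zero, the chain discipline rules out false zero-tests, and $q_f$ is reachable for the leader only through the halting location. The main obstacle throughout is this soundness argument, and specifically the faithful, register-frugal implementation of zero-tests; the handshake discipline for increments and decrements is comparatively routine once the single-leader convention is in place.
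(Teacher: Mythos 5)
You have the right overall target (a reduction from two-counter Minsky machines, using the requirement that \emph{all} agents synchronize to force reliable communication), and you correctly identify zero-tests as the crux, but the mechanism you propose for faithfulness does not close. The main threat in a broadcast model is not only \emph{lost} orders (which your leader-waits-for-acknowledgement handshake does rule out) but \emph{duplicated} receptions: when the leader broadcasts ``increment $x_c$'', any number of idle agents may receive it, and each of them then legitimately performs the action and legitimately acknowledges --- under a two-way handshake none of these acknowledgements is ``spurious'' in a way that strands its sender. The counter pool is then over-populated, after which the leader can perform more decrements than increments, so its instruction sequence need not be a computation of $M$; your claim that ``a decrement forces a genuine $T_c$ token to respond'' only prevents decrement-from-zero if over-counting has already been excluded. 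Your linked-chain repair is the right instinct, but as stated it conflicts with your own register budget: a token must simultaneously hold its own identifier (so a pop can be addressed to it), the leader's identifier (your stated acceptance discipline), and its predecessor's identifier (the chain link) --- three values in two registers --- and you supply neither the resolution nor the argument that orphaned duplicates are necessarily stranded. Since you yourself flag this as ``the main obstacle,'' what you have is a plan rather than a proof.

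For comparison, the paper resolves exactly this point with a different topology: \emph{every} agent stores a predecessor and thereafter accepts messages only from it, so by pigeonhole a cycle forms in the predecessor graph; along the cycle the protocol forces each agent to strictly alternate receptions and broadcasts, and since $a_{i+1}$ receives only from $a_i$, the broadcast counts satisfy $b_1 \geq b_2 \geq \cdots \geq b_n \geq b_1$, hence are all equal --- no message is lost \emph{and} none is duplicated, with no explicit addressing of acknowledgements. Counters are then just cycle agents carrying a bit, increments and decrements are messages propagated around the cycle until an agent flips, and a zero-test strands any bit-$1$ agent it traverses; its soundness follows because the conservation argument guarantees the message visits the entire cycle. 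To salvage your star-shaped architecture you would need to add the missing orphan-stranding mechanism (e.g., a three-way handshake whose final confirmation is addressed by identifier to a single token) and then actually carry out the soundness argument; the cycle-conservation argument is the ingredient your proposal lacks.
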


\begin{proof}[Proof sketch]
We simulate a Minsky machine with two counters. As in Proposition~\ref{prop:reduction-LCS},  each agent starts by storing some other agent's identifier, called its ``predecessor''. It then only accepts messages from its predecessor. As there are finitely many agents, there is a cycle in the predecessor graph. 

In a cycle, we use the fact that \emph{all} agents must reach state $q_f$ to simulate faithfully a run of the machine: agents alternate between receptions and broadcasts so that, in the end, they have received and sent the same number of messages, implying that no message has been lost along the cycle.
We then simulate the machine by having an agent (the leader) choose transitions and the other ones simulate the counter values by memorizing a counter ($1$ or $2$) and a binary value ($0$ or $1$). For instance, an increment of counter $1$ takes the form of a message propagated in the cycle from the leader until it finds an agent simulating counter $1$ and having bit $0$. This agent switches to $1$ and sends an acknowledgment that propagates back to the leader. 
The full proof is provided in Appendix~\ref{app:target}.
\end{proof}

	\section{Cover in 1-BNRA}
	\label{sec:cover-1BNRA}
	In this section, we establish the \NP-completeness of the restriction of \COVER\ to "BNRA" with one register per agent, called 1-BNRA. Here we simply sketch the key observations that allow us to abstract runs into short witnesses, leading to an \NP algorithm for the problem.
	
	In 1-BNRA, thanks to the "copycat principle", any  message can be broadcast with a fresh value, therefore one can always circumvent $\quotemarks{\diseqtestact}$ tests. In the end, our main challenge for 1-BNRA is $\quotemarks{\eqtestact}$ tests upon reception.
	For this reason, we look at clusters of agents that share the value in their registers. 

	Consider a "run" in which some agent $a$ reaches some state $q$,; we can duplicate $a$ many times to have an unlimited supply of agents in state $q$.
	Now assume that, at some point in the "run", agent $a$ stored a received value. Consider the last storing action performed by $a$: $a$ was in a state $q_1$ and performed transition $(q_1, \rec{m}{1}{\enregact}, q_2)$ upon reception of a message $(m,v)$. 
	Because we can assume that we have an unlimited supply of agents in $q_1$ thanks to the copycat principle, we can make as many agents as we want take transition $(q_1, \rec{m}{1}{\enregact}, q_2)$ at the same time as $a$ by receiving the same message $(m,v)$. These new agents end up in $q_2$ with value $\aval$, and then follow $a$ along every transition until they all reach $q$, still with value $v$. In summary, because $a$ has stored a value in the run, we can have an unlimited supply of agents in state $q$ with the same value as $a$. 
	
	Following those observations, we define an abstract semantics with abstract configurations of the form $(S, b, K)$ with $S, K \subseteq Q$ and $b \in Q \cup \set{\bot}$. The first component $S$ is a set of states that we know we can cover (hence we can assume that there are arbitrarily many agents in all these states).
	We start with $S = \set{q_0}$ and try to increase it. To do so, we use the two other components (the \emph{gang}) to keep track of the set of agents sharing a value $v$: $b$ (the \emph{boss}) is the state of the agent which had that value at the start, $K$ (the \emph{clique}) is the set of states covered by other agents with that value. As mentioned above, we may assume that every state of $K$ is filled with as many agents with value $v$ as we need. 
	We will thus define abstract steps which allow to simulate steps of the agents with the value we are following. When they cover states outside of $S$, we may add those to $S$ and reset $b$ to $q_0$ and $K$  to $\emptyset$, to then start following another value.
	We can bound the length of relevant abstract runs, and thus use them as witnesses for our \NP upper bound.

The \NP lower bound follows from a reduction from 3SAT. An agent $a$ sends a sequence of messages representing a valuation, with its identifier, to other agents who play the role of an external memory by broadcasting back the valuation. This then allows $a$ to check the satisfaction of a 3SAT formula.
%

\begin{restatable}{theorem}{thmNPComplete}
	\label{thm:np-complete-query-cover}
	The "coverability problem" for 1-BNRA is \NP-complete.
\end{restatable}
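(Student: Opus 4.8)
The plan is to establish both bounds. \NP-hardness I would obtain by a reduction from 3SAT, following the sketch above; the substantial direction is membership in \NP, for which I would turn the informal abstract semantics into a formal one, prove it sound and complete, and argue that a shortest accepting abstract run has polynomial length, so that it serves as a polynomial-size certificate.

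For \NP-hardness, given a 3SAT instance $\phi$ over variables $x_1,\dots,x_n$ with clauses $C_1,\dots,C_m$, I would construct a one-register protocol with a distinguished \emph{prover} role. The prover first broadcasts, one message type at a time, a guessed valuation, every message tagged with the value of its single register (its identifier). Auxiliary \emph{memory} agents each store the prover's identifier on their first reception and thereafter accept only messages carrying that identifier (enforced by equality tests); each memory agent records one announced literal and can rebroadcast it on demand, again tagged with the prover's identifier. The prover then scans the clauses, and for each clause waits for a rebroadcast certifying a satisfying literal, checked with an equality test so that no foreign valuation can be spliced in. The prover reaches the target state exactly when all clauses have been certified, which is possible for some run iff $\phi$ is satisfiable; the \emph{copycat principle} supplies as many memory agents as there are stored literals.

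For membership, I would formalize abstract configurations as triples $(S,b,K)$ with $S,K\subseteq Q$ and $b\in Q\cup\{\bot\}$: here $S$ records states coverable by unboundedly many agents, and the \emph{gang} $(b,K)$ tracks the agents currently sharing a single value $v$, with boss $b$ the state of the agent that owned $v$ initially and clique $K$ the set of states held by the other (unboundedly many) agents carrying $v$. From $(\{q_0\},q_0,\emptyset)$ I would allow three kinds of abstract step: an internal step in which $b$ or a state of $K$ fires a transition, where broadcasts of value $v$ feed equality-test receptions inside the gang and any state of $S$ supplies messages with fresh values (so that disequality tests and store actions are always dischargeable); a step enlarging $K$ by a newly reached state; and a reset step that, upon the gang covering some $q\notin S$, inserts $q$ into $S$ and resets the gang to $(q_0,\emptyset)$ so that a new value may be tracked. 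I declare $q_f$ coverable once $q_f\in S\cup K\cup\{b\}$.

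Soundness, that every abstract run yields a genuine covering run, follows by repeated application of the \emph{copycat principle}: states of $S$ are populated by many agents with pairwise-distinct fresh values, and the gang by many copies sharing $v$ (using that the suffix of an agent that has just stored a received value can be cloned), while disequality tests are met with fresh values. Completeness is the more delicate direction: from a concrete covering run I would follow each value and its cluster of carriers, observe that an equality test forces the receiver into the corresponding cluster, and check that each concrete step projects onto one of the three abstract steps. The main obstacle I foresee is the polynomial length bound on abstract runs, which I would obtain through monotonicity. The set $S$ never shrinks, so there are at most $|Q|$ resets; within a phase between resets $S$ is fixed and $K$ only grows, giving at most $|Q|$ enlargements; and for fixed $S$ and $K$ the boss component need never revisit a state, since such a repetition can be excised. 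This bounds the number of abstract steps by a polynomial in $|Q|$, so a minimal accepting abstract run is a polynomial-size certificate that can be guessed and verified in polynomial time, yielding the \NP upper bound and completing the proof.
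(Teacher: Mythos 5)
Your proposal is correct and follows essentially the same route as the paper: the identical 3SAT reduction for hardness, and for membership the same abstraction by triples $(S,b,K)$ with boss, clique, and gang resets, proved sound via the copycat principle (with exponentially many cloned agents), complete by tracking one value at a time, and bounded in length by the same monotonicity argument on $S$, $K$, and the boss component. The only cosmetic difference is that you fold the neutralization of disequality tests into the abstract step definition, whereas the paper eliminates them in a preprocessing lemma before defining the abstraction.
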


The formal proofs of the upper and lower bounds are given in Appendix~\ref{app:cover-one-reg}.

	\section{Conclusion}
	We established the decidability (and $\Fcomplexity{\omega^\omega}$-completeness) of the coverability problem for BNRA, as well as the \NP-completeness of the problem for 1-BNRA.
	Concerning future work, one may want to push decidability further, for instance by enriching our protocols
	with inequality tests, as done in classical models such as data nets  \cite{datanetsinequalityfomegaomegaomega}. 
	Reductions of other distributed models to this one are also being studied.
	
	\paragraph*{Acknowledgements.} We are grateful to Arnaud Sangnier for encouraging us to work on BNRA, for the discussions about his work in \cite{DelzannoST13} and for his valuable advice. We also thank Philippe Schnoebelen for the interesting discussion and Sylvain Schmitz for the exchange on complexity class $\Fcomplexity{\omega^{\omega}}$ and related topics.   
	\bibliography{biblio}
	
	\newpage
	
	\appendix
	\section{Proof of Proposition~\ref{prop:reduction-LCS}}
\label{app:reduction-lcs}

\propReductionLCS*
\begin{proof}
	We provide here a polynomial-time reduction from reachability for lossy channel systems with a single channel. A "lossy channel system" with a single channel is a finite-state machine that has the ability to buffer symbols in a lossy FIFO queue \cite{Schnoebelen2002verifying}. 
	Let $\los := (\lstates,\Sigma, \ltransitions)$ be a "lossy channel system", where $\lstates$ is a finite set of locations, $\Sigma$ is a finite set of symbols and $\ltransitions \subseteq \lstates \times \set{\popact{x}, \pushact{x} \mid x \in \Sigma} \times \lstates$; $\pushact{x}$ corresponds to writing $x$ at the end of the channel and $\popact{x}$ to reading $x$ at the beginning the channel. A configuration of $\los$ is a pair in $\lstates \times \Sigma^*$ denoting the location and the content of the channel. There exists a step from $(\lstate,w)$ to $(\lstate',w')$ using $\ltrans \in \ltransitions$, denoted $(\lstate,w) \lstep{\ltrans} (\lstate',w')$, when
	\begin{itemize}
		\item $\ltrans = (\lstate,\pushact{x},\lstate')$ for some $x \in \Sigma$ and $w' \subword w \cdot x$,
		\item $\ltrans = (\lstate,\popact{x},\lstate')$ for some $x \in \Sigma$ and $x \cdot w' \subword w$.
	\end{itemize}
	where $\subword$ denotes the "subword" order, which encodes the lossiness of the channel. 
	
	\AP The existence of a step is denoted $(\lstate,w) \lstep{} (\lstate',w')$, and its transitive closure is denoted $\lstep{*}$. The (location) ""reachability problem@@lcs"" asks, given $\los$ and two locations $\lstate_s, \lstate_f \in \lstates$, whether $(\lstate_s,\epsilon) \lstep{*} (\lstate_f, w)$ for some $w$.

	We construct a "signature protocol" $\prot$ with two registers (register $1$ is "broadcast-only", register $2$ is reception-only) and a state $q_f$ that may be covered if and only if $(\los, q_i, q_f)$ is a positive instance of the "reachability problem@@lcs". 
	
	\AP In some initial phase, each agent may become a ""link"" and store some other agent's identifier (its \emph{predecessor}); in that case, it will test further messages for equality so that only messages from the predecessor are accepted. Otherwise, it will become a ""root"" and will not receive any message in the future.  A depiction of this initial phase can be found in Figure~\ref{fig:lcs-choice}.
	
	The predecessor graph of a run of $\prot$, defined by the graph $(\agents,E)$ where $(a',a) \in E$ whenever $a'$ is the predecessor of $a$, will be a forest where each branch will simulate an execution of $\los$. A given agent $a$ of a given branch encodes one step of the execution in the "lossy channel system". 
A "root" agent is at the root of its tree and simply broadcasts the initial configuration $(\lstate_s,\epsilon)$ of $\los$; special character $\#$ encodes the end of the broadcast of the channel's content.  
	If $a$ is a "link" agent, then it will receive from its predecessor a location of the system and (a subword of) the content of the channel. Agent $a$ will in turn broadcast to the next agent in the branch, sending the new location of the system and the new content of the channel which $a$ rebroadcasts on-the-fly letter by letter as it receives it. Agent $a$ only modifies the beginning of the channel if it decides to encode a pop and the end of the channel if it decides to encode a push. Messages might get lost, which is why we cannot encode non-lossy channel systems.
	
	Observe that this construction only guarantees that agents have at most one predecesor; it does not guarantee that all agents are in the same branch or that any agent is the predecessor of at most one agent. This is fine because the information only propagates forward in a branch, and never propagates in between branches. 

	From state $\waitstate$, an agents will receive the location $\lstate$ from its predecessor and go to state $\startstate{\lstate}$; from there, it will non-deterministically pick a transition of $\los$ and apply it. See Figure~\ref{fig:lcs-trans} for a depiction of transitions from some state $\startstate{\lstate_1}$ (all $\startstate{\lstate}$, $\lstate \ne \lstate_1$, and corresponding transitions are omitted in the figure). 
	Finally, the objective state of our system is $q_f := \finstate{\lstate_f}$.

	\begin{figure}[h]
	\begin{subfigure}[b]{0.99\textwidth}
	\centering
	\begin{tikzpicture}[node distance=2cm,auto, xscale = 2]
	\tikzset{every node/.style = {font = {\scriptsize}}}
	\node[state,initial, initial text = ] (0) at (0,0) {$q_0$};
	\node[state] at (0,1) (root) {\textsf{root}};
	\node[state] at (0,-1) (link) {\textsf{link}};
	\node[state] at (1,1) (root1) {}; 	
	\node[state] at (2,1) (root2) {};
	\node[state] at (3,1) (root3) {$\finstate{\lstate_0}$};
	\node[state] at (1.5,-1) (link1) {};
	\node[state] at (3,-1) (wait){$\waitstate$};

	\path[->] 	
		(0) edge (root)
		(0) edge (link)
		(root) edge node[above] {$\br{\mathsf{init}}{1}$} (root1)
		(root1) edge node[above] {$\br{\mathsf{\lstate_s}}{1}$} (root2)
		(root2) edge node[above] {$\br{\mathsf{\#}}{1}$} (root3)
		(link) edge node[below] {$\rec{\mathsf{init}}{2}{\enregact}$} (link1)
		(link1) edge node[below]{$\br{\mathsf{init}}{1}$} (wait)
	;
\end{tikzpicture}
	\caption{The initial part of $\prot$}\label{fig:lcs-choice}
	\end{subfigure}
	\begin{subfigure}[b]{0.99\textwidth}
	\centering
	\begin{tikzpicture}[node distance=2cm,auto]
	\tikzset{every node/.style = {font = {\scriptsize}}}
	\node[state] at (-0.5,1) (wait){$\waitstate$};
	\node[state, inner sep = 0pt] at (2,1) (l1) {$\startstate{\lstate_1}$};
	\node[state] at (4,2.5) (tr1) {$\transstateone{\ltrans}$};
	\node[state, inner sep = 0pt] at (6.5,2.5) (tr2) {$\transstatetwo{\ltrans}$};
	\node[state] at (6.5,4) (reba) {};
	\node[state] at (6.5,1) (rebb) {};
	\node[state] at (9,2.5) (endrec) {};
	\node[state, inner sep = 0pt] at (11,2.5) (endbr) {$\finstate{\lstate_2}$};
	\node[state, inner sep = 0pt] at (4, -0.5) (tr1prime) {$\transstateone{\ltrans'}$};
	\node[state] at (4,1) (rebaprime) {};
	\node[state] at (4, -2) (rebbprime) {};
	\node[state, inner sep = 0pt] at (6.5, -0.5) (tr2prime) {$\transstatetwo{\ltrans'}$};
	\node[state] at (9,-0.5) (endrecprime) {};
	\node[state, inner sep = 0pt] at (11,-0.5) (endbrprime) {$\finstate{\lstate_3}$};
	
	\path[->] 	
	(wait) edge node[sloped] {$\rec{\mathsf{\lstate_1}}{2}{\eqtestact}$} (l1)
	(l1) edge node[above, sloped] {$\br{\mathsf{\lstate_2}}{1}$} (tr1)
	(tr1) edge node[above] {$\rec{\mathsf{a}}{2}{\eqtestact}$} (tr2)
	(tr2) edge[bend left = 10] node[left] {$\rec{\mathsf{a}}{2}{\eqtestact}$} (reba)
	(reba) edge[bend left = 10] node[right] {$\br{\mathsf{a}}{1}$} (tr2)
	(tr2) edge[bend right = 10] node[left] {$\rec{\mathsf{b}}{2}{\eqtestact}$} (rebb)
	(rebb) edge[bend right = 10] node[right] {$\br{\mathsf{b}}{1}$} (tr2)
	(tr2) edge node[above] {$\rec{\mathsf{\#}}{2}{\eqtestact}$} (endrec)
	(endrec) edge node[above] {$\br{\#}{1}$} (endbr)
	(rebaprime) edge[bend left = 10] node[right] {$\br{\mathsf{a}}{1}$} (tr1prime)
	(tr1prime) edge[bend left = 10] node[left] {$\rec{\mathsf{a}}{2}{\eqtestact}$} (rebaprime)
	(tr1prime) edge[bend right = 10] node[left] {$\rec{\mathsf{b}}{2}{\eqtestact}$} (rebbprime)
	(rebbprime) edge[bend right = 10] node[right] {$\br{\mathsf{b}}{1}$} (tr1prime)
	(tr2prime) edge node[above] {$\rec{\mathsf{\#}}{2}{\eqtestact}$} (endrecprime)
	(endrecprime) edge node[above] {$\br{\#}{1}$} (endbrprime)
	(l1) edge[bend right = 40] node[left] {$\br{\mathsf{\lstate_3}}{1}$} (tr1prime)
	(tr1prime) edge node[below] {$\br{\mathsf{a}}{1}$} (tr2prime); 
\end{tikzpicture}
	\caption{Part of $\prot$ encoding transitions from $\lstate_1$. Here, $\los$ has symbols $\Sigma = \set{\mathsf{a},\mathsf{b}}$ and there are two transitions from $\lstate_1$: $\ltrans = (\lstate_1, \popact{a}, \lstate_2)$ and $\ltrans' = (\lstate_1, \pushact{b}, \lstate_3)$.}\label{fig:lcs-trans}
	\end{subfigure}
	\caption{Depiction of the protocol $\prot$ built in the "lossy channel system" reduction of Proposition~\ref{prop:reduction-LCS}.}
	\end{figure}
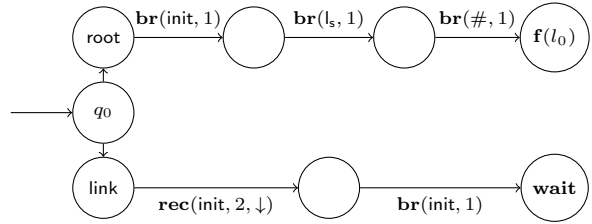
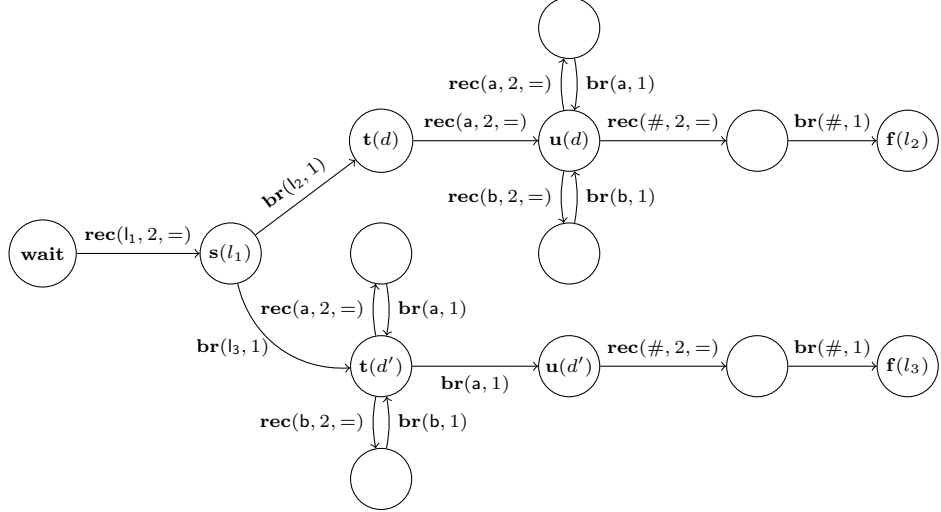
	We claim that $(\prot, q_f)$ is a positive instance of \COVER if and only if $(\los, \lstate_f)$ is a positive instance of the reachability problem for "lossy channel systems".
	First, suppose that there exists $w \in \Sigma^*$ such that $(\lstate_0, \epsilon) \lstep{*} (\lstate_f, w)$. Decompose the witness into $(\lstate_0, w_0) \lstep{} (\lstate_1, w_1) \lstep{} (\lstate_2, w_2) \cdots \lstep{} (\lstate_n,w_n)$ with $\lstate_n = \lstate_f$ and $w_n =w$. 
	We build an "initial run" of $\prot$ that "covers" $q_f$. It has set of agents $\agents := \set{0,\dots, n}$. Agent $0$ becomes the "root" and for all $i \geq 1$, agent $i$ becomes a "link" with predecessor agent $i-1$. By induction on $i$, we build an execution using agents $0$ to $i$ such that agent $i$ ends on state $\finstate{\lstate_i}$ and the sequence of "message types" sent by agent $i$ admits as subword $\mathsf{init} \cdot \mathsf{\lstate_i} \cdot w_i \cdot \mathsf{\#}$. For $i=0$, this condition is easily met by making agent $0$ become "root". We make agent $i+1$ do the following. It receives from agent $i$ state $l_i$ and goes to $\startstate{l_i}$. It moves to $\transstateone{\ltrans}$ where $\ltrans \in \ltransitions$ is such that $(\lstate_i, w_i) \lstep{\ltrans} (\lstate_{i+1}, w_{i+1})$. It then follows the branch and gets to $\finstate{\lstate_{i+1}}$. 
	\begin{itemize}
		\item if $\ltrans =(\lstate_i,\pushact{x}, \lstate_{i+1})$ is a push, then $w_{i+1} = w_{i}' \cdot y$ where $w_i' \subword w_i$ and $y \in \set{\epsilon, x}$. 
		We can make agent $i+1$ broadcast $\mathsf{init} \cdot \mathsf{\lstate_i} \cdot w'_i \cdot x \cdot \mathsf{\#}$ which admits as subword $\mathsf{init} \cdot \mathsf{\lstate_{i+1}} \cdot w_{i+1} \cdot \mathsf{\#}$.
		\item if $\ltrans =(\lstate_i, \popact{x}, \lstate_{i+1})$ is a pop then $w_{i} = u \cdot w_{i+1}'$ where $w_{i+1} \subword w_{i+1}'$ and $x \subword u$. 
		By lossiness, agent $i+1$ only receives $x$ from $u$ and goes to $\transstatetwo{\ltrans}$. We make it broadcast $\mathsf{init} \cdot \mathsf{\lstate_{i+1}} \cdot w_{i+1}'  \cdot \mathsf{\#}$ which admits as subword $\mathsf{init} \cdot \mathsf{\lstate_{i+1}} \cdot w_{i+1} \cdot \mathsf{\#}$.
	\end{itemize}
	This concludes the induction step.
	When applied to $i=n$, this builds an "initial run" where agent $n$ ends on $\finstate{\lstate_n}$, which is a witness that $(\prot, q_f)$ is positive.
	
	Suppose now that $(\prot, q_f)$ is positive. Let $\run: \config_0 \step{*} \config_f$ where $\config_f$ covers $q_f$. The predecessor graph obtained at the end of $\run$ is a forest; there can be no cycle as it would imply that all agents in the cycle are "link", which is a contradiction by considering the agent of the cycle sending the first $\mathsf{init}$ message. Consider in $\run$ a branch of agents $a_0, \dots, a_n$ such that $a_{i}$ is the predecessor of $a_{i+1}$ for each $i \in \nset{0}{n-1}$, $a_0$ is the "root" and $a_n$ covers $q_f$. 
	
	From the "run" $\run$ projected onto this branch, it is quite simple to build an execution of $\los$ that covers $q_f$. By structure of the protocol, because $a_n$ covers $q_f$, every agent $a_i$ ends on some $\finstate{\lstate_i}$ and broadcasts a word of the form $\mathsf{init} \cdot \lstate_i \cdot w_i \cdot \mathsf{\#}$; this can be proven with an immediate backwards induction. It then suffices to analyse the behavior of $a_{i+1}$ to prove that $(\lstate_i, w_i) \lstep{} (\lstate_{i+1}, w_{i+1})$. In particular, because $a_0$ is a "root", $\lstate_0 = \lstate_s$ and $w_0 = \epsilon$, which concludes the proof. 
\end{proof} 
	\section{Definitions and Notations of \cref{sec:cover-general-case}}
\label{app:def-trees}
\label{sec:decidability-defs}

We will first define the notion of decomposition, which is a formalization of the observation in Example~\ref{ex:decomposition}. 
A ""decomposition"" is a tuple $\decsymb = (w_0, m_1, \ldots, m_\ell, w_\ell)$ with $w_0, \ldots, w_\ell \in \messages^*$, and $m_1, \cdots, m_\ell \in \messages$, with $m_i \neq m_j$ for all $i\neq j$. In particular we have $\ell \leq \size{\messages}$. 
A word $w \in \messages^*$ ""admits decomposition"" $\decsymb = (w_0, m_1, \ldots, m_\ell, w_\ell)$ if $w \subword w'_0 w'_1 \cdots w'_\ell$ where for all $j$, $w'_j$ can be obtained from $w_j$ by adding letters from $\set{m_1, \ldots, m_{j}}$. 
We denote by $\intro*\langdec{\decsymb}$ the language of words that "admit decomposition" $\decsymb$. 

We are now ready to formally define "unfolding trees" in the general case.

\begin{definition}
	\label{def:unfolding_tree}
	\AP An ""unfolding tree"" $\tree$ over $\prot$ is
	a finite tree where nodes $\node$ have three labels:
	\begin{itemize}
		\item a "local run" of $\prot$, written $\reintro*\localrunlabel{\node}$, starting in the initial state with distinct register values;
		
		\item a value in $\nats$, written $\reintro*\valuelabel{\node}$;
		
		\item\AP a \intro{specification} $\reintro*\speclabel{\node}$, which is either a word $\intro*\bosslabel{\node} \in \messages^*$ (""boss specification"") or a pair $(\intro*\followlabelword{\node}, \intro*\followlabelmessage{\node}) \in \messages^* \times \messages$ (""follower specification""). In the first case we say that the node is a ""boss node"", otherwise it is a ""follower node"".
	\end{itemize} 
	Moreover, all nodes $\node$ must satisfy the four following conditions:
	\begin{enumerate}[label= (\roman*), ref=(\roman*)]
		\item \label{item:condition1_non_initial_value} For each "non-initial value" $\aval \ne \valuelabel{\node}$ of $\localrunlabel{\node}$, $\node$ has a child $\node'$ which is a "boss node" such that $\vinput{\aval}{\localrunlabel{\node}}$ is a subword of $\bosslabel{\node'}$.
		
		\item \label{item:condition2_initial_value} For each "initial value" $\aval$ in $\localrunlabel{\node}$, there is a "decomposition" \\ $\decsymb = (w_0, m_1, w_1, \ldots, m_{\ell}, w_{\ell})$~s.t.:
		\begin{itemize}
			\item $\localrunlabel{\node}$ may be split into successive "local runs" $\localrun_0, \dots, \localrun_{\ell}$ where, for all $i \in \nset{1}{\ell}$, $w_i \subword \voutput{\aval}{\localrun_i}$ and $\vinput{\aval}{\localrun_i} \in \set{m_1, \dots, m_{i-1}}^*$,
			\item for all $i \in [1,\ell]$, $\node$ has a child $\node_i$ which is a "follower node" such that $\followlabelmessage{\node_i} = m_i$ and $\followlabelword{\node_i} \in\langdec{\decsymb_i}$ where $\decsymb_i = (w_0, m_1, w_1, \ldots, m_{i-1}, w_{i-1})$.	\end{itemize}
		
		\item \label{item:condition3_follower_node} If $\node$ is a "follower node" then $\valuelabel{\node}$ is not an "initial value" of $\localrun$, $\vinput{\valuelabel{\node}}{\localrun} \subword \followlabelword{\node}$ and 
		$\voutput{\valuelabel{\node}}{\localrun}$ contains $\followlabelmessage{\node}$.
		
		\item \label{item:condition4_boss_node} If $\node$ is a "boss node", then $\valuelabel{\node}$ is an "initial value" of $\localrunlabel{\node}$ and the "decomposition" $\decsymb$ of \ref{item:condition2_initial_value} for $\valuelabel{\node}$ satisfies that $\bosslabel{\node} \in \langdec{\decsymb}$.
	\end{enumerate}
	
	\AP Lastly, as before, given $\tree$ an "unfolding tree", we define its ""size@@tree"" by $\size{\tree} := \sum_{\node \in \tree} \size{\localrunlabel{\node}} + \size{\speclabel{\node}} + 1$. 
\end{definition}

We now explain this definition. Let $\node$ be a node of an "unfolding tree" $\tree$ and let $\localrun := \localrunlabel{\node}$. 
As before, $\localrun$ encodes the "local run" of a given agent, $\speclabel{\node}$ encodes the specification that this "local run" carries out and $\valuelabel{\node}$ encodes the value for which the "specification" is carried out.

Conditions \ref{item:condition1_non_initial_value} and \ref{item:condition2_initial_value} state that the "specifications" of the children of $\node$ are witnesses that messages received in the "local run" $\localrunlabel{\node}$ can be broadcast by other agents. 
Conditions \ref{item:condition3_follower_node} and \ref{item:condition4_boss_node} state that $\node$ is a witness that its "specification" is carried out. 

As before, condition \ref{item:condition1_non_initial_value} expresses that, for every "non-initial value" $\aval$ of $\localrun$, $\node$ must have a "boss" child witnessing that $\vinput{v}{\localrun}$ can indeed be broadcast. 
Because $\aval$ was first stored by a "reception step" of $\localrun$, any other (fresh) value with sequence of message types containing $\vinput{v}{\localrun}$ also works and we do not impose the value label of this child to be $v$. 

We now explain condition \ref{item:condition2_initial_value}, which states the existence of a "decomposition" for each "initial value", which serves as a summary of the broadcasts made with that value. Let $v$ be an "initial value" of $\localrun$. Consider a "run" where $\localrun$ is the "local run" of agent $a$. If another agent broadcasts with value $v$, it has first received and stored $\aval$. By duplicating agents, we may afterwards assume that we have an unlimited supply of messages $(m,v)$. Therefore the crucial information is the times at which each "message type" is first broadcast with $v$ by an agent other than $a$.

The "decomposition" $\decsymb = (w_0, m_1, w_1, \ldots, m_\ell, w_\ell)$ should be understood as follows: In the "run" we are representing, $a$ first broadcasts messages of $w_0$ with value $v$, after which other agents are able to broadcast $(m_1,v)$. Then $a$ broadcasts $w_1$ (and may receive $m_1$), after which other agents can broadcast $(m_2, v)$, and so on.
To sustain that description, we need to be able to split $\localrun$ into $u_0,\dots,u_\ell$, with each $u_i$ broadcasting $w_i$ and only receiving messages $m_1,\ldots, m_{i-1}$ over value $v$. We also need a "follower" child for each $m_i$ to witness that other agents may broadcast it.
For every $i$, the sequence of messages available with value $v$ during $u_i$ is $\voutput{v}{u_i}$ expanded by freely adding symbols from $\set{m_1,\dots, m_{i-1}}$. Therefore, the "follower" child $\node_i$ responsible for the broadcast of $(m_i,v)$ may first receive with value $v$ a subword of $w_0' \cdot w_1' \cdots w_{i-1}'$ where, for all $j \leq i-1$, $w_j$ is obtained from $\voutput{v}{u_i}$ by adding symbols from $\set{m_1, \dots, m_{j-1}}$, which we state as $\followlabelword{\node_i} \in \langdec{\decsymb_i}$.   

Condition~\ref{item:condition3_follower_node} directly states that a "follower node" $\node$ receives word $\followlabelword{\node}$ with value $\valuelabel{\node}$ and broadcasts message $(\followlabelmessage{\node}, \valuelabel{\node})$. Condition \ref{item:condition4_boss_node} expresses that a "boss node" witnesses the broadcast of a sequence of messages $\bosslabel{\node}$ with a single value; whereas in the "signature protocol" case, in this sequence, some messages may come from auxiliary agents encoded in "follower" children, which is why we have the condition that $\bosslabel{\node} \in \langdec{\decsymb}$ and not simply $ \voutput{\valuelabel{\node}}{\localrun} \subword \bosslabel{\node}$. 

	\section{Proof of Proposition~\ref{prop:trees-sound-complete}}
\label{app:trees-sound-complete}

\treessoundcomplete*
 


We make the meaning of the "specification" labels more concrete by defining the criteria for an "initial run" to satisfy a "specification".

\begin{itemize}
	\item A "run" $\run$ satisfies a "boss specification" $\bossspec$ if there exists $\aval \in \nats$ such that $\bossspec$ is a subword of the sequence of messages sent with value $\aval$ in $\run$.
	
	\item A "run" $\run$ satisfies a "follower specification" $(\followwordspec, \followmessagespec)$ if there exist a value $\aval$ and an agent $a$ such that $\aval$ is not an initial value of $a$, the $\aval$-"input" of $a$ in $\run$ is a subword of $\followwordspec$ and agent $a$ broadcasts $\followmessagespec$ with value $\aval$ at some point.
\end{itemize}

For an "unfolding tree", satisfying a "specification" simply means that its root is labeled with that "specification" or a better one.

\begin{itemize}
	\item An "unfolding tree" satisfies a "boss specification" $\bossspec$ if its root $\node$ is a "boss node" and $\bossspec$ is a subword of its "specification" label $\bosslabel{\node}$.
	
	\item An "unfolding tree" satisfies a "follower specification" $(\followwordspec, \followmessagespec)$ if its root $\node$ is a "follower node" such that $\followmessagespec=\followlabelmessage{\node}$ and  $\followlabelword{\node}$ is a subword of $\followwordspec$.
\end{itemize}

Note that \COVER reduces to the existence of an "initial run" satisfying a "boss specification": it suffices to consider $\messages' := \messages \cup \set{m_f}$ with $m_f \notin \messages$ and to add a loop on $q_f$ broadcasting $m_f$. Therefore, it suffices to prove that "initial runs" of $\prot$ and "unfolding trees" satisfy the same "boss specifications". We prove the two directions in the two following sections: we construct an "unfolding tree" from a "run" in  Section~\ref{app:run-to-tree} and a "run" from an "unfolding tree" in Section~\ref{app:tree-to-run}.



\subsection{Construction of an "unfolding tree" from an "initial run"}
\label{app:run-to-tree}


\begin{lemma}
	\label{lem:run-to-tree}
	If there exists an "initial run" $\run$ of $\prot$ satisfying some "specification" $\spec$ then there exists a finite "unfolding tree" $\tree$ over $\prot$ satisfying $\spec$.
\end{lemma}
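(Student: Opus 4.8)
The plan is to mirror the run-to-tree direction of \cref{lem:coverability_witness_sg}, but to strengthen the statement so that a single induction produces a tree for \emph{every} agent and \emph{every} value at once; the lemma then follows by specializing to the agent and value witnessing $\spec$. Concretely, I would prove by induction on the length of $\run$ that for every initial run $\run$, every agent $a$ with local run $u$, and every value $v$ occurring in $u$, there is a finite unfolding tree whose root $\node$ has local run $u$ and value label $v$ such that: \textbf{(a)} if $v$ is an initial value of $a$, then $\node$ is a boss node whose label $\bosslabel{\node}$ is the whole sequence of message types broadcast with value $v$ throughout $\run$ (which lies in $\langdec{\decsymb}$ by construction of the decomposition); and \textbf{(b)} if $v$ is non-initial for $a$ and $a$ broadcasts some message type $m$ with value $v$, then $\node$ is a follower node with $\followlabelword{\node} = \vinput{v}{u}$ and $\followlabelmessage{\node} = m$. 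The base case is the empty run, witnessed by a single node with empty specification. The lemma is then immediate: a boss specification $\bossspec$ satisfied by $\run$ is a subword of the $v$-broadcasts of some value $v$, whose owner's boss tree from (a) satisfies $\bossspec \subword \bosslabel{\node}$; a follower specification $(\followwordspec, \followmessagespec)$ satisfied by $\run$ is witnessed by an agent $a$ and value $v$ with $\vinput{v}{u} \subword \followwordspec$ and $a$ broadcasting $\followmessagespec$ with $v$, so the tree from (b) taken with $m = \followmessagespec$ satisfies it.

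For the inductive step I would peel off the last step of $\run$, a broadcast of $(m, v_0)$ by an agent $a_0$, and use the induction hypothesis on the prefix $\run_p$ to obtain all the trees for $\run_p$; the trees are then updated by appending to each relevant root local run the final local step of the corresponding agent. Appending the broadcast to $a_0$'s tree only lengthens the sequence of messages it broadcasts with value $v_0$, hence the last segment of its decomposition, so conditions~\ref{item:condition2_initial_value} and~\ref{item:condition4_boss_node} are preserved. For a receiver $a$ of the value $v_0$, the update depends on how $v_0$ relates to the task value $v$: if $v_0 \ne v$ is non-initial for $a$, I attach (or refresh) a boss child whose label contains $\vinput{v_0}{u}$, namely the boss tree of the owner of $v_0$ furnished by the induction hypothesis; here condition~\ref{item:condition1_non_initial_value} does not constrain the child's value label, so the copycat principle (\cref{rem:copycat-principle}) lets me place this child on fresh, disjoint agents. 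If $v_0 = v$ with $v$ non-initial (a follower root), the reception is simply absorbed into $\followlabelword{\node}$, as permitted by condition~\ref{item:condition3_follower_node}. If $v_0 = v$ with $v$ initial (a boss root), the reception is a rebroadcast of $a$'s own value by another agent and must be recorded in the decomposition of condition~\ref{item:condition2_initial_value}.

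This last situation is the main obstacle, as it is the genuinely new phenomenon absent from the signature case. The decomposition $\decsymb = (w_0, m_1, \dots, m_\ell, w_\ell)$ records, in temporal order, the segments $w_i$ that $a$ itself broadcasts with value $v$ and the message types $m_i$ that \emph{other} agents first make available with value $v$. When $a$ receives $(m, v)$ at the current step I must distinguish two cases: if $m$ is broadcast with value $v$ by a non-owner for the first time, it becomes a fresh $m_{\ell+1}$, opening a new segment and requiring a new follower child obtained by applying the induction hypothesis to the current broadcaster on $\run_p$ and extending it by the final broadcast; if $m$ already occurs among $m_1, \dots, m_\ell$, no new child is needed and the reception is justified by duplicating the existing follower child, which is exactly what membership in $\langdec{\decsymb}$ allows, since letters of $\set{m_1, \dots, m_j}$ may be freely inserted after the $j$-th split. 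The bulk of the work is checking that the split of the local run into $u_0, \dots, u_\ell$ with $\vinput{v}{u_i} \in \set{m_1, \dots, m_{i-1}}^*$ survives each extension and that the boss label stays in $\langdec{\decsymb}$, i.e. that condition~\ref{item:condition4_boss_node} is maintained. Throughout, the copycat principle underpins both the use of fresh disjoint agents for children and the duplication of followers, which is what makes the one-child-per-message-type abstraction sound.
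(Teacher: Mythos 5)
Your proposal is correct in substance but organizes the induction differently from the paper. The paper proves the statement for one $(\text{agent},\text{value},\text{specification})$ triple at a time and performs a strong induction on the pair (run length, specification type) ordered lexicographically with ``follower'' below ``boss''; this extra coordinate exists precisely to handle the case where $\run$ ends with the first broadcast of $(m_\ell,v)$ by a non-owner $a_\ell$, since the follower child witnessing that broadcast has the \emph{whole} run $\run$ as its underlying run and so cannot be obtained from an induction on length alone. You instead keep a plain induction on length but strengthen the statement to produce, for every agent and every value simultaneously, a tree rooted at that agent's local run, and you resolve the same circularity by assembling the new follower child by hand: its root is $a_\ell$'s local run in $\run_p$ extended by the final broadcast, and all of \emph{its} children concern values received strictly within $\run_p$, so they are available from the induction hypothesis. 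Both routes hinge on the same observation; yours trades the slightly unusual well-order for a bulkier invariant, and the paper's trades the bulkier invariant for the two-level order. One point where your wording needs tightening: the new follower child is not literally ``furnished by the induction hypothesis applied to the broadcaster on $\run_p$'' --- your clause (b) gives a tree for $(a_\ell,v)$ on $\run_p$ only if $a_\ell$ has \emph{already} broadcast with value $v$ there, which it has not; and symmetrically, the boss child attached to a receiver of the final message must contain the final $m$ in its label, so it must be the tree updated for $\run$, not the tree for $\run_p$. Your parentheticals (``extending it by the final broadcast'', ``attach (or refresh)'') indicate you see this, but a clean write-up should make explicit that within a single inductive step the trees must be rebuilt in dependency order: first the owners' boss trees (with their freshly assembled follower child), then the receivers' trees pointing to them.
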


\begin{proof}
	
	We proceed by strong induction on $\nats \times \set{\text{follower}, \text{boss}}$ ordered by lexicographic order (follower being lower than boss). The first component is the length of the "run" and the second component is the type of "specification".This means that, for a fixed "run" length, we prove our property for "boss specifications" then for "follower specifications". 
	
	If $\run$ is of length $0$, then its specification must be an empty "boss specification" and the tree with a single node labeled with an empty "local run" and an empty "specification" satisfies it.

	Let $\run$ be an "initial run", and assume that the property is true for "initial runs" whose length is less than the one of $\run$. Write $\agents$ the set of agents of $\run$.
	Because $\run$ satisfies $\spec$, there exist a value $v$ and an agent $a$ in $\run$ such that:
	\begin{itemize}
	\item if $\spec$ is a "boss specification" $\bossspec$, $\bossspec$ is a subword of the sequence of "message types" sent with value $v$ in $\run$ and $a$ is the agent which has $v$ as an "initial value",
	\item if $\spec$ is a "follower specification" $(\followwordspec, \followmessagespec)$, then $a$ is an agent whose $v$-"input" is a subword of $\followwordspec$ and that broadcasts $\followmessagespec$ with value $v$. 
	\end{itemize}

	If the last step of $u$ does not include a broadcast with value $v$, then we remove it and apply the induction hypothesis. Assume that the last step of $\run$ involves a broadcast with value $v$. 
	Let $u$ the "local run" of $a$ in $\run$. We set the root $\node$ of our tree $\tree$ to have local run $u$, value $v$ and "specification" $\spec$ as labels, and attach subtrees to it so that it forms an "unfolding tree".

	We do the following for every value $v'$ in $u$.
	If $v'$ is "non-initial" in $u$ and $v' \neq v$,
	because the last "step" of $\run$ is a broadcast with value $v$, 
	we apply the induction hypothesis on $\run$ without its last step to obtain an "unfolding tree" $\tree'$ whose root is a "boss node" with "boss specification" $\vinput{v'}{u}$, and we attach $\tree'$ below $\node$ in $\tree$. 
	If $v'=v$ and $v'$ is "non-initial" then $\spec$ is a "follower specification" and we do not need to add any children, as the $v'$-input of $u$ is covered by the specification. 
	Assume now that $v'$ is "initial" in $u$. We split $\run$ according to the first broadcast of each message type by agents other than $a$; we obtain $m_1, \dots, m_\ell \in \messages$ distinct message types such that $\run = \run_0 \cdot \run_1 \cdot \ldots \cdot \run_\ell$ where, for all $i \in \nset{1}{l}$:
	\begin{itemize}
	\item agents from $\agents \setminus \set{a}$ do not broadcast $m_i$ with value $v'$ in $\run_0 \cdot \ldots \cdot \run_{i-1}$,
	\item $\run_i$ starts with a broadcast of $m_i$ with value $v'$ by some agent $a_i \in \agents \setminus \set{a}$.
	\end{itemize} 
	
	Let $w_i$ the sequence broadcast by $a$ with value $v'$ in $\run_i$. This forms a "decomposition" $\decsymb := (w_0, m_1, w_1, \ldots, m_\ell, w_\ell)$.
	For all $i \in \nset{1}{\ell}$, we write $\decsymb_i$ for the "decomposition" $\decsymb_i = (w_0, m_1, w_1, \ldots, m_{i-1}, w_{i-1})$. 
	For every $i \in \nset{1}{\ell-1}$, let $\tilde{\run}_i$ the run $\run_0 \cdot \ldots \cdot \run_{i-1}$ plus the first step of $\run_i$ (where $(m_i,v')$ is broadcast).
	Let $w_i'$ the $v'$-"input" of $a_i$ in $\tilde{\run}_i$: it is in $\langdec{\decsymb_i}$ by construction of $\decsymb$. Agent $a_i$ is a witness that $\tilde{\run}_i$ satisfies "follower specification" $(w'_i, m_i)$. 
	If $\tilde{\run_i}$ is smaller than $\run$, we apply the induction hypothesis and obtain an "unfolding tree" $\tree_i$ satisfying "follower specification" $(w_i', m_i)$. If $\tilde{\run_i} = \run$, then $i = \ell$, $v'=v$ ($\run$ ends with a broadcast with value $v$) and $\run$ ends with a broadcast $(m_\ell,v)$ by agent $a_\ell$. Moreover, since $v'=v$, $v$ is "initial" for $a$ and $\spec$ is a "boss specification" $\bossspec$. Because in our induction order "boss specifications" are above "follower specifications", we can apply the induction hypothesis and obtain an "unfolding tree" $\tree_\ell$ satisfying "follower specification" $(w_\ell', m_\ell)$. We have obtained, for every $i \in \nset{1}{\ell}$, an "unfolding tree" $\tree_i$ satisfying "follower specification" $(w_i', m_i)$, which we attach below $\node$. 
	
	Overall we have attached one "boss node" below $\node$ for each non-initial value in $u$ (except $v$ if $\node$ is a "follower node"), thanks to which \ref{item:condition1_non_initial_value} is satisfied at $\node$, and there are well-chosen "follower nodes" below $\node$ for each initial value in $u$, thanks to which \ref{item:condition2_initial_value} is satisfied as well. Conditions \ref{item:condition3_follower_node} and \ref{item:condition4_boss_node} hold because $\run$ satisfies $\spec$ for agent $a$.
	We have therefore built an "unfolding tree" $\tree$ satisfying $\spec$. 
	\end{proof}

	\subsection{Construction of an "initial run" from an "unfolding tree"}
\label{app:tree-to-run}

\begin{restatable}{lemma}{LemTreeToRun}
	\label{lem:tree-to-run}
	If there exists an "unfolding tree" over $\prot$ satisfying a "boss specification" $\bossspec \in \messages^*$ then there exists an "initial run" $\run$ of $\prot$ satisfying $\bossspec$.
\end{restatable}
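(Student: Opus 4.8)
The plan is to prove a stronger statement by strong induction on the depth of a node, treating "boss nodes" and "follower nodes" uniformly but with different conclusions. For a "boss node" $\node$ I would produce an "initial run" in which a single distinguished agent realizes the "local run" $\localrunlabel{\node}$ and broadcasts the whole word $\bosslabel{\node}$ with one fixed value; for a "follower node" $\node$ I would instead produce a \emph{"partial run"}, i.e. a "run" in which the receptions carried out with value $\valuelabel{\node}$ are left unmatched, the distinguished agent realizes $\localrunlabel{\node}$, and it broadcasts $\followlabelmessage{\node}$ with value $\valuelabel{\node}$. The statement of \cref{lem:tree-to-run} is then the boss-node case applied to the root. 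The base case (leaves) is immediate, since a leaf's "local run" has no reception that must be matched by a child.

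For the inductive step I would first deal with "non-initial values", exactly as in the "signature" case of \cref{lem:coverability_witness_sg}. For each "non-initial value" $\aval \neq \valuelabel{\node}$ received in $\localrunlabel{\node}$, condition~\ref{item:condition1_non_initial_value} supplies a "boss node" child $\node'$ with $\vinput{\aval}{\localrunlabel{\node}} \subword \bosslabel{\node'}$; I apply the induction hypothesis to $\node'$, rename its value label to $\aval$ and its remaining values to be fresh, and run it in parallel so that its broadcasts with value $\aval$ feed the receptions of the distinguished agent. A single copy suffices here, because all these receptions carry the same value $\aval$ and can be matched by one broadcasting agent.

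The hard part will be condition~\ref{item:condition2_initial_value} for the "initial values", where the "follower" children and the "copycat principle" come into play. Fix an "initial value" $\aval$ with "decomposition" $\decsymb = (w_0, m_1, w_1, \ldots, m_\ell, w_\ell)$, the associated split $\localrun_0, \ldots, \localrun_\ell$ of $\localrunlabel{\node}$, and the "follower" children $\node_1, \ldots, \node_\ell$. I would assemble the run in stages following $\decsymb$: the distinguished agent plays $\localrun_0$, broadcasting $w_0$ with value $\aval$; then, by the induction hypothesis on $\node_1$, I obtain a "partial run" whose unmatched receptions with value $\aval$ form (by condition~\ref{item:condition3_follower_node}) a subword of $\followlabelword{\node_1} \in \langdec{\decsymb_1}$, hence a subword of $w_0$, so they are matched by the broadcasts just made, after which the $\node_1$-agent broadcasts $(m_1, \aval)$; the distinguished agent then plays $\localrun_1$ (possibly receiving $m_1$), and so on. Since $\followlabelword{\node_i} \in \langdec{\decsymb_i}$, every reception that the $i$-th "follower" must perform with value $\aval$ is a subword of the messages already broadcast with value $\aval$ — namely $w_0, \ldots, w_{i-1}$ by the distinguished agent and $m_1, \ldots, m_{i-1}$ by the earlier "followers" — which is precisely what guarantees the matching.

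The genuinely delicate point, and the one I expect to be the main obstacle, is the bookkeeping of the "copycat principle" for "follower" children: a single broadcast of $(m_i, \aval)$ may have to be consumed by several later receivers (the subruns $\localrun_j$ with $j > i$ and the receptions of the later "followers" $\node_j$), so I would clone each $\node_i$-subrun a sufficiently large but finite number of times. Crucially, all these clones must broadcast with the \emph{same} value $\aval$, which is consistent precisely because $\aval$ is "non-initial" for a "follower node" (condition~\ref{item:condition3_follower_node}): each clone \emph{receives} $\aval$ from the distinguished agent's broadcasts rather than owning it initially, so the clones only need pairwise distinct \emph{initial} values, exactly what the "copycat principle" provides. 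Finally, since the root is a "boss node", condition~\ref{item:condition4_boss_node} gives $\bosslabel{\node} \in \langdec{\decsymb}$, so the messages broadcast with value $\valuelabel{\node}$ — by the root agent (the $w_i$) and by its "follower" children (the $m_i$) — contain $\bosslabel{\node}$, hence $\bossspec$, as a subword, completing the construction of an "initial run" satisfying $\bossspec$.
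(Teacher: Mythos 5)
Your proposal is correct and follows essentially the same route as the paper's proof of this lemma (Lemma~\ref{lem:tree-to-run-technical}): the same strengthened induction where "boss nodes" yield "initial runs" and "follower nodes" yield "partial runs" with "unmatched receptions" on a single "non-initial" value, the same treatment of "non-initial values" via "boss" children, and the same staged assembly along the "decomposition" for "initial values". The only presentational difference is at the point you rightly flag as delicate: where you would fix a sufficiently large number of clones of each "follower" subrun in advance, the paper instead composes one copy at a time and argues termination because each composition trades an "unmatched reception" of $(m_j,v)$ for "unmatched receptions" of $(m_i,v)$ with $i<j$ only, so the tuple of counts decreases lexicographically --- which is exactly the observation that also makes your a priori bound well-defined.
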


We start by defining "partial runs", which are "runs" where some receptions are not matched by broadcasts. Intuitively, this will allow us to build "partial runs" from "unfolding trees" whose root is a "follower node", which implicitly rely on their parent for some broadcasts. We will therefore construct inductively "partial runs" from nodes of the tree; we will obtain complete "runs" by matching reception and broadcasts of different "partial runs".

\begin{definition}
	Let $\config, \config'$ two configurations. 
	
	A ""partial step"" $\config \pstep{} \config'$ is defined if either $\config \step{} \config'$ (normal "step") or there exist $m \in \messages$, $v\in \nats$ such that for all agent $a$ either $\config(a) = \config'(a)$ or $\config(a) \extbr{\delta}{v} \config'(a)$ with $\delta$ a transition receiving "message type" $m$ (""unmatched reception"").
	
	\AP A ""partial run"" is a sequence of "partial steps".
	Note that a "local run" can be seen as a "partial run" with a single agent. A "partial run" is ""initial@@partial"" if it starts in an "initial configuration".
	The \reintro{$v$-input} $\reintro*\vinput{v}{\run}$ of a "partial run" $\run$ is the sequence $m_0 \cdots m_k$ of "message types" corresponding to "unmatched receptions" with value $v$ in $\run$. Its \reintro{$v$-output} $\reintro*\voutput{v}{\run}$ is the sequence of "message types" corresponding to "broadcasts" with value $v$ in $\run$.
\end{definition}

We will prove the following, more general version of Lemma~\ref{lem:tree-to-run}:

\begin{lemma}
\label{lem:tree-to-run-technical}
For every "unfolding tree" $\tree$:
\begin{itemize}
	\item if $\tree$ satisfies a "boss specification" $\bossspec \in \messages^*$, then there exists an "initial run" $\run$ satisfying $\bossspec$,
	\item if $\tree$ satisfies a "follower specification" $(\followwordspec, \followmessagespec)$ then there exist an "initial partial run" $\run$ and a value $v$ such that:
	\begin{itemize}
	\item all "unmatched receptions" in $\run$ are with value $v$,
	\item $\vinput{v}{\run} \subword \followwordspec$, 
	\item $\voutput{v}{\run}$ contains $\followmessagespec$.
	\end{itemize}
\end{itemize}
\end{lemma}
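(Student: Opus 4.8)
The plan is to prove both bullets simultaneously by structural induction on $\tree$, noting that every subtree of an "unfolding tree" is again an "unfolding tree" and that the main engine of the construction is the "copycat principle", which lets us duplicate any agent together with its whole "local run". Let $\node$ be the root of $\tree$, set $\localrun := \localrunlabel{\node}$ and $v := \valuelabel{\node}$, and let $a$ be the agent meant to execute $\localrun$. The children of $\node$ will, by induction, provide the runs that feed every value received by $a$; the task is to schedule them in parallel with $\localrun$ so that each reception of $a$ is matched by a broadcast, the only exception being, in the "follower" case, the receptions carrying value $v$, which stay unmatched and become the "unmatched receptions" of the output "partial run".

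First I would handle each "non-initial value" $v' \neq v$ received in $\localrun$: condition \ref{item:condition1_non_initial_value} gives a "boss" child $\node'$ with $\vinput{v'}{\localrun} \subword \bosslabel{\node'}$, and the induction hypothesis on the subtree at $\node'$ (boss case) yields an "initial run" in which some agent broadcasts a supersequence of $\vinput{v'}{\localrun}$ with a single value, which I rename to $v'$. Distinct received values are served on disjoint, freshly renamed agent sets, so these runs do not interact and initial values can be kept globally distinct. For each "initial value" $v'$ of $\localrun$, condition \ref{item:condition2_initial_value} supplies a "decomposition" $\decsymb = (w_0, m_1, w_1, \ldots, m_\ell, w_\ell)$, a splitting $\localrun = \localrun_0 \cdots \localrun_\ell$ with $\vinput{v'}{\localrun_i} \in \set{m_1, \ldots, m_{i-1}}^*$ and $w_i \subword \voutput{v'}{\localrun_i}$, together with a "follower" child $\node_i$ for each $i$ such that $\followlabelmessage{\node_i} = m_i$ and $\followlabelword{\node_i} \in \langdec{\decsymb_i}$; the induction hypothesis (follower case) converts each $\node_i$ into an "initial partial run" whose "unmatched receptions" all carry one value, renamed to $v'$, and form a subword of $\followlabelword{\node_i}$, and whose $v'$-"output" contains $m_i$.

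The main obstacle is the scheduling that welds these pieces into a single "run", and the "decomposition" dictates the correct order. While $a$ runs $\localrun_0$ it broadcasts (a supersequence of) $w_0$ with value $v'$; this, plus clones produced by the "copycat principle", lets every copy of $\node_1$'s "partial run" match its "unmatched receptions" (a subword of $\followlabelword{\node_1} \in \langdec{\decsymb_1}$, hence of $w_0$) and then broadcast $m_1$. Inductively, once $w_0, \ldots, w_{i-1}$ and arbitrarily many copies of $m_1, \ldots, m_{i-1}$ have been broadcast with value $v'$, both the receptions $\vinput{v'}{\localrun_i} \in \set{m_1, \ldots, m_{i-1}}^*$ of $a$ and the "unmatched receptions" of $\node_i$ can be served, after which $\node_i$ broadcasts $m_i$; cloning each follower run enough times matches every reception of $a$ with value $v'$. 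Performing this for all values and placing the boss-child runs in parallel produces an "initial run" (a "partial run" in the follower case) realizing $\localrun$ for $a$. To finish, in the "boss" case $v$ is "initial", so condition \ref{item:condition4_boss_node} gives $\bosslabel{\node} \in \langdec{\decsymb}$, i.e.\ $\bosslabel{\node} \subword w_0' \cdots w_\ell'$ with each $w_j'$ an expansion of $w_j$ by letters of $\set{m_1, \ldots, m_j}$; cloning the follower children enough makes the messages broadcast with value $v$ contain $w_0' \cdots w_\ell'$, hence $\bosslabel{\node}$ and thus $\bossspec \subword \bosslabel{\node}$, so the run satisfies $\bossspec$. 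In the "follower" case $v$ is "non-initial", its receptions remain unmatched with $\vinput{v}{\run} = \vinput{v}{\localrun} \subword \followlabelword{\node}$ by \ref{item:condition3_follower_node}, while $\voutput{v}{\localrun} \subword \voutput{v}{\run}$ contains $\followlabelmessage{\node}$, as required.
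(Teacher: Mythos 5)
Your proof is correct and follows essentially the same route as the paper's: induction over the tree, boss children supplying the non-initial values, and a decomposition-driven, phase-by-phase scheduling in which cloned follower runs provide unlimited copies of each $m_i$. The only difference is presentational: the paper formalizes your phase scheduling as an iterated composition operation whose termination is justified by a lexicographic measure on the unmatched receptions of $m_\ell,\dots,m_1$, but the underlying construction is the same.
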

\begin{proof}
We prove it by induction on the "size" of the "unfolding tree".
Let $\tree$ be a "unfolding tree", $\node$ its root, $u := \localrunlabel{\node}$ and $V$ the set of values appearing in $u$. 
We will combine $u$ with runs given by children of $\node$ to construct $\run$ satisfying the desired property. We build $\rho$ inductively in two steps. 

\subsubsection{Step 1: "Non-initial Values"}
\label{sec:tree-to-run-step-one}

For each non-initial value $v \ne \valuelabel{\node}$ of $u$, $\node$ has a "boss" child of specification $w$ such that $\vinput{v}{u} \subword w$.
By induction hypothesis, there is an "initial run" $\run'$ satisfying "boss specification" $w$. Up to renaming agents, assume that $\run$ and $\run'$ have disjoint agents.
We rename values in $\run'$ so that $w$ is broadcast in $\run'$ with value $v$, and $\run'$ has no other shared value with $\run$. 
We use the broadcasts made by $\run'$ to match the "unmatched receptions" with value $v$ in $\run$: this gives us a new partial "run" $\run$ with no "unmatched reception" with value $v$ and whose behaviour on other values of $V$ is the same as before.

\subsubsection{Step 2: "Initial Values"}
\label{sec:tree-to-run-step-two}


Let $v$ be an "initial value" of $u$, and $\decsymb = \\(w_0, m_1, w_1, \ldots, m_\ell, w_\ell)$ the "decomposition" from condition \ref{item:condition2_initial_value}. We have that, for all $j \in \nset{1}{\ell}$, $\node$ has a "follower" child $\node_j$ labelled by $\followlabelmessage{\node_j} = m_j$ and $\followlabelword{\node_j} \in \langdec{\decsymb_j}$ with $\decsymb_j = (w_0, m_1, w_1, \ldots, m_{j-1}, w_{j-1})$. 

The behavior of the run $\run$ with respect to $v$ is the one of $u$, as we have not added any broadcasts or receptions with $v$. Hence we can split $\run$ into $\run_0, \ldots, \run_\ell$ with $w_i \subword \voutput{\aval}{\run_i}$ and $\vinput{\aval}{\run_i} \in \set{m_1, \dots, m_{i}}^*$ for all $i$.

By induction hypothesis applied to $\node_j$, for all $j$, there exists an "initial partial run" $\Tilde{\run}_j$ whose only "unmatched receptions" are on $v$, $\vinput{v}{\Tilde{\run_j}} \subword \followlabelword{\node_{j}}$ and such that $\Tilde{\run}_j$ broadcasts $(m_j,v)$ in its last step. Again, we rename agents and values so that the sets of agents of $\run$ and of every $\Tilde{\run_j}$ are all disjoint and the only shared value between any of these runs is $v$.
As $\followlabelword{\node_{j}} \in \langdec{\decsymb_j}$ and $\vinput{v}{\Tilde{\run_j}} \subword \followlabelword{\node_{j}}$, we can split $\Tilde{\run}_j$ into $\Tilde{\run}_{j, 0}, \ldots, \Tilde{\run}_{j,j-1}$ so that  $\vinput{v}{\Tilde{\run}_{j, i}} \subword \Tilde{w}_{j,i}$ where $\Tilde{w}_{j,i}$ can be obtained by adding letters from $\set{m_1, \ldots, m_j}$ to $w_i$.

We use the following composition operation: consider $\run$ and one of the $\Tilde{\run}_j$. We can design a new run in which we execute both runs in parallel over disjoint sets of agents. We match each $\Tilde{\run}_{j, i}$ with $\run_i$ so that the broadcasts of $\run_i$ with value $v$ forming $w_i$ are received in $\Tilde{\run}_{j, i}$ and the only remaining missing broadcasts in that section of the run are of $m_1, \ldots, m_i$. 
We obtain a run section whose $v$-"output" still contains $w_i$ and whose "$v$-input" only contains $m_1, \ldots, m_i$.
This lets us get to a point where the next step in $\Tilde{\run}_j$ is a broadcast $(m_j,v)$ and $\run$ has been executed up to the beginning of $\run_j$. We may then use the $(m_j,v)$ broadcast at any moment in the rest of $\run$ either to complete an "unmatched reception" or to extend the "$v$-output" of $\run$.

 This construction is illustrated in Figure
 \begin{figure}
	 	\begin{tikzpicture}[xscale=0.9]
	
	\draw[->] (-0.5,-1.8) -- (3.5,-1.8);
	\draw[->] (5.2,-1.8) -- (7.2,-1.8);
	\draw[->] (8.9,-1.8) -- (11.9,-1.8);
	
	\node (0) at (-1,-1.8) {$\localrunlabel{\node}$};
 	\node (1) at (4.7,-1.8) {$\run_1$};
 	\node (2) at (8.4,-1.8) {$\run_2$};

	\node (b) at (0,-1.6) {$\brone{\color{blue}a\color{black}}$};
	\node (b) at (1,-2.1) {$\recsymb({\color{blue}b\color{black}})$};
	\node (b) at (2,-1.6) {$\brone{\color{blue}a\color{black}}$};
	\node (b) at (3,-2.1) {$\recsymb({\color{blue}c\color{black}})$};
	
	\node (b) at (5.6,-2.1) {$\recsymb({\color{blue}a\color{black}})$};
	\node (b) at (6.6,-1.6) {$\brone{\color{blue}b\color{black}}$};
	
	\node (b) at (9.4,-2.1) {$\recsymb({\color{blue}a\color{black}})$};
	\node (b) at (10.4,-2.1) {$\recsymb({\color{blue}b\color{black}})$};
	\node (b) at (11.4,-1.6) {$\brone{\color{blue}c\color{black}}$};

	\node (1) at (1.5,-2.6) {$\bosslabel{\node}=aac$};
	\node (1) at (6.5,-2.6) {$\run_1 \text{ satisfies } (a,b)$};
	\node (1) at (10.5,-2.6) {$\run_2 \text{ satisfies } (ab,c)$};

	\node (1) at (1.5,-3) {$\decsymb = (a,b, a, c, \epsilon)$};
	\node (1) at (6.5,-3) {$\decsymb_1 = (a)$};
	\node (1) at (10.5,-3) {$\decsymb_2 = (a, b, a)$};

	\draw[->] (-0.5,-4) -- (5.5,-4);
	\draw[->] (-0.5,-5) -- (3.5,-5);
	\draw[dashed] (-0.6, -3.6) rectangle (0.5, -5.1);
	
	\node (b) at (0,-4.8) {$\brone{\color{blue}a\color{black}}$};
	\node (b) at (1,-5.2) {$\recsymb({\color{blue}b\color{black}})$};
	\node (b) at (2,-4.8) {$\brone{\color{blue}a\color{black}}$};
	\node (b) at (3,-5.2) {$\recsymb({\color{blue}c\color{black}})$};

	
	\node (b) at (0,-4.2) {$\recsymb({\color{blue}a\color{black}})$};
	\node (b) at (4,-4.2) {$\recsymb({\color{blue}b\color{black}})$};
	\node (b) at (5,-3.8) {$\brone{\color{blue}c\color{black}}$};
	
	\node (0) at (-1,-4) {$\run_2$};
	\node (0) at (-1,-5) {$\localrunlabel{\node}$};
	
	\draw (-1.7, -5.5) rectangle (7, -3.5);
	\node[text width =4cm] (txt2) at (9.5, -4.3) {We compose $\localrunlabel{\node}$ with $\run_2$ to obtain a "partial run" $\run'$ which outputs the desired word $aac$};
	
	\draw[->] (-0.5,-6.5) -- (4.5,-6.5);
	\draw[->] (-0.5,-7.5) -- (6.5,-7.5);
	\draw[dashed] (-0.6, -6.2) rectangle (0.5, -7.6);
	\draw[dashed] (3.5, -6) rectangle (4.5, -7.9);
	
	\node (b) at (0,-7.3) {$\brone{\color{blue}a\color{black}}$};
	\node (b) at (1,-7.7) {$\recsymb({\color{blue}b\color{black}})$};
	\node (b) at (2,-7.3) {$\brone{\color{blue}a\color{black}}$};
	\node (b) at (4,-7.7) {$\recsymb({\color{blue}c\color{black}})$};
	\node (b) at (5,-7.7) {$\recsymb({\color{blue}b\color{black}})$};
	\node (b) at (6,-7.3) {$\brone{\color{blue}c\color{black}}$};
	
	\node (b) at (0,-6.7) {$\recsymb({\color{blue}a\color{black}})$};
	\node (b) at (3,-6.7) {$\recsymb({\color{blue}b\color{black}})$};
	\node (b) at (4,-6.3) {$\brone{\color{blue}c\color{black}}$};
	
	\node (0) at (-1,-6.5) {$\run_2$};
	\node (0) at (-1,-7.5) {$\run'$};
	
	\draw (-1.7, -8) rectangle (7, -5.9);
	\node[text width = 4cm] (txt1) at (9.5, -6.5) {We compose $\run'$ with $\run_2$ to match an "unmatched reception" $b$};
	
%
%
	
%
	
	\node (d) at (5,-8.3) {\Huge \vdots};
 	\node[text width=4cm] (d) at (9.5,-8.5) {... and iterate compositions with $\run_{1}$ and $\run_2$ until the "run" does not have any "unmatched reception".};
\end{tikzpicture}
	 	\caption{An illustration of the composition operation from the proof of Lemma~\ref{lem:tree-to-run}}
	 	\label{fig:tree-to-run}
	 \end{figure}

The resulting run $\run'$ can still be split into $\run'_0 \cdots \run'_\ell$ where $w_i \subword \voutput{\aval}{\run_i}$ and $\vinput{\aval}{\run_i} \in \set{m_1, \dots, m_{i}}^*$ for all $i$. Its input on all values other than $v$ is the same as the one of $\run$. This procedure can thus be iterated.

If $\val = \valuelabel{\node}$ then we have $\bosslabel{\node} \in \langdec{\decsymb}$, and we need to ensure that $\bosslabel{\node}$ is broadcast in $\run$ with value $\val$.
Let $\bossspec_0, \ldots, \bossspec_\ell$ such that $\bosslabel{\node} \subword \bossspec_0\cdots\bossspec_\ell$ and for all $j$, $\bossspec_j$ can be obtained by adding letters from $\set{m_1,\ldots, m_j}$ in $w_j$.
We use the composition operation to extend the output of $\run$, so that $\bossspec_i \subword \voutput{v}{\run_i}$ for all $i$.

Then, in all cases, we apply the composition as many times as necessary to match the "unmatched receptions": while there is an "unmatched reception" of some $(m_j,v)$ we compose $\run$ with $\Tilde{\run}_j$ to eliminate it. We may be adding some "unmatched receptions" of $(m_i, v)$ for some $i<j$.
This procedure terminates as the number of "unmatched receptions" of $m_\ell, \ldots, m_1$ decreases at each iteration for the lexicographic ordering.

In the end we obtain a run $\run$ with no "unmatched reception" on $v$, and such that, if $v = \valuelabel{\node}$ and $\node$ is a "boss node", $\bosslabel{\node} \subword \voutput{v}{\run}$.

\subsubsection{Concluding the procedure}
We distinguish two cases depending on the type of "specification" of $\node$.
\begin{itemize}
\item If $\node$ is a "boss node" of "boss specification" $\bossspec$, we apply steps $1$ and $2$ to every value in $V$, which can be done because $\valuelabel{\node}$ is initial in $\localrun$. We then obtain an "initial run" $\run$ with no "unmatched receptions" satisfying "boss specification" $\bossspec$ for value $\valuelabel{\node}$.
\item if $\node$ is a "follower node", we apply steps $1$ and $2$ to every value in $V \setminus \set{\valuelabel{\node}}$ to obtain a "partial run" $\run$ with no "unmatched reception" on values different from $\valuelabel{\node}$. Moreover, because the behavior of $\run$ with respect to $\valuelabel{\node}$ is the one of $\localrun$, we have that $\vinput{\valuelabel{\node}}{\run} \subword \followwordspec$ and $\voutput{\valuelabel{\node}}{\run}$ contains $\followmessagespec$, concluding the proof. 
\end{itemize}
\end{proof}

	\section{Proof of Lemma~\ref{lem:shortening-branches}}
\label{app:proofs-reduction-branches}

\lemShorteningBranches*

\begin{proof}
	Let $\tree_{\node}$, $\tree_{\node'}$ be the subtrees rooted in $\node$, $\node'$ respectively. 
	Let $\tree'$ be the tree obtained by replacing $\tree_{\node}$ with $\tree_{\node'}$; it is strictly smaller than $\tree$. We prove that $\tree'$ is an "unfolding tree".
	The only problematic node is the parent of $\node$ if it exists, because it does not have the same children in $\tree$ and in $\tree'$. Assume that $\node$ is not the root of $\tree$, and let $\node_p$ its parent in $\tree'$.
	The only problematic values are the ones that had $\node$ as witness in conditions \ref{item:condition1_non_initial_value} (if $\node$ is a "boss node") and \ref{item:condition2_initial_value} (if $\node$ is a "follower node"). Let $v$ such a value. 
	If $\node$ is a "boss node" ($v$ is non-initial), we have $\vinput{v}{\localrunlabel{\node_p}} \subword \bosslabel{\node} \subword \bosslabel{\node'}$ hence condition \ref{item:condition1_non_initial_value} is also satisfied. If $\node$ is a "follower node" ($v$ is initial), then, reusing the notations of condition \ref{item:condition2_initial_value}, $\node$ was such that $\followlabelmessage{\node} = m_i$ and $\followlabelword{\node} \in \langdec{\decsymb_i}$. In this case, we also have $\followlabelmessage{\node'} = m_i$ and, because $\followlabelword{\node'} \subword \followlabelword{\node}$ and $\langdec{\decsymb_i}$ is closed by subword, we have $\followlabelword{\node'} \in \langdec{\decsymb_i}$ and condition \ref{item:condition2_initial_value} is satisfied. In both cases, $\tree'$ is an "unfolding tree" with the same root specification as $\tree$ hence a "coverability witness".
\end{proof}

	\section{Generalization of Lemma~\ref{lem:towerbound_signature}}
\label{app:tower-lemma}
		

	\lemShortLocalRuns*

	The function $\towerfun$ is actually not the same as in Lemma~\ref{lem:towerbound_signature} although it is also a tower of exponentials.
	
	We start by defining the notion of "trace", which is a "local run" annotated with received values. 
	\AP A ""trace"" is a sequence in $(\set{\extlabel{\atrans}{\aval} \mid \atrans \in \transitions, \aval \in \nats} \cup \set{\intlabel{\atrans} \mid \atrans \in \transitions})^*$. The "trace" of a "local run" $\localrun$ is the "trace" $\trace{\localrun}$ corresponding to the "local steps" performed in $\localrun$. Given a "trace" $\atrace$, we write $(q,\localdata) \step{\atrace} (q',\localdata')$ when there exists a "local run" of "trace" $\atrace$ from $(q,\localdata)$ to $(q',\localdata')$.



We actually prove a more general version of the lemma. The previous lemma can be obtained simply by applying the following one with $W$ the set of "initial" values of $\localrun_0$ ($W$ then contains $r$ values) and $V$ the set of values appearing in $u_0$, $u$ or $u_f$.

\begin{lemma}
	There exists a primitive recursive function $\towerfun(n,r)$ such that, for every protocol $\prot$ with $r$ registers per agent, for every "local run" $\localrun: (q, \localdata) \step{*} (q', \localdata')$ in $\prot$, for every $V \subseteq \nats$ finite such that $V$ contains all message values appearing in $\localrun$,  for every $\Vinit \subseteq V$, there exists a "local run" $\localrun': (q, \localdata) \step{*} (q', \localdata')$ such that we have $\length{\localrun'} \leq \towerfun(\size{\prot} -r + \size{\Vinit},r)$ and:
	\begin{enumerate}
		\item \label{item:shorterrun_oldvalues} for all $\aval \in V$, $\vinput{\aval}{\localrun'} \subword \vinput{\aval}{\localrun}$,
		
		\item \label{item:shorterrun_anyvalue} for all $\aval' \in \nats \setminus V$, there exists $\aval \in \nats \setminus \Vinit$ such that $\vinput{\aval'}{\localrun'} \subword \vinput{\aval}{\localrun}$.
	\end{enumerate}
\end{lemma}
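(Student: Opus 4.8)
The plan is to run the same induction as in the proof of \cref{lem:towerbound_signature}: an induction on the number of \emph{active} registers of $\localrun$ (a register being active if $\localrun$ contains a store action on it), with $\towerfun$ a tower of exponentials whose height is that number of active registers (at most $r$) and whose base is $\size{\prot}-r+\size{\Vinit}$; this $\towerfun$ is not the one of \cref{lem:towerbound_signature} but is again elementary in each slice. Throughout, the only data modified by a shortening step are the values attached to reception steps, so the shortened sequence is automatically a valid local run from $(q,\localdata)$ to $(q',\localdata')$ and I only have to verify the two input conditions.

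First I would treat the base case of no active register: then every register is constant along $\localrun$, equality and disequality tests only refer to the fixed values $\localdata(1),\dots,\localdata(r)$, and any repetition of the control state can be removed while keeping the original received values. This leaves a run of length at most $\size{\prot}-r+\size{\Vinit}+1=\towerfun(\size{\prot}-r+\size{\Vinit},0)$; condition~\ref{item:shorterrun_oldvalues} holds because inputs can only shrink, and condition~\ref{item:shorterrun_anyvalue} is vacuous since no value outside $V$ is introduced. For the inductive step, as in \cref{lem:towerbound_signature}, I distinguish two cases. If some infix longer than $M:=\towerfun(\size{\prot}-r+\size{\Vinit},k)$ uses at most $k$ active registers, I shorten it by the induction hypothesis. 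Otherwise every window of length $M$ stores on each of the $k+1$ active registers, so, $\localrun$ being long enough, a pigeonhole argument yields two disjoint occurrences, first $\localrun_1$ then $\localrun_2$, of a common sequence $s\in\transitions^{2M}$ of transitions, and I excise everything strictly between them, keeping a single copy of $s$.

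The heart of the argument is the choice of received values for the surviving run $\localrun'$, and this is where the general case departs from the signature one. Since $V$ contains \emph{all} values occurring in $\localrun$, every reception that is kept may retain its original value, which immediately gives $\vinput{\aval}{\localrun'}\subword\vinput{\aval}{\localrun}$ for each $\aval\in V$ occurring in $\localrun$; for the $\aval\in V$ not occurring in $\localrun$, $\localrun'$ receives nothing with that value, so condition~\ref{item:shorterrun_oldvalues} holds there too, provided every fresh value is chosen outside $V$ (and not merely outside the values of $\localrun$, so that no collision with $\localrun_0$ or $\localrun_f$ is created downstream). The only genuinely new values are the ones I introduce, exactly as in \cref{lem:towerbound_signature}, to stop receptions coming from $\localrun_1$ and from $\localrun_2$ from interleaving inside $s$: for each active register $i$, I keep $\localrun_1$-values before its first store in $s$, $\localrun_2$-values after its last store, and substitute pairwise-fresh values in between, each inheriting the input of the value $\aval$ it replaces, so that $\vinput{\aval'}{\localrun'}\subword\vinput{\aval}{\localrun}$.

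The main obstacle, and the reason for the extra parameter $\Vinit$, is to guarantee in condition~\ref{item:shorterrun_anyvalue} that the witnessing value $\aval$ can be taken \emph{outside} $\Vinit$: a fresh value must behave like a non-initial value, so that back in the tree it can be served by a boss child (as in condition~\ref{item:condition1_non_initial_value}) rather than by the decomposition attached to an initial value. To secure this I would never freshen an occurrence of a $\Vinit$-value; concretely, I enrich the local configuration used in the pigeonhole with a record of which registers currently hold which of the at most $\size{\Vinit}$ values of $\Vinit$, so that the two occurrences of $s$ agree on these positions and every $\Vinit$-value is preserved verbatim across the excision, all substitutions then replacing only values of $V\setminus\Vinit$. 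This enrichment multiplies the number of relevant configurations by a factor bounded in terms of $\size{\Vinit}$, which is what the $\size{\Vinit}$ term in the base of the tower is there to absorb. Iterating the excision until no further shortening applies leaves a run of length at most $\towerfun(\size{\prot}-r+\size{\Vinit},r)$, which completes the proof; the special case $\size{\Vinit}=r$ recovers the bound $\towerfun(\size{\prot},r)$ of \cref{lem:short-local-runs}.
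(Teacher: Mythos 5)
Your overall architecture matches the paper's: the same induction on the number of active registers, the same two-case split, the same excision of everything between two matching occurrences of a window $s \in \transitions^{2M}$, the same per-register before/middle/after assignment of received values with an injective renaming $\phi$ that fixes $\Vinit$ and sends $V \setminus \Vinit$ outside $V$, and the same iteration. The one place where you depart from the paper is the invariant used in the pigeonhole step, and that is exactly where your proof has a genuine gap.

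You enrich the configurations with a record of which registers currently hold which values of $\Vinit$. This controls store actions (and, indirectly, equality tests, whose received value is forced by the register content), but it says nothing about receptions of $\Vinit$-values via the actions $\quotemarks{\dummyact}$ and $\quotemarks{\diseqtestact}$: such a reception leaves no trace in the registers, so the two occurrences of $s$ may receive a value $v \in \Vinit$ at completely different positions while satisfying your invariant at every step. This matters because $v$ is precisely a value you refuse to freshen, so the merged window receives $v$ both at positions whose value is taken from the first occurrence (before and middle regions) and at positions whose value is taken from the second occurrence (after regions), and these can interleave: a middle position $j \in [e_i, f_i)$ of one register can lie to the right of an after position $j' \geq f_{i'}$ of another register (or of a position of an inactive register, which always keeps the first occurrence's value). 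One then gets message types $m\,m'$ in the $v$-input of $\localrun'$ where the original run only offers $m' \cdots m$, and condition 1 fails for $v \in \Vinit \subseteq V$; your fresh values block this interleaving only for values of $V \setminus \Vinit$. The paper's invariant is strictly stronger: it records, for each of the $2M$ steps of the window, the \emph{received value itself} whenever it lies in $\Vinit$ (the sequence $\Theta$), and pigeonholes on the pair $(s,\Theta)$. This forces the two occurrences to receive identical $\Vinit$-values at identical positions, so the $v$-input of the merged window equals that of a single occurrence no matter which region each position falls in. Your counting slack absorbs this stronger invariant just as well (a factor $(\size{\Vinit}+1)^{2M}$), so the repair is local, but as written your invariant does not suffice. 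A secondary, minor omission: you assert that the merged sequence is automatically a valid local run, but the equality and disequality tests in the middle region do require checking against the injectively renamed register contents (the paper devotes an induction over the prefix of the trace to this); the check does go through with your construction.
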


Intuitively, the set $V$ represents values that are already used and therefore cannot be used as fresh values, and $\Vinit$ represents values that should not be copied (\emph{e.g.}, initial values of the run). However, stating the lemma only with $\Vinit$ equal to the initial values in $u$ would not allow us to apply the lemma on $u$ an infix "local run" of a larger "local run" $u'$, because the initial values in $u$ would not correspond to the ones in $u'$. This is why we state the lemma for every $W \subseteq V$.  

\begin{figure}[h]
		\begin{tikzpicture}
		
		
		
		

		
		
		
		
		
		\node at (4,1) [align = center] {Original \\ local run};
		
		\node at (5.5,1.7) {reg $1$};
		\node at (5.5,1.2) {reg $2$};
		\node at (5.5,0.7) {reg $3$};
		\node at (5.5,0.2) {reg $4$};
		\draw (6,0) rectangle (10.9,0.5);
		\draw (6,0.5) rectangle (10.3,1);
		\draw (6,1) rectangle (10.7,1.5);
		\draw (6,1.5) rectangle (11.2,2);
		
		\draw[white,fill=blue!20] (7,0) rectangle (10,2);
		\draw[white,fill=green!20] (11.5,0) rectangle (14.5,2);
		
		\draw[white,fill=blue!30] (7,0) rectangle (7.5,0.5);
		\draw[white,fill=BlueGreen!40] (7.5,0) rectangle (8.8,0.5);
		\node at (8.15,0.2) {$v$};
		\draw[white,fill=blue!45] (8.8,0) rectangle (10,0.5);
		\draw (6,0) rectangle (6.5,0.5);
		\draw (6,0) rectangle (7.5,0.5);
		\draw (6,0) rectangle (8.8,0.5);
		
		\draw[white,fill=green!47] (11.5,0) rectangle (12,0.5);
		\draw[white,fill=BlueGreen!40] (12,0) rectangle (13.3,0.5);
		\node at (12.65,0.2) {$v$};
		\draw[white,fill=green!57] (13.3,0) rectangle (14.5,0.5);
		\draw (6,0) rectangle (12,0.5);
		\draw (6,0) rectangle (13.3,0.5);
		
		\draw[white,fill=blue!50] (7,0.5) rectangle (7.7,1);
		\draw[white,fill=blue!20] (7.7,0.5) rectangle (9.4,1);
		\draw[white,fill=blue!35] (9.4,0.5) rectangle (10,1);
		\draw (6,0.5) rectangle (6.3,1);
		\draw (6,0.5) rectangle (7.7,1);
		\draw (6,0.5) rectangle (9.4,1);
		
		\draw[white,fill=green!70] (11.5,0.5) rectangle (12.2,1);
		\draw[white,fill=green!20] (12.2,0.5) rectangle (13.9,1);
		\draw[white,fill=green!40] (13.9,0.5) rectangle (14.5,1);
		\draw (6,0.5) rectangle (12.2,1);
		\draw (6,0.5) rectangle (13.9,1);
		
		\draw[white,fill=blue!35] (7,1) rectangle (8.1,1.5);
		\draw[white,fill=blue!55] (8.1,1) rectangle (9.5,1.5);
		\draw[white,fill=blue!25] (9.5,1) rectangle (10,1.5);
		\draw (6,1) rectangle (6.7,1.5);
		\draw (6,1) rectangle (8.1,1.5);
		\draw (6,1) rectangle (9.5,1.5);
		
		\draw[white,fill=green!57] (11.5,1) rectangle (12.6,1.5);
		\draw[white,fill=green!60] (12.6,1) rectangle (14,1.5);
		\draw[white,fill=green!25] (14,1) rectangle (14.5,1.5);
		\draw (6,1) rectangle (12.6,1.5);
		\draw (6,1) rectangle (14,1.5);
		\draw (6,1) rectangle (14.7,1.5);
		
		\draw[white,fill=blue!15] (7,1.5) rectangle (7.3,2);
		\draw[white,fill=blue!35] (7.3,1.5) rectangle (8.9,2);
		\draw[white,fill=blue!25] (8.9,1.5) rectangle (10,2);
		\draw (6,1.5) rectangle (6.5,2);
		\draw (6,1.5) rectangle (7.3,2);
		\draw (6,1.5) rectangle (8.9,2);
		\node at (6.8,2.4) [align = center] {$\quotemarks{\enregact}$ actions};
		\draw[->] (6.5,2.2) -- (6.5,2);
		\draw[->] (7.3,2.2) -- (7.3,2);

		\draw[white,fill=green!70] (11.5,1.5) rectangle (11.8,2);
		\draw[white,fill=green!40] (11.8,1.5) rectangle (13.4,2);
		\draw[white,fill=green!15] (13.4,1.5) rectangle (14.5,2);
		\draw (6,1.5) rectangle (11.8,2);
		\draw (6,1.5) rectangle (13.4,2);
		
		\draw[<->] (8.5, 2.2) -- (13,2.2);
		\node[align = center] at (10.75,2.4) {removed part};

		\draw[dashed] (8.5,0) -- (8.5,2);
		\draw[dashed] (13,0) -- (13,2);

		\begin{scope}[xshift = 2cm, yshift = -0.5cm]
		
		\node at (4, -2) [align = center] {Shortened \\ local run};
		\node at (5.5,-1.3) {reg $1$};
		\node at (5.5,-1.8) {reg $2$};
		\node at (5.5,-2.3) {reg $3$};
		\node at (5.5,-2.8) {reg $4$};
		\draw[<->] (7,-0.8) -- (10,-0.8);
		\node at (8.5, -0.6) {$s, \Theta$};
		\draw[white,fill=blue!30] (7,-3) rectangle (7.5,-2.5);
		\draw[white,fill=BlueGreen!40] (7.5,-3) rectangle (8.8,-2.5);

		\node at (8.15,-2.8) {$v$};
		\draw[white,fill=green!57] (8.8,-3) rectangle (10,-2.5);
		
		\draw[white,fill=blue!50] (7,-2.5) rectangle (7.7,-2);
		\draw[white,fill=brown!40] (7.7,-2.5) rectangle (9.4,-2);
		\draw[white,fill=green!40] (9.4,-2.5) rectangle (10,-2);
		
		\draw[white,fill=blue!35] (7,-2) rectangle (8.1,-1.5);
		\draw[white,fill=red!55] (8.1,-2) rectangle (9.5,-1.5);
		\draw[white,fill=green!25] (9.5,-2) rectangle (10,-1.5);
		
		\draw[white,fill=blue!15] (7,-1.5) rectangle (7.3,-1);
		\draw[white,fill=yellow!35] (7.3,-1.5) rectangle (8.9,-1);
		\draw[white,fill=green!15] (8.9,-1.5) rectangle (10,-1);
		
		\node at (8.5, -3.5) {fresh values};
		\draw[->] (8.7,-3.3) -- (8.7,-2.25);
		\draw[->] (8.6,-3.3) -- (8.6,-1.75);
		\draw[->] (8.4,-3.3) -- (8.4,-1.25);
		\draw[dashed] (8.5,-3) -- (8.5,-1);
		
		\draw (6,-3) rectangle (10.5,-2.5);
		\draw (6,-3) rectangle (10.5,-2);
		\draw (6,-3) rectangle (10.5,-1.5);
		\draw (6,-3) rectangle (10.5,-1);
		
		\draw (6,-3) rectangle (6.5,-2.5);
		\draw (6,-3) rectangle (7.5,-2.5);
		\draw (6,-3) rectangle (8.8,-2.5);
	
		\draw (6,-2.5) rectangle (6.3,-2);
		\draw (6,-2.5) rectangle (7.7,-2);
		\draw (6,-2.5) rectangle (9.4,-2);
		
		\draw (6,-2) rectangle (6.7,-1.5);
		\draw (6,-2) rectangle (8.1,-1.5);
		\draw (6,-2) rectangle (9.5,-1.5);
		\draw (6,-2) rectangle (10.2,-1.5);
		
		\draw (6,-1.5) rectangle (6.5,-1);
		\draw (6,-1.5) rectangle (7.3,-1);
		\draw (6,-1.5) rectangle (8.9,-1);
		\end{scope}

		\draw (6,0) rectangle (15,0.5);
		\draw (6,0) rectangle (15,1);
		\draw (6,0) rectangle (15,1.5);
		\draw (6,0) rectangle (15,2);
		
		\draw[<->] (7,-0.2) -- (10,-0.2);
		\draw[<->] (11.5,-0.2) -- (14.5,-0.2);
		
		\node (s1) at (8.5,-0.4) {$s, \Theta$};
		\node (s2) at (13,-0.4) {$s, \Theta$};
	\end{tikzpicture}
	\caption{Illustration of the proof of Lemma~\ref{lem:short-local-runs}. $s \in \transitions^{2M}$ corresponds to the repeating sequence of transitions of length $2M$. Register $4$ contains value $v \in W$. Because both sides have the same $\Theta$, they coincide on values in $W$ such as $v$; only values that are not in $W$ are replaced by fresh values, hence $v$ is kept in the shortened local run.}
	\label{fig:pumping}
\end{figure}
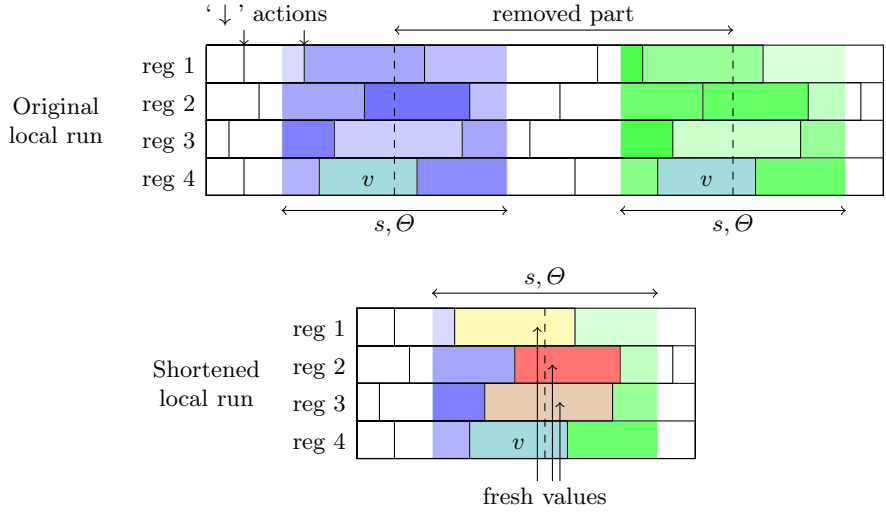

\begin{proof}
	Given a "local run" $\localrun$, register $i$ is ""active"" in $\localrun$ if at least one $\quotemarks{\enregact}$ step on register $i$ is performed in $\localrun$. 
	
	We define the function $\towerfun(n,k)$ recursively as $\towerfun(n,0) = n+1$ and $\towerfun(n,k+1) = 2(\towerfun(n,k) +1) [(n+1)^{4(\towerfun(n,k)+1)} +1]$. This function is clearly primitive recursive, although non-elementary; it grows as a tower of exponentials of height $k$ where each floor of the tower is polynomial in $n$.
	
	We prove the ""shortening property"": \\
	Let $W$ a finite set a values and $n:=\size{\prot} -r + \size{\Vinit}$.
	Let a "local run" $\localrun: (q_i, \localdata_i) \step{*} (q_f, \localdata_f)$ with $k$ "active registers" with $\length{\localrun} > \towerfun(n,k)$ and let $V \subseteq \nats$ finite that contains every message value appearing in $\localrun$ such that $W \subseteq V$. We claim that $\localrun$ can be shortened into a local run $\localrun': (q_i, \localdata_i) \step{*} (q_f, \localdata_f)$ with $k$ active registers such that $\length{\localrun'} < \length{\localrun}$ and:
	\begin{itemize}
		\item for all $\aval \in V$, $\vinput{\aval}{\localrun'} \subword \vinput{\aval}{\localrun}$,
		\item for all $\aval' \in \nats \setminus V$, there exists $\aval \in \nats \setminus W$ such that $\vinput{\aval'}{\localrun'}$ is a subword of $\vinput{\aval}{\localrun}$.
	\end{itemize}
	
	We proceed by induction on the number $k$ of "active registers" in the "local run". If $k=0$, register values do not change in $\localrun$. As $\towerfun(n,0) = n+1 \geq \size{Q}+1$, $\localrun$ goes through the same state twice, hence all steps in between may be removed to get $\localrun'$. Then for all $v \in \nats$,  $\vinput{\aval}{\localrun'} \subword \vinput{\aval}{\localrun}$.
	
	Suppose that the property is true for any "local run" with $\leq k$ "active registers", and consider one $\localrun: (q_i,\localdata_i) \step{*} (q_f,\localdata_f)$ with $k+1$ "active registers" such that $\length{\localrun} > \towerfun(\size{\prot} -r + \size{\Vinit},k+1)$.

	First, if there exists an infix "local run" $\localrun_i$ of $\localrun$ of length $\towerfun(n,k)+1$ with only $k$ active registers, then it suffices to apply the induction hypothesis on $u_i$.
	Suppose now that there exists no such infix "local run".
	Let $I \subseteq \nset{1}{r}$ the set of "active registers" in $\localrun$, $|I| = k+1$. Let $M:= \towerfun(n,k){+}1$, we have $\towerfun(n,k+1) = 2M [((n+1)^2)^{2M}+1]$. 
	In any sequence of $M$ "local steps" in a row in $\localrun$, 
	there is a $\quotemarks{\enregact}$ transition on every register in $I$. We can assume that  no local configuration appears twice in $\localrun$ (otherwise we can simply cut the steps between those two appearances to get $\localrun'$). 
	
	In what follows we will consider two infixes of $\localrun$ of length $2M$, following the same sequence of transitions and with the same values of $W$ appearing at the same times in the same registers. Their existence is guaranteed by the length of $\localrun$. As a $\quotemarks{\enregact}$ action is performed twice on each register in those two infixes, we will be able to reduce the run as in \cref{fig:pumping}.
	
	For every $i$, let $\atrans_i$ the $i$-th transition in $\localrun$, and let $\theta_i\in \Vinit \cup \set{\bot}$ be such that 
	\[
	\theta_i = 
	\begin{cases}
		& 	v \text{ if the $i$th step of $u$ is a "reception step" $\extlabel{\delta_i}{v}$ with $v \in \Vinit$ } \\ 
		& 	\bot \text{ otherwise}
	\end{cases}
	\]

	For every $i \in \nset{0}{(n+1)^{4M}}$, we write $s_i$ the sequence $\delta_{2  M \cdot i+1}, \delta_{2  M \cdot i+2}, \cdots, \delta_{2 M \cdot i+2M}$ and $\Theta_i$ the sequence $\theta_{2  M \cdot i}, \theta_{2 M \cdot i+1}, \cdots, \theta_{2 M \cdot i+2M}$.
	There are $|\transitions|^{2M}$ possible sequences for $s_i$ and $[\size{\Vinit} +1]^{2M}$ possible sequences for $\Theta_i$.
	By the pigeonhole principle there exist two indices $i_a, i_b$ such that the sequences $s_{i_a}$ and $s_{i_b}$ are equal and also the sequences $\Theta_{i_a}$ and $\Theta_{i_b}$ are equal (as $|\transitions| (\size{\Vinit}+1) \leq (n+1)^2$ because $n=\size{\prot} -r + \size{\Vinit}$). 
	There exist two infix "local runs" $\localrun_a: (q_1, \localdata_1) \step{*} (q_2, \localdata_2)$, $\localrun_b: (q_3, \localdata_3) \step{*} (q_4, \localdata_4)$ in $\localrun$ such that $(q_2,\localdata_2)$ appears strictly before $(q_3,\localdata_3)$ in $\localrun$ and $\localrun_a$ and $\localrun_b$ both have the same sequences of transitions, which we call $s$, and of receptions of values from $\Vinit$, which we call $\Theta$.
	
	Although $\localrun_a$ and $\localrun_b$ have the same sequence of transitions, their "traces" may differ because their "reception steps" may have different values.
	We build a "trace" $\atrace$ such that $(q_1,\localdata_1) \step{\atrace} (q_4, \localdata_4)$ where the underlying sequence of transitions of $\atrace$ is $s$ and the receptions of values of $\Vinit$ match $\Theta$.
	
	For every active register $i \in I$, let $e_i \in \nset{1}{2M}$ denote the index of the first $\quotemarks{\enregact}$ on register $i$ in $s$ and $f_i \in \nset{1}{2M}$ the index of the last $\quotemarks{\enregact}$ on register $i$ in $s$. By hypothesis, because $s$ is of length $2M$, it contains at least two $\quotemarks{\enregact}$ on register $i$, one in the first half and one in the second half, hence $e_i \leq M < M +1 \leq f_i$. 
	
	For every $j \in \nset{1}{2M}$, let $\delta_j$ denote the $j$-th transition of $s$. First, if  $\delta_j$ is a "broadcast", we define the $j$-th "local step" of $\atrace$ as $\intlabel{\delta_j}$. 
	Suppose now that $\delta_j$ is a "reception" of the form $\rec{\amessage}{i}{\anact}$. The $j$-th "local step" of $\localrun_a$ (resp. $\localrun_b$) has underlying transition $\delta_j$ hence is a "reception step" of the form $\extlabel{\delta_j}{\aval_a}$ for some $\aval_a \in V$ (resp. $\extlabel{\delta_j}{\aval_b}$ for some $\aval_b \in V$).
	Because $V$ is finite and $\Vinit \subseteq V$, there exists an injective function $\phi: V \rightarrow \nats$ such that for all $v \in \Vinit$, $\phi(v) = v$ and for all $v \notin \Vinit$, $\phi(v) \notin V$.
	We define the $j$-th "local step" of $\atrace$ to be $\extlabel{\delta_j}{\aval}$ where:
	\begin{itemize}
		\item if $i \notin I$ then its value stays the same throughout $\localrun$, then we set $\aval = \aval_a (= \aval_b)$ 
		\item if $i \in I$ then  
		\begin{itemize}
			\item if $j < e_i$, $\aval = \aval_a$,
			\item if $e_i \leq j < f_i$, $\aval = \phi(\aval_a)$,
			\item if $f_i \leq j$, $\aval = \aval_b$.
		\end{itemize}
	\end{itemize}
	We now claim that $(q_1,\localdata_1) \step{\atrace} (q_4,\localdata_4)$. 
	First, for every active register $i$, the last $\quotemarks{\enregact}$ step on register $i$ has value $\localdata_4(i)$ in $\atrace$ (as we are in the case $f_i \leq j$). Hence if every "local step" is valid then the final "local configuration" is $(q_4, \localdata_4)$.
	For every $l \in \nset{0}{2M}$, let $\atrace_l$ denote the prefix of $\atrace$ of length $l$.
	We prove by induction on $l$ that $\atrace_l$ is valid from $(q_1, \localdata_1)$. It is trivially true for $l =0$. Assume that we have $(q_1, \localdata_1) \step{\atrace_l} (q, \localdata)$ and let $\locallabel$ such that $\atrace_l \cdot \locallabel = \atrace_{l+1}$. Let $\atrans$ the underlying transition of $\locallabel$.
	First, $q$ is the initial state of $\atrans$ because $\locallabel$ is valid at step $l+1$ of $\localrun_a$ (and $\localrun_b$). Hence if $\atrans$ is a "broadcast" then $\locallabel$ is valid from $(q,\localdata)$. Suppose now that $\locallabel$ is a "reception step"; let $\atrans =: \rec{m}{i}{\alpha}$.
	Let $\aval_a$, $\aval_b$ be the value of $\locallabel$ in $\localrun_a$ and $\localrun_b$ respectively.

	Let $\localdata_a, \localdata_b$ be the content of registers after the $l$-th step in $\localrun_a$ and $\localrun_b$ respectively. 
	 If $i \notin I$ then $\alpha$ is either $\quotemarks{\dummyact}$ or a test, which is valid as  $\localdata(i) = \localdata_a(i) = \localdata_b(i)$ (value of register $i$ does not change in $u$).	
	Suppose now that $i \in I$, the only problematic case is the one of tests, \emph{i.e.}, $\anact \in \set{\quotemarks{\diseqtestact},\quotemarks{\eqtestact}}$. 
	In this case, we prove that $\binrel{\aval}{\anact}{\localdata(i)}$. First, because the corresponding step is valid in $\localrun_a$ and $\localrun_b$, we have $\binrel{\localdata_a(i)}{\anact}{\aval_a}$ and $\binrel{\localdata_b(i)}{\anact}{\aval_b}$. We distinguish cases depending on $l+1$:
	\begin{itemize}
		\item $l+1<e_i$: $\localdata(i)= \localdata_a(i)$, $\aval = \aval_a$ and $\binrel{\localdata_a(i)}{\anact}{\aval_a}$. 
		\item $e_i \leq l+1 < f_i$: We have $\aval = \phi(\aval_a)$. 
		Moreover, because $e_i < l+1$, there is at least one $\quotemarks{\enregact}$ on register $i$ in $\atrace_l$. Consider the last such transition in $\atrace_l$; its index $j$ satisfies $e_i \leq j < f_i$ by definition of $e_i$, hence the value of the corresponding "reception step" in $\localrun_a$ is $\localdata_a(i)$ and its value in $\atrace_{l}$ is $\phi(\localdata_a(i))$. One has $\binrel{\localdata_a(i)}{\anact}{\aval_a}$ therefore (by injectivity of $\phi$ for $\anact = \quotemarks{\diseqtestact}$)
		$\binrel{\phi(\localdata_a(i))}{\anact}{\phi(\aval_a)}$.
		\item$f_i \leq l+1$: $\localdata(i)= \localdata_b(i)$ and $\aval = \aval_b$, and because the "internal step" is valid in $\localrun_b$ we have $\binrel{\localdata_b(i)}{\anact}{\aval_b}$. 
	\end{itemize}
	
	This proves that $\locallabel$ is valid from $(q,\localdata)$ which concludes the induction. 
	We have proven that $(q_1,\localdata_1) \step{\atrace} (q_4, \localdata_4)$; moreover $\atrace$ is of length $2M$ and there are at least $4M>2M+1$ steps between $(q_1,\localdata_1)$ and $(q_4, \localdata_4)$ in $\localrun$. Therefore, replacing this part of $\localrun$ with $(q_1,\localdata_1) \step{\atrace} (q_4, \localdata_4)$ yields a shorter "local run" $\localrun': (q_i, \localdata_i) \step{*} (q_f,\localdata_f)$. 
	
	It remains to prove the conditions on the $v$-"input" of $\localrun'$ for each $v$. If suffices to prove the condition for the part between $(q_1, \localdata_1)$ to $(q_4, \localdata_4)$,
	because the rest of $\localrun$ is left untouched. 
	
	Let $(q_m, \localdata_m)$ the "local configuration" after $M$ steps of $\atrace$ from $(q_1, \localdata_1)$; write $\localrun_1$ the local run from $(q_1,\localdata_1)$ to $(q_m, \localdata_m)$ corresponding to the first $M$ steps of $\atrace$ in $\localrun'$, and $\localrun_2$ the "local run" from $(q_m, \localdata_m)$ to $(q_4, \localdata_4)$ corresponding to the last $M$ steps of $\atrace$ in $\localrun'$.  
	
	Let $\aval \in V$. We claim that $\vinput{\aval}{\localrun_{1}}$ is a subword of $\vinput{\aval}{\localrun_a}$ and $\vinput{\aval}{\localrun_{2}}$ is a subword of $\vinput{\aval}{\localrun_b}$. Indeed, in the construction of $\atrace$, the "reception steps" in the first $M$ steps were those of $\localrun_a$ except that some values were replaced with fresh values in $\nats \setminus V$, and similarly with $\localrun_b$ and the last $M$ steps. Overall, this proves that $\vinput{\aval}{\localrun'}$ is a subword of $\vinput{\aval}{\localrun}$ for every $\aval \in V$ and values of $V$ satisfy condition  \ref{item:shorterrun_oldvalues}. 
	
	Let $\aval' \in \nats \setminus V$; $\aval'$ does not appear in $\localrun$. Either $\aval'$ does not appear in $\localrun'$ in which case the desired property is true, or there exists $\aval \in V$ such that $\aval' = \phi(\aval)$. As $\phi$ is injective and $\phi(\Vinit) = \Vinit$ and $\aval'\notin \Vinit$, $\aval \notin \Vinit$. 
	But then $\vinput{\aval'}{\localrun_{1}}$ is a subword of $\vinput{\aval}{\localrun_a}$ and $\vinput{\aval'}{\localrun_{2}}$ is a subword of $\vinput{\aval}{\localrun_b}$. Indeed, in $\localrun_1$, the "reception steps" with value $\phi(\aval)$ correspond to "reception steps" in $\localrun_a$ with value $\aval$, and similarly for $\localrun_2$ and $\localrun_b$. This proves condition \ref{item:shorterrun_anyvalue} for every $\aval' \in \nats \setminus V$.
	Overall, we have proven the existence of a "local run" $\localrun': (q_i,\localdata_i) \step{*} (q_f,\localdata_f)$ that satisfies conditions \ref{item:shorterrun_oldvalues} and \ref{item:shorterrun_anyvalue} and that is strictly shorter that $\localrun$, which proves the "shortening property".
	
	We build a "local run" of length less that $\towerfun(\size{\prot}-r+\size{\Vinit},\regnum)$ as follows. We start with $\localrun^{(0)} := \localrun$ and $V^{(0)}$ the set of values of messages appearing in $\localrun^{(0)}$. For every $k$ such that $\length{\localrun^{(0)}} > \towerfun(\size{\prot}-r+\size{\Vinit},\regnum)$, we apply the "shortening property" on $\localrun^{(k)}$ and $V^{(k)}$ to obtain $\localrun^{(k+1)}$ and define $V^{(k+1)}$ by $V^{(k)} \cup X$ where $X$ is the set of values of messages in $\localrun^{(k+1)}$, which is finite.
	The construction stops when $\length{\localrun^{(k)}} \leq \towerfun(\size{\prot}-r + \size{\Vinit},\regnum)$, which concludes the proof of the lemma. 
\end{proof}

	\section{Bounding the Size of the Minimal Coverability Witness}
\label{app:proofs_bounds}

In this section, we prove that we can obtain a computable bound on $\tree$, and that the problem is decidable in complexity class $\Fcomplexity{\omega^\omega}$. All bounds provided in this section are independent from $\tree$. 
We start by proving that one can bound the size of a node with respect to the size of its neighbors of higher altitude. 

\begin{restatable}{lemma}{lemBoundSuccessorHeight}
	\label{lem:bound-successor-height}
	Let $\node$ be a node of $\tree$ such that all neighbors of $\node$ of higher altitude have size bounded by $K$.
	Then $\size{\node} \leq (\towerfun(\size{\prot}, r)+2) \, (\size{\messages} \, r \, K + K)$, with $\psi$ the primitive-recursive function defined in Lemma~\ref{lem:short-local-runs}. 
\end{restatable}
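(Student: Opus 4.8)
The plan is to control the two contributions to $\size{\node}=\size{\localrunlabel{\node}}+\size{\speclabel{\node}}+1$ separately, the "local run" being the real target. First I would pin down which neighbours of $\node$ have higher "altitude". A "follower node" child always sits one level above $\node$, whereas the parent sits above $\node$ exactly when $\node$ is a "boss node" (its parent then having "altitude" $\altitude{\node}+1$); write $\node_p$ for that parent. These are precisely the nodes to which $\node$ must deliver a sequence of "broadcasts": the parent of a boss node demands a subword of $\bosslabel{\node}$, and each "follower" child $\node_i$ demands the part of $\followlabelword{\node_i}$ that $\node$ broadcasts itself (the remaining letters being supplied by earlier follower children through the "decomposition"). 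By minimality of $\tree$ the boss specification is no longer than the parent's demand, hence $\size{\bosslabel{\node}}\le\size{\localrunlabel{\node_p}}\le K$ when $\node$ is a boss node, while a follower node broadcasts only the single message $\followlabelmessage{\node}$ to its parent.

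Next I would count the \emph{required} broadcasts of $\localrunlabel{\node}$, i.e. the broadcast steps that are received by some neighbour of higher altitude. Each such step is received at least once, so their number is at most the total input these neighbours demand from $\node$. Since the message types of a "decomposition" are pairwise distinct there are at most $\size{\messages}$ follower children per "initial value", and at most $r$ initial values occur in $\localrunlabel{\node}$, so there are at most $\size{\messages}\,r$ follower children; each has size $\le K$ and thus $\size{\followlabelword{\node_i}}\le K$. Adding the at most $K$ messages demanded by the parent, the number $N$ of required broadcasts satisfies $N\le \size{\messages}\,r\,K+K$ (the over-count coming from a broadcast received by several neighbours only helps the bound).

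These $N$ broadcasts act as landmarks inside $\localrunlabel{\node}$, cutting it into at most $N+1$ factors. I would then argue that, by minimality, each factor has length at most $\towerfun(\size{\prot},r)$: otherwise I apply Lemma~\ref{lem:short-local-runs} to that factor, taking the preceding part of the run as $\localrun_0$ and the following part as $\localrun_f$, to replace it by a factor of length $\le\towerfun(\size{\prot},r)$ with the same endpoints. A careful count over the landmarks and factors then yields $\size{\localrunlabel{\node}}\le (\towerfun(\size{\prot},r)+2)\,(\size{\messages}\,r\,K+K)$; and since $\size{\bosslabel{\node}}\le K$ for a boss node and, by minimality, $\size{\followlabelword{\node}}=\size{\vinput{\valuelabel{\node}}{\localrunlabel{\node}}}\le\size{\localrunlabel{\node}}$ for a follower node, the specification is absorbed into the same bound, giving the stated inequality for $\size{\node}$.

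The main obstacle is the minimality step applied to a single factor: after shortening it I must exhibit a genuine "unfolding tree" that is \emph{strictly} smaller, so that minimality of $\tree$ is really contradicted. For the values already present, Lemma~\ref{lem:short-local-runs}\,(1) guarantees that their inputs only shrink for the "subword" order, so the children already attached to $\node$ keep witnessing conditions \ref{item:condition1_non_initial_value} and \ref{item:condition2_initial_value}; the delicate case is the fresh values introduced by the shortening, for which Lemma~\ref{lem:short-local-runs}\,(2) supplies a non-initial value $\aval$ with $\vinput{\aval'}{\localrun'}\subword\vinput{\aval}{\localrun}$, so that a copy of the "boss" child handling $\aval$ can serve $\aval'$. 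Verifying that this rebuilding preserves all four conditions of Definition~\ref{def:unfolding_tree} while strictly decreasing $\size{\tree}$ is where the real work lies, and it is exactly what the subword guarantees in Lemma~\ref{lem:short-local-runs} were designed to make possible.
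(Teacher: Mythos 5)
Your proposal is correct and follows essentially the same route as the paper's proof: bound the number of required broadcasts by $\size{\messages}\,r\,K + K$ (follower children plus, for a boss node, the parent's demand), shorten the local run between these broadcasts via Lemma~\ref{lem:short-local-runs}, and use the subword guarantees of that lemma — including item (2) for fresh values, served by an existing boss child — to check that the shortened tree is still an unfolding tree, contradicting minimality. The only differences are presentational (your explicit landmark/factor count and the remark on absorbing the specification size), not mathematical.
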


\begin{proof}
	Let $u := \localrunlabel{\node}$. For each "initial" value $v$ in $u$, $\node$ has at most $\size{\messages}$ "follower" children, by definition of a "decomposition".
	Because there are at most $r$ "initial" values, this makes at most $\size{\messages} \, r$ "follower" children and each one of them requires at most $K$ messages. If $\node$ is a "follower node" or the root, then the number of messages it must broadcast is then bounded by $\size{\messages} \, r \, K +1$. If $\node$ is a "boss node" and not the root, then its parent
	has higher altitude hence $\size{\bossspec(\node)} \leq K$ and $\node$ must broadcast at most $\size{\messages} \, r \, K + K$ messages in total.
	This bounds by $\size{\messages} \, r \, K + K +1$ the number of messages $u$ needs to broadcast. Lemma~\ref{lem:short-local-runs} then lets us reduce the number of steps between those messages without breaking the "unfolding tree" conditions: the inputs of values that were already in the run can only decrease. For the other values, let $v$ be a value that was not in the run before, its input is a subword of the one of a previous value. Hence $\node$ has a "boss" child whose "specification" covers the "$v$-input" of the new local run. 
	This bounds $\size{u}$ by $(\towerfun(\size{\prot},r)+1) \, (\size{\messages}\, r \, K + K)$, with $\psi$ the function described in Lemma~\ref{lem:short-local-runs}; the bound on $\size{\node}$ follows.
\end{proof}

We now formally prove Lemma~\ref{lem:bound-length-at-height-h}:
\lemBoundLengthHeightH*

\begin{proof}
Let $f_0: k \mapsto ((\towerfun(k,k)+2) (k^2 +2))^{k+1}$. Also, let $N := \size{\prot}$.
First, if $\node$ has altitude $\altmax$, then it has no "follower" child, hence applying Lemma~\ref{lem:bound-successor-height} with $K=0$ bounds $\size{\node}$ by $\towerfun(\size{\prot},r) \leq \towerfun(N,N) \leq f_0(N)$. Let $\node$ be a node with $\altitude{\node} < \altmax$, and suppose that the statement is true for altitudes greater than $\altitude{\node}$. ~

Let $d =~\altmax - ~\altitude{\node}$. We apply Lemma~\ref{lem:bound-successor-height} with $K := f_0(N + d - 1)$: 
\begin{align*}
\size{\node} \leq & (\towerfun(N,N)+2) (N^2 K + K)\\ \leq & (\towerfun(N+d,N+d)+2) f_0(N+d-1) ((N+d)^2 +2)   \\ \leq & f_0(N+d) 
\end{align*}
\end{proof}

\begin{restatable}{lemma}{lemBoundMaxHeight}
	\label{lem:bound-max-height}
	There exists a function $f_1 \in \Ffunction{\omega^{\size{\messages}}}$ such that $\altmax \leq f_1(\size{\prot})$.
\end{restatable}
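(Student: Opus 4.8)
If $\altmax = 0$ the statement is trivial, so assume $\altmax \ge 1$. The plan is to exhibit a "bad sequence" of length $\altmax$ and bound its length via the "Length function theorem". First I would fix a node $\node^\ast$ of $\tree$ with $\altitude{\node^\ast} = \altmax$ and look at the path from the root to $\node^\ast$. Along this path the altitude starts at $0$, ends at $\altmax$, and changes by exactly $\pm 1$ at each edge (decreasing by one at a "boss node", increasing by one at a "follower node"). Hence, by a discrete intermediate value argument, for each $j \in \nset{1}{\altmax}$ there is a first node $\node_j$ on the path with $\altitude{\node_j} = j$. Its parent lies earlier on the path and must have altitude $j-1$ (not $j+1$, since otherwise the path would have visited altitude $j$ before $\node_j$), so $\node_j$ is a "follower node"; moreover $\node_j$ is a strict ancestor of $\node_{j'}$ whenever $j < j'$.

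The key step is to turn the "follower specifications" of $\node_1, \dots, \node_{\altmax}$ into a bad sequence over an alphabet of size $\size{\messages}+1$. Let $\#$ be a fresh symbol and, for a "follower node" $\node$, set $x(\node) := \followlabelmessage{\node} \cdot \# \cdot \followlabelword{\node}$ over $\messages \cup \set{\#}$. As $\#$ occurs exactly once in each such word, $x(\node) \subword x(\node')$ forces $\followlabelmessage{\node} = \followlabelmessage{\node'}$ and $\followlabelword{\node} \subword \followlabelword{\node'}$. I would then order the nodes from deepest to shallowest by setting $x_i := x(\node_{\altmax+1-i})$ for $i \in \nset{1}{\altmax}$. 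For $i < i'$, $\node_{\altmax+1-i}$ is a descendant of $\node_{\altmax+1-i'}$, so $x_i \subword x_{i'}$ would yield an ancestor and a descendant, both "follower nodes", with equal follower messages and the descendant's word a subword of the ancestor's word; by Lemma~\ref{lem:shortening-branches} this contradicts the minimality of $\tree$. Thus no such pair exists and $x_1, \dots, x_{\altmax}$ is a "bad sequence".

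It remains to control the growth and conclude. Since $\altitude{\node_{\altmax+1-i}} = \altmax+1-i$, Lemma~\ref{lem:bound-length-at-height-h} bounds the $i$-th node's size by $f_0(\size{\prot}+i-1)$, hence $\size{x_i} \le f_0(\size{\prot}+i-1)+2$. Replacing $f_0$ by its monotone envelope (still primitive recursive) and setting $g(x) := f_0(x)+x+2$, a short induction gives $g^{(i-1)}(\size{\prot}) \ge \size{\prot}+i-1$ and therefore $\size{x_i} \le g^{(i)}(\size{\prot})$. Applying the "Length function theorem" to the alphabet $\messages \cup \set{\#}$, of size $\size{\messages}+1$, and to $g$ produces a function $f \in \Ffunction{\omega^{\size{\messages}}}$ bounding the number of terms of every such bad sequence, so $\altmax \le f(\size{\prot})$ and I would take $f_1 := f$. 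The main obstacle is the second step: encoding each pair $(\followlabelword{\node}, \followlabelmessage{\node})$ so that the subword order matches exactly the shortening criterion of Lemma~\ref{lem:shortening-branches} while keeping the alphabet of size $\size{\messages}+1$, which is precisely what pins the complexity to $\Ffunction{\omega^{\size{\messages}}}$.
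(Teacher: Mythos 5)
Your proof is correct and follows essentially the same route as the paper: pick the first node at each altitude along a branch reaching $\altmax$ (all "follower nodes"), encode each "follower specification" as a single word over $\messages\cup\set{\#}$ so that a subword pair would contradict Lemma~\ref{lem:shortening-branches}, bound the growth via Lemma~\ref{lem:bound-length-at-height-h}, and conclude with the "Length function theorem". The only differences are cosmetic (the order of concatenation around $\#$ and a slightly more explicit bookkeeping of the growth function $g$).
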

\begin{proof}
	Let $\altmax$ be the maximal altitude of a node in $\tree$. Consider a branch of $\tree$ reaching altitude $\altmax$; for every $j \in \nset{1}{\altmax}$, let $\node_j$ the first node of this branch to reach altitude $j$. For every $j \geq 1$, $\node_j$ is a "follower nodes" as otherwise its predecessor in the branch would have altitude greater than $j$ hence the branch must have crossed altitude $j$ before.
	
	We will bound $\altmax$ using the "Length function theorem". For simplicity, we will encode the $\followmessagespec$ part using a fresh character added to our alphabet. This is why we obtain a function in $\Ffunction{w^{\size{\messages}}}$ and not $\Ffunction{w^{\size{\messages}-1}}$; in fact, one could obtain the latter bound using Theorem 5.3. of~\cite{SchmitzS2011upperHigman}, but the proof would be more involved.

	Let $\# \notin \messages$ be a fresh letter. For all $i \in \nset{1}{\altmax}$ let $\node'_i = \node_{\altmax - i+1}$ and $w_i = \followlabelword{\node'_{i}} \cdot \# \cdot \followlabelmessage{\node'_{i}} \in (\messages \cup \set{\#})^*$.
	
	We cannot have $w_{i}\subword w_{j}$ for $i< j$: indeed, this would imply $\followlabelword{\node'_{i}} \subword \followlabelmessage{\node'_{j}}$ and $\followlabelmessage{\node'_{i}} = \followlabelword{\node'_{j}}$, which is a contradiction according to Lemma~\ref{lem:shortening-branches}.
	As a result, the sequence $(w_i)_{1\leq i \leq \altmax}$ is a "bad sequence" over the alphabet $\messages\cup\set{\#}$.
	Furthermore, by Lemma~\ref{lem:bound-length-at-height-h}, for all $i$, $\size{\node_{i}} \leq f_0(\size{\prot} + i)$ therefore $\size{w_i} \leq f_0^{(i)}(\size{\prot})$.
	Because $f_0$ is primitive-recursive, we can apply Theorem~\ref{thm:lengthfcttheorem}: there exists a function $f_1 \in \Ffunction{\omega^{\size{\messages}}}$ such that the sequence $(w_i)_{i \in \nset{1}{\altmax}}$ is of length at most $f_1(\size{\prot})$, hence $\altmax \leq f_1(\size{\prot})$. 
\end{proof}
\begin{figure}[h]
	\begin{tikzpicture}
	
	\draw[dotted] (0,0) -- (4,0);
	\draw[dotted] (0,0.5) -- (4,0.5);
	\draw[dotted] (0,1) -- (4,1);
	\draw[dotted] (0,1.5) -- (4,1.5);
	\draw[dotted] (0,2) -- (4,2);
	\draw[dotted] (0,2.5) -- (4,2.5);
	\draw[dotted] (0,3) -- (4,3);
	
	\node at (-0.2,0) {$0$};
	\node at (-0.2,0.5) {$1$};
	\node at (-0.2,1) {$2$};
	\node at (-0.2,1.5) {$3$};
	\node at (-0.2,2) {$4$};
	\node at (-0.2,2.5) {$5$};
	\node at (-0.2,3) {$6$};
	
	\draw[very thick] (0.4,0) -- (0.6, 0.5) -- (0.8, 1) -- (1, 0.5) -- (1.2, 1) -- (1.3, 1.5) -- (1.7, 2) -- (1.8, 1.5) -- (2.1, 1) -- (2.3, 1.5) -- (2.4, 2) -- (2.8, 2.5);
	\draw (1, 0.5) -- (1.4, 0) -- (1.6, 0.5) -- (2.4, 1) -- (2.6, 1.5);
	\draw (1.6, 0.5) -- (2.2, 0) -- (2.5, 0.5);
	\draw (0.8, 1) -- (0.7, 1.5) -- (0.4, 1) -- (0.5, 1.5);
	\draw (0.7, 1.5) -- (0.9, 1) -- (1.1, 1.5) -- (0.8, 2);
	\draw (2.4, 2) -- (2.1, 1.5) -- (1.9, 2) -- (1.6, 2.5);
	\draw (2.4, 2) -- (2.1, 1.5) -- (1.9, 2) -- (1.6, 2.5);
	\draw (2.4, 2) -- (2.1, 1.5) -- (1.9, 2) -- (1.6, 2.5);
	\draw (1.9, 2) -- (2.2, 2.5);
	\draw (2.4, 2) -- (3, 1.5);
	\draw (1.9, 2) -- (2.2, 2.5);
	\draw (0.4, 0) -- (0.3, 0.5);
	
	\draw[->, >=triangle 45] (4.2, 0) -- (4.2, 3);

	\node (H) at (4.7, 3) {$\mathbf{alt}$};
	
	\draw[->] (0.4, -0.1) -- (0.8, -0.5);
	
	\node (R) at (2,-0.5) {root ($\mathbf{alt} =0$)};
	\node (AM) at (3,2.7) {$\altmax$};
	\draw[red, fill=red] (0.4, 0) circle (0.05);
	\draw[red, fill=red] (0.6, 0.5) circle (0.05);
	\draw[red, fill=red] (0.8, 1) circle (0.05);
	\draw[red, fill=red] (1.3, 1.5) circle (0.05);
	\draw[red, fill=red] (1.7, 2) circle (0.05);
	\draw[red, fill=red] (2.8, 2.5) circle (0.05);

	\node[red] at (0.7, 0.1) {\small $w_5$};
	\node[red] at (0.3, 0.6) {\small $w_4$};
	\node[red] at (0.5, 0.8) {\small $w_3$};
	\node[red] at (1.5, 1.3) {\small $w_2$};
	\node[red] at (1.3, 1.9) {\small $w_1$};
	\node[red] at (2.9, 2.3) {\small $w_0$};
	
	\node[draw, color=white, text width=6cm] at (8.8, 2.2) {\color{black} For all $i$, let $w_i = \followlabelword{\node'_i}\#\followlabelmessage{\node'_i}$ where $\node'_i$ is the first "follower node" at altitude $\altmax - i+1$.};
	
	\node[draw, color=white, text width=6cm] at (8.8, 0.5) {\color{black} By Lemma~\ref{lem:bound-length-at-height-h} we can bound the length of $w_i$ by $f_0(\size{\prot}+i)$, which lets us to bound the number of red dots using Lemma~\ref{lem:shortening-branches} and the "Length function theorem".};
%
\end{tikzpicture}
	\caption{Illustration of the proof of Lemma~\ref{lem:bound-max-height}.}
	\label{fig:max-height-bound}
\end{figure}

The bound on the maximal altitude, along with Lemma~\ref{lem:bound-length-at-height-h}, bounds the size of the root of $\tree$ by $f_0(\size{\prot} + f_1(\size{\prot}))$. Let $\altmin := \min_{\node \in \tree} \altitude{\node} \leq 0$ the minimal altitude in $\tree$. By a similar argument but with "boss nodes", we can bound $|\altmin|$:

\begin{restatable}{lemma}{lemBoundMinHeight}
	\label{lem:bound-min-height}
	There exists a function $f_2 \in \Ffunction{\omega^{\size{\messages}}}$ such that $|\altmin| \leq f_2(\size{\prot})$.
\end{restatable}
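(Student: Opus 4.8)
The plan is to mirror the proof of \cref{lem:bound-max-height}, but reading $\tree$ downwards and using "boss nodes" where that proof used "follower nodes". First I would fix a branch of $\tree$ that descends to a node of altitude $\altmin$ and record, for each $i \in \nset{1}{|\altmin|}$, the first node $\node'_i$ of this branch whose "altitude" equals $-i$. Since the altitude increases by exactly $1$ at a "follower node" and decreases by exactly $1$ at a "boss node", the altitudes along the branch form a walk on $\mathbb{Z}$ starting at $0$; such a walk can reach a fresh minimum $-i$ only through a downward step from $-i+1$, so each $\node'_i$ is necessarily a "boss node" (in particular it is not the root, as $-i \le -1 < 0$), and $\node'_i$ is a strict ancestor of $\node'_j$ whenever $i<j$.

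Next I would convert the "boss specifications" along this chain into a "bad sequence". Setting $w_i := \bosslabel{\node'_i} \in \messages^*$, the first (boss) item of \cref{lem:shortening-branches} forbids $\bosslabel{\node'_i} \subword \bosslabel{\node'_j}$ for $i<j$ by minimality of $\tree$, so $w_i \not\subword w_j$ for all $i<j$ and $(w_i)_{1 \le i \le |\altmin|}$ is a "bad sequence" over $\messages$. Contrary to the "follower" case, these specifications are plain words of $\messages^*$, so no auxiliary letter $\#$ is needed and the "Length function theorem" can be applied directly over the alphabet $\messages$, yielding a bound in $\Ffunction{\omega^{\size{\messages}-1}}$ on the length before the final composition.

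Finally I would control the growth of $\size{w_i}$ and invoke the "Length function theorem". By \cref{lem:bound-length-at-height-h}, $\size{w_i} \le \size{\node'_i} \le f_0(\size{\prot} + \altmax + i)$, and feeding in $\altmax \le f_1(\size{\prot})$ from \cref{lem:bound-max-height} gives $\size{w_i} \le f_0(n_0 + i)$ with $n_0 := \size{\prot} + f_1(\size{\prot})$. Because $f_0$ is primitive recursive, increasing and dominates $x \mapsto x+1$, an immediate induction yields $f_0(n_0+i) \le f_0^{(i)}(f_0(n_0))$, which places us in the hypotheses of \cref{thm:lengthfcttheorem} with growth function $g := f_0$ and starting value $f_0(n_0)$. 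The theorem then bounds the length $|\altmin|$ of the "bad sequence" by $h(f_0(n_0))$ for some $h \in \Ffunction{\omega^{\size{\messages}-1}}$, and since $f_1 \in \Ffunction{\omega^{\size{\messages}}}$ while $\Ffunction{\omega^{\size{\messages}}}$ is closed under composition with primitive-recursive functions, the map $f_2 : \size{\prot} \mapsto h(f_0(\size{\prot} + f_1(\size{\prot})))$ lies in $\Ffunction{\omega^{\size{\messages}}}$, as required. The main obstacle I anticipate is conceptual rather than computational: one must get the orientation of the subword relation right (the "boss" clause of \cref{lem:shortening-branches} points the opposite way to the "follower" clause), and one must already have $\altmax$ bounded before \cref{lem:bound-length-at-height-h} can be turned into a primitive-recursively growing size bound along the chain, since that lemma only controls node size relative to $\altmax - \altitude{\node}$.
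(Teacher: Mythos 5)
Your proposal is correct and follows essentially the same route as the paper's proof: take the first node at each new minimal altitude along a branch reaching $\altmin$, observe these are all "boss nodes", turn their "boss specifications" into a "bad sequence" via \cref{lem:shortening-branches}, bound their sizes by $f_0(\size{\prot}+\altmax+i)$ using \cref{lem:bound-length-at-height-h,lem:bound-max-height}, and conclude with the "Length function theorem" and closure under composition. Your explicit care with the orientation of the subword relation (ordering the sequence root-to-leaf so that the boss clause of \cref{lem:shortening-branches} yields badness) is in fact cleaner than the paper's own indexing.
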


\begin{proof}
	Consider a branch of $\tree$ reaching $\altmin$, let $\node_1, \ldots, \node_{|\altmin|}$ be nodes of $\tree$ such that $\node_i$ is the first node of the branch to reach altitude $-i$. All these nodes are "boss nodes" because a "follower node" has higher altitude than its parent. 
	For all $i \in \nset{0}{\altmin}$ let $\node'_i = \node_{|\altmin|-i}$ and $w_i = \bosslabel{\node'_{i}} \in \messages^*$.
	By Lemma~\ref{lem:shortening-branches}, $(w_i)_{i \in \nset{1}{\altmin}}$ is a "bad sequence".
	By Lemmas~\ref{lem:bound-length-at-height-h} and \ref{lem:bound-max-height}, for all $i$, we have $\size{w_i} \leq f_0(\size{\prot}+f_1(\size{\prot}) +i)$. Let $N := \size{\prot}+f_1(\size{\prot})+1$; we have $\size{w_i} \leq f_0^{(i)}(N)$.  
	
	Because $f_0$ is primitive recursive, we can apply Theorem~\ref{thm:lengthfcttheorem}: we obtain $h \in \Ffunction{\omega^{\size{\messages}-1}}$ such that $\altmin \leq h(N)$. Let $f_2: n \mapsto h(n+ f_1(n) + 1)$; we have $\altmin \leq f_2(\size{\prot})$. Moreover, $f_2$ is a function of $\Ffunction{\omega^{\size{\messages}}}$ because $f_1, h \in \Ffunction{\omega^{\size{\messages}}}$ and $\Ffunction{\omega^{\size{\messages}}}$ is closed under composition.
\end{proof}

With the bounds on the maximal and minimal altitudes, Lemma~\ref{lem:bound-length-at-height-h} yields a bound on the size of all nodes in the tree.

\begin{lemma}
	\label{lem:bound-node-size}
	There exists $f_3 \in \Ffunction{\omega^{\size{\messages}}}$ such that, for all $\node$ in $\tree$, $\size{\node} \leq f_3(\size{\prot})$. 
\end{lemma}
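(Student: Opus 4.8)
The plan is to combine the three preceding lemmas directly, with essentially no new combinatorial content. First I would fix an arbitrary node $\node$ of $\tree$ and observe that its altitude lies between the two extremes, $\altmin \leq \altitude{\node} \leq \altmax$. Since $\altmin \leq 0$, this gives $\altmax - \altitude{\node} \leq \altmax - \altmin = \altmax + |\altmin|$, so that the ``altitude defect'' $\altmax - \altitude{\node}$ appearing in Lemma~\ref{lem:bound-length-at-height-h} is controlled by the two extremal altitudes alone, uniformly over all nodes.

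Next I would plug in the bounds already established. Lemma~\ref{lem:bound-max-height} gives $\altmax \leq f_1(\size{\prot})$ and Lemma~\ref{lem:bound-min-height} gives $|\altmin| \leq f_2(\size{\prot})$, whence $\altmax - \altitude{\node} \leq f_1(\size{\prot}) + f_2(\size{\prot})$. Feeding this into Lemma~\ref{lem:bound-length-at-height-h}, and using that the primitive-recursive function $f_0$ defined there is monotone, yields
\[
\size{\node} \;\leq\; f_0\big(\size{\prot} + \altmax - \altitude{\node}\big) \;\leq\; f_0\big(\size{\prot} + f_1(\size{\prot}) + f_2(\size{\prot})\big).
\]
Since the right-hand side no longer depends on $\node$, setting $f_3(n) := f_0\big(n + f_1(n) + f_2(n)\big)$ gives the desired uniform bound $\size{\node} \leq f_3(\size{\prot})$ for every node of $\tree$.

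Finally I would check membership in the complexity class. The map $n \mapsto n + f_1(n) + f_2(n)$ lies in $\Ffunction{\omega^{\size{\messages}}}$ because $f_1, f_2 \in \Ffunction{\omega^{\size{\messages}}}$ and this class is closed under composition (as already used in the proof of Lemma~\ref{lem:bound-min-height}); composing on the outside with the primitive-recursive $f_0$ keeps the result in $\Ffunction{\omega^{\size{\messages}}}$, which is closed under composition with primitive-recursive functions. Hence $f_3 \in \Ffunction{\omega^{\size{\messages}}}$, as required.

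There is no genuinely hard step here: all of the combinatorial work has been carried out in Lemmas~\ref{lem:bound-length-at-height-h}, \ref{lem:bound-max-height} and \ref{lem:bound-min-height}. The only points demanding a little care are the sign bookkeeping, namely that $\altmin \leq 0$ turns $-\altmin$ into $|\altmin|$, and recording explicitly that $f_0$ is monotone, which is needed because Lemma~\ref{lem:bound-length-at-height-h} only bounds $\size{\node}$ through an \emph{upper} bound on the altitude defect; the closure properties then place $f_3$ in the right class.
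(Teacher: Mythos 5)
Your proof is correct and follows essentially the same route as the paper: bound $\altmax - \altitude{\node}$ by $f_1(\size{\prot}) + f_2(\size{\prot})$ via Lemmas~\ref{lem:bound-max-height} and~\ref{lem:bound-min-height}, feed that into Lemma~\ref{lem:bound-length-at-height-h}, and close under composition. If anything you are slightly more careful than the paper's own write-up, which drops the additive $\size{\prot}$ inside $f_0$ and leaves the monotonicity of $f_0$ implicit.
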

\begin{proof}
	Using Lemmas~\ref{lem:bound-length-at-height-h}, \ref{lem:bound-max-height} and \ref{lem:bound-min-height}, we obtain that, for every node $\node$, the value $\altmax - \altitude{\node}$ is at most $f_1(\size{\prot}) + f_2(\size{\prot})$. Applying Lemma~\ref{lem:bound-length-at-height-h} proves that, for all $\node$ in $\tree$, $\size{\node} \leq f_0(f_1(\size{\prot}) + f_2(\size{\prot})) \leq f_3(\size{\prot})$ where $f_3: n \mapsto f_0(f_1(n) + f_2(n))$; $f_3$ is in $\Ffunction{\omega^{\size{\messages}}}$ as $\Ffunction{\omega^{\size{\messages}}}$ contains all primitive recursive functions and is closed under composition. 
\end{proof}

Finally, as we now have a bound on all nodes, we can easily bound the length of branches and the number of children of all nodes to get abound on the total "size@@tree" of the tree.

\begin{restatable}{proposition}{PropBoundTreeSize}
	\label{prop:bound-tree-size}
	There exists a function $f$ of class $\Ffunction{\omega^{\size{\messages}}}$ s.t. $\size{\tree} \leq f(\size{\prot})$.
\end{restatable}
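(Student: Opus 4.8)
The plan is to derive the bound on $\size{\tree}$ by separately bounding the number of nodes of $\tree$ and the size of each individual node, the latter being already available as the uniform bound $\size{\node} \le f_3(\size{\prot})$ from Lemma~\ref{lem:bound-node-size}. Since $\tree$ is a finite tree, its number of nodes is at most $D^{L+1}$, where $D$ bounds the number of children of any node and $L$ bounds the length of any branch. Hence it suffices to exhibit bounds $D$ and $L$ of class $\Ffunction{\omega^{\size{\messages}}}$: then $\size{\tree} \le D^{L+1}\,(f_3(\size{\prot})+1)$, and since $\Ffunction{\omega^{\size{\messages}}}$ contains all primitive recursive functions and is closed under composition with them (as used in the proof of Lemma~\ref{lem:bound-node-size}), the resulting function $f$ lies in $\Ffunction{\omega^{\size{\messages}}}$.

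First I would bound the branching factor $D$. By minimality of $\tree$, every child of a node $\node$ is attached only to witness a value occurring in $\localrunlabel{\node}$: condition~\ref{item:condition1_non_initial_value} attaches one boss child per non-initial value received, and condition~\ref{item:condition2_initial_value} attaches at most $\size{\messages}$ follower children per initial value (one per message type appearing in the decomposition). As there are at most $r$ initial values and at most $\size{\localrunlabel{\node}}$ non-initial values, and $\size{\localrunlabel{\node}} \le f_3(\size{\prot})$ by Lemma~\ref{lem:bound-node-size}, the number of children of $\node$ is bounded by $D := \size{\messages}\, r + f_3(\size{\prot})$, which is in $\Ffunction{\omega^{\size{\messages}}}$.

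The substantial step is bounding the branch length $L$. Since boss and follower nodes are interleaved along a branch, Lemma~\ref{lem:shortening-branches} does not make the raw sequence of specifications bad, so I would split each branch into $\size{\messages}+1$ subsequences and extract a bad sequence from each. The boss nodes, read from root to leaf, give a sequence of boss words in which no ancestor's word is a subword of a descendant's (first bullet of Lemma~\ref{lem:shortening-branches}), i.e.\ a bad sequence. For each fixed $m \in \messages$, the follower nodes whose $\followmessagespec$ equals $m$, read from leaf to root, give a sequence of words $\followwordspec$ in which no earlier (descendant) word is a subword of a later (ancestor) word (second bullet of Lemma~\ref{lem:shortening-branches}), again a bad sequence; note the subword direction is reversed relative to the boss case, which is why the leaf-to-root ordering is needed. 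Each specification has size at most $f_3(\size{\prot})$ by Lemma~\ref{lem:bound-node-size}, so each of these bad sequences consists of distinct words over a finite alphabet (of size at most $\size{\messages}+1$, allowing a separator in the follower case) of length at most $f_3(\size{\prot})$; hence each has length at most $(\size{\messages}+1)^{f_3(\size{\prot})+1}$, and summing over the $\size{\messages}+1$ subsequences bounds $L$ by $(\size{\messages}+1)^{f_3(\size{\prot})+2}$. Equivalently one may invoke the Length function theorem (Theorem~\ref{thm:lengthfcttheorem}) with the constant bounding function $g \equiv f_3(\size{\prot})$. Either way $L \in \Ffunction{\omega^{\size{\messages}}}$, and plugging $D$ and $L$ into $\size{\tree} \le D^{L+1}(f_3(\size{\prot})+1)$ yields the desired $f$.

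The main obstacle I anticipate is the bookkeeping around the interleaving of boss and follower nodes: one must keep track that Lemma~\ref{lem:shortening-branches} forces badness in \emph{opposite} subword directions for the two node types, and that the follower contribution must be partitioned by the terminal message $\followmessagespec$ before a genuine bad sequence emerges. Once this is set up correctly, the remainder is a routine combination of the uniform node bound $f_3$ with the closure properties of $\Ffunction{\omega^{\size{\messages}}}$.
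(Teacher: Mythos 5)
Your proposal is correct and follows essentially the same route as the paper's proof: combine the uniform node-size bound $f_3$ from Lemma~\ref{lem:bound-node-size} with a branching bound obtained from minimality (at most one boss child and $\size{\messages}$ follower children per value of the local run) and a branch-length bound coming from the fact that Lemma~\ref{lem:shortening-branches} forbids repeated specifications along a branch, then close under composition in $\Ffunction{\omega^{\size{\messages}}}$. The only cosmetic difference is that you phrase the branch-length bound via bad sequences split by node type and terminal message, whereas the paper simply counts the number of possible specifications of bounded size; both reduce to the same counting of distinct bounded-length words.
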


\begin{proof}
	Recall that the "size@@tree" of $\tree$ is defined as the sum of the sizes of its nodes. 

	Let $N = \size{\prot}$. 
	By Lemma~\ref{lem:bound-node-size}, for each $\node$ in $\tree$ we have $\size{\node} \leq f_3(N)$, hence the "local run" of $\node$ contains at most $f_3(N)$ values. By minimality of $\tree$, each value requires at most $1$ "boss" child and $\size{\messages}$ "follower" children hence $\node$ has at most $(\size{\messages}+1) f_3(N) \leq N f_3(N)$ children. Moreover, there are less than $f_4(N) := N^{f_3(N)}$ possible specifications in $\tree$ hence each branch of $\tree$ is of length less than $f_4(N)$ (by minimality, a branch does not have twice the same specification). This bounds the total number of nodes in $\tree$ by $f_5(N) := (N f_3(N))^{f_4(N) +1}$ hence we obtain $\size{\tree} \leq f(N)$ where $f: n \mapsto f_3(n) f_5(n)$. Because $\Ffunction{\omega^{\size{\messages}}}$ is closed under composition with primitive recursive functions, we have $f \in \Ffunction{\omega^{\size{\messages}}}$.
\end{proof}

	\section{Proof of Theorem~\ref{thm:decidable-cover}}
\label{app:decidable_cover}
We now prove the main result of this paper:

\decidablecover*

\begin{proof}
	The lower bound is given by the reduction from "lossy channel system" reachability in Proposition~\ref{prop:reduction-LCS}.
	
	For the upper bound, let $(\prot,q_f)$ be an instance of \COVER. By Propositions~\ref{prop:trees-sound-complete} and \ref{prop:bound-tree-size}, $(\prot,q_f)$ is positive if and only if it has a "coverability witness" of size bounded by $f(\size{\prot})$ where $f \in \Ffunction{\omega^{\size{\messages}}}$. Up to renaming agents and values, we can moreover assume that all agents and values appearing in this "coverability witness" are bounded by $f(\size{\prot})$, which bounds the size of the description of such a "coverability witness" by a polynomial in $f(\size{\prot})$.
	An algorithm for \COVER consists in enumerating all such descriptions and accepting if one finds a "coverability witness". 
	This can all be done in time exponential in $f(\size{\prot})$, thus this algorithm terminates in time $f'(\size{\prot})$ where $f' \in \Ffunction{\omega^{\size{\messages}}}$. This proves that 
	\COVER lies in complexity class $\Fcomplexity{\omega^\omega}$.
\end{proof}

	\section{Proof of Proposition~\ref{prop:target-undec}}
\label{app:target}

\propTargetUndecidable*

\begin{proof}
	We present a reduction from the halting problem for Minsky machines to \COVER for "signature BNRA".
	
	A Minsky Machine with two counters is a tuple $M= (\Loc, \Delta, \Cpt, \ell_0, \ell_f)$ where $\Loc$ is a finite set of locations, $\Cpt = \{\cpt_1, \cpt_2\}$ is a set of two counters, $\Delta \subseteq \Loc \times \set{\dec{\cpt},\inc{\cpt},\testz{\cpt} \mid \cpt \in \Cpt} \times \Loc$ is a finite set of transitions, $\ell_0 \in \Loc$ is an initial location and $\ell_f \in \Loc$ is a final location. A \emph{configuration} of a Minsky machine is a tuple $(\ell, v_1, v_2) \in \Loc \times \nats \times \nats$ where $v_1$ (resp. $v_2$) stands for the value of the counter $\cpt_1$ (resp. $\cpt_2$). 
	We write $(\ell, v_1, v_2) \stepMM{} (\ell', v'_1, v'_2)$ if there is $\delta \in \Delta$ such that:
	\begin{itemize}
		\item $\delta = (\ell, \inc{\cpt_i}, \ell')$ and $v'_i = v_i+1$, $v_{3-i} = v'_{3-i}$;
		\item $\delta = (\ell, \dec{\cpt_i}, \ell')$ and $v'_i = v_i-1$, $v_{3-i} = v'_{3-i}$;
		\item $\delta = (\ell, \testz{\cpt_i}, \ell')$ and $v'_i = v_i = 0$, $v_{3-i} = v'_{3-i}$.
	\end{itemize}
	An execution of the machine is a sequence $(\ell_1, v_1^{(1)}, v_2^{(1)}) \stepMM{} (\ell_2, v_1^{(2)}, v_2^{(2)}) \stepMM{} \dots \stepMM{} (\ell_k, v_1^{(k)}, v_2^{(k)})$. 
	The halting problem asks whether $\ell_f$ is coverable. This problem is well-known to be undecidable \cite{minsky}.

	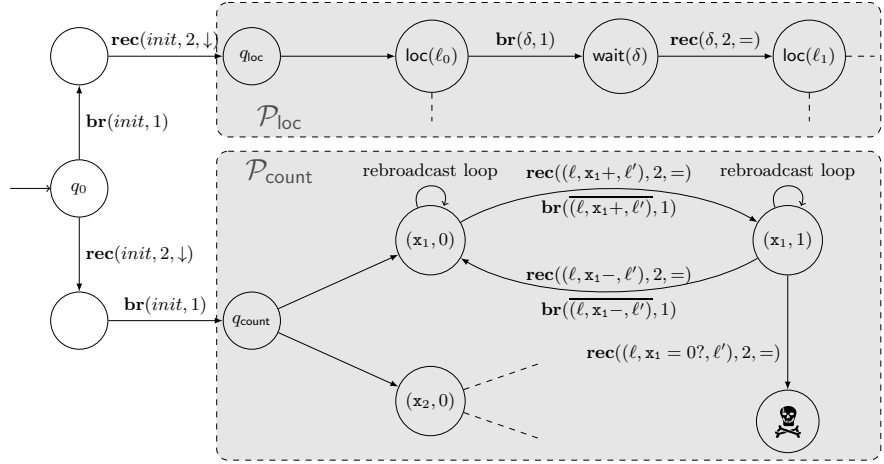
\begin{figure}
			\centering
			\resizebox{.99\linewidth}{!}{
			\begin{tikzpicture}[xscale=1,node distance=2cm,auto, yscale = 0.5]
	\tikzstyle{initial}= [initial by arrow,initial text=,initial
	distance=.7cm]
	\tikzset{every state/.style =  {minimum size = 1cm}}
	
	\node[state,initial] (0) at (0,0) {$q_0$};
	\node[state] [above = of 0, yshift=-20, xshift=0] (x) {};
	\node[state] [below= of 0, yshift=20, xshift=0] (y) {};
	\node[state] [right = of x] (1) {$q_{\mathsf{loc}}$};
	\node[state] [right = of y] (2) {$q_{\mathsf{count}}$};
	\node[state] [right =  of  1] (loc0) {$\mathsf{loc}(\ell_0)$};
	\node[state] [right = of loc0] (trans) {$\mathsf{wait}(\delta)$};
	\node[state] [right = of trans] (loc1) {$\mathsf{loc}(\ell_1)$}; 
	\node[state] [right = of 2, yshift = 40] (counter10) {$(\cpt_1,0)$};
	\node[state] [right = of 2, yshift = -40] (counter20) {$(\cpt_2,0)$};
	\node[state, draw = none, right = of counter10] (c1aux) {};
	\node[state, draw = none, right = of counter20] (c2aux) {};
	\node[state] [right = of c1aux] (counter11) {$(\cpt_1,1)$};
	\node[state] [below = of counter11] (dead) {\Large $\skull$};
	
	\begin{scope}[on background layer]
	\draw[rounded corners=2mm,dashed,fill=black!10] (2.4,6.5) -- (14,6.5) -- (14,1.8) -- (2.4,1.8)-- cycle;
	\draw[rounded corners=2mm,dashed,fill=black!10] (2.4,1.3) -- (14,1.3) -- (14,-9.5) -- (2.4,-9.5)-- cycle;
	\end{scope}
	\node at (3.5, 2.5) {\color{black!70} \Large $\prot_{\mathsf{loc}}$};
	\node at (3.5, 0.6) {\color{black!70} \Large  $\prot_{\mathsf{count}}$};
	
;	\path[-latex, bend left=0] 	
	(0) edge node[right, xshift=0,yshift = 0] {$\br{init}{1}$} (x)
	;
	\path[-latex, bend right=0] 	
	(0) edge node[right, xshift = 0, yshift = 0] {$\rec{init}{2}{\enregact}$} (y)
	;
	\path[-latex] 	
	(x) edge node {$\rec{init}{2}{\enregact}$} (1)
	(y) edge node {$\br{init}{1}$} (2)
	;
	\path[-latex]
	(counter10) edge[bend left = 50] node[above] {$\rec{(\ell, \inc{\cpt_1}, \ell')}{2}{\eqtestact}$} node[below] {$\br{\overline{(\ell, \inc{\cpt_1}, \ell')}}{1}$} (counter11)
	(counter11) edge[bend left = 50] node[above] {$\rec{(\ell, \dec{\cpt_1}, \ell')}{2}{\eqtestact}$} node[below] {$\br{\overline{(\ell, \dec{\cpt_1}, \ell')}}{1}$} (counter10)
	(2) edge (counter10)
	(2) edge (counter20)
	(1) edge (loc0)
	(loc0) edge node[above] {$\br{\delta}{1}$} (trans)
	(trans) edge node[above] {$\rec{\delta}{2}{\eqtestact}$} (loc1)
	(counter10) edge[loop above] node[above, align = center] {rebroadcast loop} ()
	(counter11) edge node[left, yshift = -10] {$\rec{(\ell, \testz{\cpt_1}, \ell')}{2}{\eqtestact}$} (dead)
	(counter11) edge[loop above] node[above] {rebroadcast loop} (); 

	\node[ right = of counter20, yshift = 0.7cm, xshift = -0.7cm] (dash1) {};
	\node[ right = of counter20, yshift = -0.7cm, xshift = -0.7cm] (dash2) {};
	\node [right = of loc1, xshift = -1.5cm] (dash3) {};
	\node [below = of loc1, yshift = 1.5cm] (dash4) {};
	\node [below = of loc1, yshift = 1.5cm] (dash4) {};
	\node [below = of loc0, yshift = 1.5cm] (dash5) {};
	\path[dashed, draw] 
	(counter20) -- (dash1)
	(counter20) -- (dash2)
	(loc1) -- (dash3)
	(loc0) -- (dash5)
	(loc1) -- (dash4);

\end{tikzpicture}
			}
			\caption{Partial depiction of the protocol built in Proposition~\ref{prop:target-undec}. Only one transition, which is $(\ell_0, \delta, \ell_1)$, is represented in the $\prot_{\mathsf{loc}}$ part above; similarly, only one increment and one decrement transitions are depicted in the $\prot_{\mathsf{count}}$ part below. The rebroadcast loops rebroadcast all transitions acting on $\cpt_2$ and all acknowledgements; the one on $(\cpt_1,0)$ also rebroadcasts all transitions with $\dec{\cpt_1}$ and with zero-tests, and the one on $(\cpt_1,1)$ rebroadcasts all transitions with $\inc{\cpt_1}$.}
			\label{fig:target-init}
	\end{figure}
	
	Fix a Minsky Machine $M= (\Loc, \Delta, \Cpt, \ell_0, \ell_f)$. We build a "signature protocol" $\prot$ with a state $q_f$ such that $(\prot, q_f)$ is a positive instance of \TARGET if and only if $\ell_f$ is coverable in $M$. 
	The protocol is represented in Figure~\ref{fig:target-init}. 
	As in Proposition~\ref{prop:reduction-LCS}, in an initial phase, each agent picks a predecessor by storing its identifier and only listens to this predecessor afterwards. We call \emph{cycle} a sequence of agents $a_1, a_2, \ldots, a_n = a_1$ where agent $a_i$ is the predecessor of $a_{i+1}$ for all $i <n$. 
	As all agents have to reach the end state, they must all pick a predecessor.
	As there are finitely many agents in a run, a cycle will necessarily be formed in any run satisfying the \TARGET requirement.
	
	The rest of the construction aims at faithfully simulating the machine in a cycle: Agents in $\prot_{\mathsf{loc}}$ sends a sequence of instructions and waits after each one for a confirmation that it was executed. Agents in $\prot_{\mathsf{count}}$ simulate counter values. The messages circulating in a cycle contain either a transition $\delta \in \Delta$ or an acknowledgment $\overline{\delta}$ with $\delta \in \Delta$. 
	An agent $a$ in $\prot_{\mathsf{count}}$ first picks a counter $\cpt_i$ it simulates, and goes to state $(\cpt_i,0)$. If $a$ is in ($\cpt_i,0)$ and receives $\delta$ corresponding to an increment of $\cpt_i$, it goes to $(\cpt_i,1)$ and broadcast an acknowledgment $\overline{\delta}$, and conversely for decrements. If $\delta$ is a zero-test $\testz{\cpt_i}$ and $a$ is on state $(\cpt_i,1)$ then it stops, making the whole cycle fail. Otherwise it broadcasts the acknowledgment $\overline{\delta}$. Other messages are rebroadcast as such. 

	An agent $a$ in $\prot_{\mathsf{loc}}$ starts in state $\mathsf{loc}(\ell_0)$. When in state $\mathsf{loc}(\ell)$, it picks and broadcasts a transition $\delta = (\ell, \anact, \ell') \in \Delta$, waits for the acknowledgment $\overline{\delta}$ and goes to $\mathsf{loc}(\ell')$. In the case where $\delta$ is a zero-test, we have $\overline{\delta} = \delta$: there is no need for a distinct acknowledgment because there is no action to perform (if the test fails then no message is transmitted). When in $\mathsf{loc}(\ell_f)$, $a$ broadcasts a special message $\mathsf{end}$ that propagates in the cycle and makes everyone go to $q_f$. When it receives itself the message $end$, it goes to $q_f$. 
	
	It is quite easy to see that, if $\ell_f$ can be covered in $M$, one can build a run of $\prot$ where all agents end in $q_f$. Let $N$ the highest counter value in the execution of $M$ covering $q_f$. The run of $\prot$ first puts all its agents in the same cycle; exactly one agent $a_\mathsf{lead}$ goes in $\prot_{\mathsf{loc}}$ and $2N$ agents go in $\prot_{\mathsf{count}}$; half of these simulate $\cpt_1$ and half $\cpt_2$, so that the largest counter value is never exceeded. It then suffices to faithfully simulate the execution of $M$: $a_\mathsf{lead}$ selects the corresponding sequence of transitions, their effect is always applied as we have enough agents simulating each counter. After each round the number of agents in state $(\cpt_i,1)$ is exactly the value of $\cpt_i$ at this point in the run of the machine, hence zero-tests never cause failure. In the end $a_\mathsf{lead}$ reaches $\mathsf{loc}(\ell_f)$ and broadcasts $\mathsf{end}$, allowing every agent in $\prot_{\mathsf{count}}$ to get to $q_f$.

	For the converse implication, suppose that we have a run $\run$ of $\prot$ where all agents end in $q_f$. As mentioned before, there must be a cycle of agents $a_1, \dots, a_n$ in this run. Observe that all agents alternate between broadcasts and receptions, so that to reach $q_f$ they must all have made the same number of broadcasts and receptions. This implies that no message was lost along the cycle.
	
	Note that there may be several agents in $\prot_{\mathsf{loc}}$ along the cycle; however, they must all broadcast exactly the same sequence of transitions, otherwise one of them would lack an acknowledgment and would not get to $q_f$. Let $a$ be the agent that first reaches $\mathsf{loc}(\ell_f)$ and $a'$ the first agent in $\prot_{\mathsf{loc}}$ after $a$ in the cycle; there are only agents in $\prot_{\mathsf{count}}$ between $a$ and $a'$ in the cycle, we call these agents \emph{intermediate agents}. The intermediate agents faithfully encode the two counters and all decrements and zero-tests pass, otherwise $a'$ would lack an acknowledgment. Therefore, the sequence of transitions of $a$ defines an execution of $M$ that covers $\ell_f$, which concludes the proof.  
\end{proof}
	
\section{Proofs of Section \ref{sec:cover-1BNRA}}
\label{app:cover-one-reg}

In this section the register argument in receptions and broadcasts is always $1$, hence we remove it.
Our new set of operations is 
\[
Op^{\messages} = \set{\brone{\amessage}, \recone{\amessage}{\dummyact}, \recone{\amessage}{\enregact}, \recone{\amessage}{\eqtestact}, \recone{\amessage}{\diseqtestact} \mid \amessage \in \messages}.
\]
Also, given a "configuration" $\config$, we write $\data{\config}(a)$ for $\data{\config}(a,1)$. 

Given a "run" $\run$, we write $\intro*\agentsof{\run}$ for its set of agents, and we define the set of states appearing in it $\intro*\statesin{\run}:= \set{q \in Q \mid \exists i, \exists a, \st{\config_i}(a) = q}$  as well as its set of values $\intro*\valsof{\run} := \set{\aval \in \nats \mid \exists i,j, a, \data{\config_i}(a,j) = \aval}$.  

First, to simplify the proofs, we eliminate reception transitions with action $\quotemarks{\ne}$. This is feasible as we can execute several copies of a "run" in parallel (with distinct values) so that every broadcast is made in each copy with a different value. Hence if a agent receives a message, it can always receive it with a value different from its own, making disequality tests useless. We can thus replace them with receptions with $\quotemarks{\dummyact}$.

\subsection{Removing Disequality Tests}
\label{sec:one-diseq-tests}

We start by formalising the intuition that a configuration contains more agents than another one up to renaming.

\begin{definition}
	We define a preorder over the set of configurations as follows: $\config \intro*\lessthan \config'$ if there exists an injective function $\pi: \agentsof{\config} \rightarrow \agentsof{\config'}$ such that, for all $a \in \agentsof{\config}$, $\config(a) = \config'(\pi(a))$. 
\end{definition}


\begin{restatable}{lemma}{lemRemoveDiseq}
	\label{lem:removing_diseq_tests}
	Let $(\prot, q_f)$ an instance of the "coverability problem". This instance is positive if and only if $(\tilde{\prot}, q_f)$ is positive, where $\tilde{\prot}$ is equal to $\prot$ where every disequality test $\quotemarks{\diseqtestact}$ is replaced by dummy action $\quotemarks{\dummyact}$.  
\end{restatable}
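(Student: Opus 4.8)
The plan is to prove the two implications separately. The left-to-right direction is immediate, and essentially all the work lies in the converse, where I would follow the copycat idea sketched just before the statement.

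\emph{Forward direction.} Suppose $(\prot, q_f)$ is positive, witnessed by an initial run $\run$ of $\prot$ covering $q_f$. I claim that the very same sequence of configurations is an initial run of $\tilde\prot$. The two protocols share all their states, messages and transitions, except that each transition $(q, \recone{m}{\diseqtestact}, q')$ of $\prot$ becomes $(q, \recone{m}{\dummyact}, q')$ in $\tilde\prot$. Replacing a disequality test by a dummy action only \emph{weakens} the reception condition (both keep the register unchanged, but the dummy imposes no constraint at all on the received value), so every step of $\run$ is still a valid step of $\tilde\prot$ between the same configurations, and it still covers $q_f$.

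\emph{Converse direction, setup.} Fix an initial run $\tilde\run : \tilde\config_0 \step{} \cdots \step{} \tilde\config_L$ of $\tilde\prot$ over a finite agent set $\agents$ that covers $q_f$. Since no fresh value is ever created along a run (a store action only copies an existing register value), the set $V$ of values occurring in $\tilde\run$ is finite. I would take two value-disjoint copies of the system, indexed by $\set{A,B}$, over the agent set $\agents \times \set{A,B}$, and pick injections $\phi_A, \phi_B : V \to \nats$ with disjoint images. The simulating run $\run$ of $\prot$ starts from the configuration mapping each $(a, X)$ to $(q_0, \phi_X(\data{\tilde\config_0}(a)))$; disjointness of the images guarantees that all $2\,|\agents|$ register values are distinct, so this is a genuine initial configuration. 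The invariant to maintain is that after simulating the first $i$ steps of $\tilde\run$, agent $(a,X)$ is in state $\st{\tilde\config_i}(a)$ with register value $\phi_X(\data{\tilde\config_i}(a))$.

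\emph{Converse direction, the key step.} The crux is the simulation of a single step of $\tilde\run$, say a broadcast of $(m,v)$ by $a_0$. I would first lift each reception of that step to a corresponding transition of $\prot$ (which exists by construction of $\tilde\prot$, choosing the disequality version only when it is the only lift), and classify it by its $\prot$-action. I then realize the step in $\prot$ by two consecutive broadcasts: $(a_0,A)$ broadcasts $(m, \phi_A(v))$, then $(a_0,B)$ broadcasts $(m, \phi_B(v))$. Each original receiver $a$ is routed as follows: if its reception is a store, an equality test, or a genuine dummy, then $(a,X)$ receives the broadcast from its own copy $X$ (value $\phi_X(v)$), which is consistent with the $\phi_X$-renaming and makes stores and equality tests succeed; if its reception is a disequality test in $\prot$, then $(a,X)$ instead receives the broadcast from the other copy $\bar X$ (value $\phi_{\bar X}(v)$), which lies in $\mathrm{im}(\phi_{\bar X})$, a set disjoint from $\mathrm{im}(\phi_X)$ containing the register value $\phi_X(\data{\tilde\config_{i-1}}(a))$, so the received value necessarily differs from the agent's own and the test passes while leaving the register unchanged. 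Every agent thus receives exactly once, in exactly one sub-step, so both sub-steps are valid steps of $\prot$ and the invariant is preserved. The main obstacle is precisely this routing argument: it is what forces the use of two value-disjoint copies, ensuring that every disequality receiver always has an alternative, foreign-valued broadcast of the right message type available. Finally, if $a^\ast$ covers $q_f$ in $\tilde\config_L$, then $(a^\ast, A)$ covers $q_f$ at the end of $\run$, so $(\prot, q_f)$ is positive, completing the proof.
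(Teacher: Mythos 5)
Your proof is correct, and the forward direction is exactly the paper's one-line observation that $\tilde\prot$ is less guarded than $\prot$. For the converse, the underlying idea coincides with the paper's --- every disequality tester must be fed a broadcast of the correct message type carrying a value foreign to its own register --- but the construction is genuinely different. The paper proceeds by induction on the length of $\tilde\run$, maintaining only that the simulating configuration \emph{embeds} the simulated one (an injection of agents preserving states and values), and at each step whose receivers include a lifted disequality test it invokes the copycat principle to spawn one fresh agent in the broadcaster's source state with a globally fresh register value; that fresh agent performs a second broadcast of the same message, which the disequality testers receive. Your construction instead fixes the agent set upfront as two value-disjoint copies of $\agents$, maintains an exact renamed copy of $\tilde\config_i$ in each half, and cross-routes every disequality reception to the broadcast emanating from the opposite copy. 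Both arguments are sound; yours buys a more uniform invariant (a bijective correspondence rather than an embedding, with the agent set determined once and for all and no per-step appeal to copycat), at the cost of doubling the whole population rather than adding one agent per problematic step, while the paper's version keeps the simulating run closer to the original except where a disequality actually occurs. One subtlety that you correctly flag and must not be dropped: a dummy reception of $\tilde\prot$ can have two preimages in $\prot$ (a genuine dummy and a disequality test), so the routing has to be decided per chosen lift, exactly as you do by preferring the non-disequality lift when it exists.
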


\begin{proof}
	First, if $(\prot, q_f)$ is positive then so is $(\tilde{\prot}, q_f)$, as one can easily lift any "initial run" in $\prot$ to an equivalent "initial run" in $\tilde{\prot}$ (transitions are less guarded  in $\tilde{\prot}$ that in $\prot$). 
	
	Suppose now that $(\tilde{\prot}, q_f)$ is a positive instance of the "coverability problem". There exists an "initial run" $\tilde{\run}: \tilde{\config}_0 \step{*} \tilde{\config}$ in $\tilde{\prot}$ that covers $q_f$. We prove by induction on the length of $\tilde{\run}$ that there exists an "initial run" $\run$ reaching a configuration $\config$ such that $\tilde{\config} \lessthan \config$ (note that if $\tilde{\config}$ covers a state, then so does $\config$). 
	
	If $\tilde{\config} = \tilde{\config}_0$ then $\run = \tilde{\run}$ suffices. Suppose that $\tilde{\run}$ has length $k \geq 1$, and that the result if true for "runs" of length $k-1$. Decompose $\tilde{\run}$ into $\tilde{\run}_{k-1}: \tilde{\config}_0 \step{*} \tilde{\config}_{k-1}$ of length $k-1$ and a final step $\tilde{\config}_{k-1} \step{} \tilde{\config}_k$. 
	By induction hypothesis, there exists $\run_{k-1}: \config_0 \step{*} \config_{k-1}$ such that $\tilde{\config}_{k-1} \lessthan \config_{k-1}$: there exists an injective function $\pi : \tilde{\agents} \rightarrow \agents$
	such that, for all $a \in \tilde{\agents}$, $\tilde{\config}_{k-1}(a) = \config_{k-1}(\pi(a))$, where $\tilde{\agents} := \agentsof{\tilde{\run}}$ and $\agents := \agentsof{\run}$. If $\tilde{\config}_{k-1} \step{} \tilde{\config}_k$ involves no reception transition from $\tilde{\prot}$ whose corresponding transition in $\prot$ has action $\quotemarks{\diseqtestact}$, then we directly lift this step into a step appended at the end of $\run_{k-1}$ (making $\pi(a)$ take a transition whenever $a$ does so in $\tilde{\config}_{k-1} \step{} \tilde{\config}_k$). 
	Otherwise, write $\tilde{\agents}_{\diseqtestact}$ the subset of $\tilde{\agents}$ corresponding to agents taking in $\tilde{\config}_{k-1} \step{} \tilde{\config}_k$ a reception transition from $\tilde{\prot}$ whose corresponding transition in $\prot$ has action $\quotemarks{\diseqtestact}$ . 
	Write $(q, \brone{m}, q') \in \transitions$ the broadcast transition used in this step.  Using the "copycat principle", we add to $\config_{k-1}$ a fresh agent $a_{\mathsf{new}}$ with state $q$ and a register value that does not appear in $\config_{k-1}$. 
	We first mimic this broadcast step at the end of $\run_{k-1}$, making any agent $\pi(a) \in \pi(\tilde{\agents} \setminus \tilde{\agents}_{\diseqtestact})$ take the transition that $a$ takes in $\tilde{\config}_{k-1} \step{} \tilde{\config}_k$. We then add a new step where $a_{\mathsf{new}}$ broadcasts using transition $(q, \brone{m}, q')$, and every agent $\pi(a) \in \pi(\tilde{\agents}_{\diseqtestact})$ takes the transition corresponding to the transition taken by $a$ in $\tilde{\config}_{k-1} \step{} \tilde{\config}_k$. Such a transition is a reception with action $\quotemarks{\diseqtestact}$ in $\prot$; however, because $a_{\mathsf{new}}$ does not share its register value with any process from $\tilde{\agents}$, all disequality conditions are satisfied and this step is valid. In the end, every agent $\pi(a) \in \pi(\tilde{\agents})$ has taken the transition in $\prot$ corresponding to the one $a$ took in $\tilde{\prot}$ in step $\tilde{\config}_{k-1} \step{} \tilde{\config}_k$, hence the configuration $\config_k$ reached by the constructed "run" is such that $\tilde{\config}_k \lessthan \config_k$. 
\end{proof}

\subsection{Abstraction}

We now define our abstraction. We formalize the definition of a "gang":

\begin{definition}
	Let $(Q,\transitions, q_0)$ be a protocol.
	
	A ""gang"" is a pair $\gang = (\boss, \clique) \in (Q \cup \set{\noboss}) \times \powerset{Q}$. 
	The element $\boss$ is the \reintro{boss} and the set $\clique$ is the \intro{clique} of the "gang". 
	
	Let $\run = \config_0 \step{} \config_1 \step{} \cdots \step{} \config_k$ be a "run" and $\aval \in \valsof{\run}$. The "gang" of value $\aval$ in $\run$, written $\intro*\gangof{\aval}{\run}$, is the "gang" $(\intro*\bossof{\aval}{\run}, \intro*\cliqueof{\aval}{\run})$ such that, 
	\begin{itemize}
		\item if there exists $a_0 \in \agentsof{\run}$ such that, 
		for every 
		$i \in \nset{0}{k}$, 
		$\data{\config_i}(a_0) = \aval$ then $\bossof{\aval}{\run} := \st{\config_k}(a_0)$, otherwise $\bossof{\aval}{\run} := \noboss$, 
		\item  $\cliqueof{\aval}{\run} := \set{q \in Q \mid \exists i \leq k, \exists a \in \agents\setminus \set{a_0}, \config_i(a) = (q,\aval)}$
	\end{itemize}
\end{definition}

We define "abstract runs" as follows:

\begin{definition}
	\label{def:abstract-configuration}
	An ""abstract configuration"" over $\agents$ is a tuple of $2^Q \times \gangset$ where $\gangset$ designates the set of all "gangs". We write $\aconfigs{\agents}$ the set of "abstract configurations" over $\agents$ and $\allaconfigs := \bigcup_{\agents \subseteq \nats \text{ finite }}\aconfigs{\agents}$ the set of all "abstract configurations". 
	
	Given a set of states $S \subseteq Q$, a "message type" $m$ and a set of operations $O$, we define $\cliquesucc{S}{m}{A} = \set{s' \in Q \mid \exists s \in S, a\in O, (s, \recone{m}{a}, s') \in \Delta}$.
	
	Given two "abstract configurations" $\aconfig = (\covset, \boss, \clique)$ and $\aconfig' = (\covset', \boss', \clique')$, there is an ""abstract step"" from $\aconfig$ to $\aconfig'$, denoted $\aconfig \step{} \aconfig'$, when $\clique' \subseteq \covset'$, $\boss' \in \covset' \cup \set{\noboss}$ and one of the following cases is satisfied.
	\begin{enumerate}
		\item \emph{""Broadcast from clique"":} There exists $(\statebr, \brone{m}, \statebr') \in \transitions$ such that:
		\begin{enumerate}[label = (\arabic*)] 
			
			\item\label{item:broadcast_from_clique_boss} Either $\boss = \boss'$ or there exists $(\boss, \recone{\amessage}{\anact}, \boss') \in \transitions$ for some action $\anact$.
			
			\item\label{item:broadcast_from_clique_clique}$ \clique' = \clique \cup \set{\statebr'}\cup \cliquesucc{\clique}{m}{\set{\eqtestact, \dummyact, \enregact}} \cup \cliquesucc{\covset}{m}{\set{\enregact}}$.
		\end{enumerate}
		
		\item \emph{""Broadcast from boss"":} there exists $\amessage \in \messages$ such that $(\boss, \brone{m}, \boss') \in \transitions$
		\begin{enumerate}[label = (\arabic*)]
			
			\item\label{item:broadcast_from_boss_boss} $\boss, \boss' \ne \noboss$ (technically implied by the existence of $(\boss, \brone{m}, \boss')$ but written here to match other cases)
			
			\item\label{item:broadcast_from_boss_clique} $\clique'= \clique \cup \cliquesucc{\clique}{m}{\set{\eqtestact, \dummyact, \enregact}} \cup \cliquesucc{\covset}{m}{\set{\enregact}}$.
		\end{enumerate}

		\item \emph{""External broadcast"":} There exists $(\statebr, \brone{m}, \statebr') \in \transitions$ such that
		\begin{enumerate}[label = (\arabic*)]
			
			\item\label{item:external_broadcast_boss}Either $\boss = \boss'$ or:
			\begin{itemize} 
				\item $\boss' \ne \noboss$ and there exists $(\boss, \recone{\amessage}{\dummyact}, \boss') \in \transitions$, or
				\item $\boss' = \noboss$ and there exists $(\boss, \recone{\amessage}{\enregact}, \boss') \in \transitions$.
			\end{itemize}
			
			\item\label{item:external_broadcast_clique} $\clique'= \clique \cup \cliquesucc{\clique}{m}{\set{\dummyact}}$.
			
		\end{enumerate}
		\item \emph{""Gang reset"":} $\covset' = \covset \cup \clique \cup \set{\boss}$, $\clique' = \emptyset$ and $\boss'= q_0$
	\end{enumerate}

	Given a concrete "run" $\run: \config_0 \step{*} \config_k$, we write \AP  $\intro*\absproj{\aval}{\run}$ for the "abstract configuration" $(\covset, \gangof{\aval}{\run})$ where $\covset$ is the set of all states appearing in $\run$. 
	
	The \emph{initial abstract configuration} is $\aconfiginit := (\set{q_0}, q_0, \emptyset)$. 
	As in the concrete case, an ""abstract run"" is a sequence $\arun = \aconfig_0, \dots, \aconfig_k$ such that $\aconfiginit$ is the initial configuration and, for all $i$, $\aconfig_i \step{} \aconfig_{i+1}$. We denote such a run $\aconfig_0 \step{*} \aconfig_k$. Similarly, we denote by $\aconfig \step{*} \aconfig'$ the existence of a sequence of steps from $\aconfig$ to $\aconfig'$.
\end{definition}

The intuition is that we will keep track of one value at a time, while assuming that we have unlimited supplies of agents in the states we covered so far. We follow the "gang" of one value through the run, which allows us to discover new states. A "gang reset" lets us add those new states to the set of covered ones and switch to another value.

First of all we observe that if there is an "abstract run" covering a state then there is a short one.

\begin{lemma}
	\label{lem:short-run}
	For every $\aconfig \in \allaconfigs$ such that $\aconfiginit \step{*} \aconfig$, there exists an "abstract run" $\arun: \aconfig_0 \step{*} \aconfig$ of less that $(|Q|+2)^3$ steps.
\end{lemma}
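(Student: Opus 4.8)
The plan is to start from an arbitrary "abstract run" $\arun : \aconfiginit \step{*} \aconfig$ (which exists by hypothesis) and repeatedly shorten it until its length drops below $(\size{Q}+2)^3$, relying on three monotonicity invariants of the abstract semantics. Writing each "abstract configuration" as a triple $(\covset,\boss,\clique)$, I would first record, by inspecting the four cases of \cref{def:abstract-configuration}, that: the component $\covset$ is non-decreasing along any "abstract step" and changes only at a "gang reset" (where $\covset' = \covset \cup \clique \cup \set{\boss}$); the clique $\clique$ can only grow along the three broadcast cases and is emptied exactly at a "gang reset"; and every "gang reset" outputs a configuration of the canonical form $(\covset, q_0, \emptyset)$. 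Note also that the initial configuration $\aconfiginit = (\set{q_0}, q_0, \emptyset)$ is already canonical, and that the step conditions force $\boss \in \covset \cup \set{\noboss}$ throughout, so in a configuration with covered set $\covset$ the boss ranges over at most $\size{\covset}+1 \le \size{Q}+1$ values.

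The shortening itself is a cut-and-paste argument: if $\aconfig_i = \aconfig_j$ with $i < j$, then $\aconfiginit \step{*} \aconfig_i = \aconfig_j \step{*} \aconfig$ is again a valid "abstract run" reaching $\aconfig$, so I may delete the steps strictly between positions $i$ and $j$. Iterating until stabilization, I may assume all configurations of the run are pairwise distinct, so it suffices to bound the number of distinct configurations that a repeat-free "abstract run" can visit.

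For the count I would cut the run into maximal blocks on which $\covset$ is constant. Since $\covset$ is a non-decreasing chain of subsets of $Q$ each containing $q_0$, it takes at most $\size{Q}$ distinct values; hence there are at most $\size{Q}$ blocks, and since $\covset$ changes only at a "gang reset", consecutive blocks are separated by a single reset step. Each block therefore begins with a canonical configuration $(\covset, q_0, \emptyset)$, being either $\aconfiginit$ or the output of the reset that opened it. Consequently a block can contain no internal "gang reset": such a reset would again produce $(\covset, q_0, \emptyset)$ and duplicate the block's first configuration, contradicting distinctness. Within a block, then, $\clique$ evolves only through broadcast steps and thus climbs monotonically up a chain of at most $\size{Q}+1$ values, while $\boss$ takes at most $\size{Q}+1$ values; a repeat-free block therefore visits at most $(\size{Q}+1)^2$ configurations. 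Multiplying over the at most $\size{Q}$ blocks, the run visits at most $\size{Q}\,(\size{Q}+1)^2 = \size{Q}^3 + 2\size{Q}^2 + \size{Q}$ configurations, so its length is strictly below $(\size{Q}+2)^3$, as required.

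I expect the main obstacle to be the careful justification of the monotonicity invariants against the precise wording of \cref{def:abstract-configuration} — in particular confirming that $\covset$ can grow only at a "gang reset" while broadcast steps leave it fixed and merely enlarge $\clique$, and that every reset normalizes to $(\covset, q_0, \emptyset)$ — since both the block decomposition and the exact constant $(\size{Q}+2)^3$ hinge on these facts (the argument is tight enough that, e.g., allowing an extra reset per block would break the bound). Once these invariants are pinned down, the cut-and-paste step and the product count are entirely routine.
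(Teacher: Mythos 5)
Your proof is correct and follows essentially the same route as the paper's: delete repeated configurations, then use the monotonicity of $S$ (changing only at gang resets) and of $K$ (within reset-free segments), together with the at most $|Q|+1$ possible boss values, to bound the number of distinct configurations a repeat-free run can visit. The paper states this as a compressed pigeonhole argument (every $|Q|+2$ reset-free steps either $S$ or $K$ must grow, and $S$ must grow strictly between consecutive resets), whereas you make the block decomposition explicit, but the invariants and the counting are the same and both land under $(|Q|+2)^3$.
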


\ifproofs
\begin{proof}
	Note that $\covset$ may never decrease along an "abstract run" and that $\clique$ may only decrease at "gang resets".
	We can hence enforce in the abstract semantics that, at least every $|Q|+2$ steps without "reset", either $\covset$ or $\clique$ has increased. Indeed, otherwise the configuration has looped as the "boss" may only take $|Q| +1$ values. We may also enforce that $\covset$ has strictly increased between two "resets", as otherwise one may remove anything that happened between the two "resets". Therefore, there are at most $|Q|-1$ "gang resets" in total, and each portion of the run with no "reset" has at most $(|Q|+2)(|Q|+1)$ steps, yielding the bound. 
\end{proof}
\fi

It remains to prove that our abstraction is sound and complete.

\subsubsection{Completeness}
\label{one-completeness}

In this subsection we prove Lemma~\ref{lem:abstraction_complete}. To do so, we take a concrete "run" $\run$ in our model and any value $v$ appearing in the reached configuration. We prove that there exists an "abstract run" leading to the abstract configuration $(\statesin{\run}, \gangof{v}{\run})$. 

To construct the "abstract run", we will first show that for all $S$ such that $\statesin{\run} \subseteq S$, we can keep track of the set of agents carrying a value $v$: $(S, q_0, \emptyset) \step{\ast}(S, \bossof{\aval}{\run}, K)$ with $\cliqueof{\aval}{\run}\subseteq K$. 
Then, it is left to show that for all concrete "run" $\run$ and for all value $v$, $\aconfiginit \step{\ast} (S, q_0, \emptyset)$ with $\statesin{\run} \subseteq S$. 



\begin{lemma}
	\label{lem:proof_completeness_covset_constant}
	For all "initial runs" $\run: \config_0 \step{*} \config$, $S \subseteq Q$ and $\aval \in \valsof{\run}$, if $\statesin{\run} \subseteq S$ then there exists $\clique$ such that $(S, q_0, \emptyset) \step{*} (S,\bossof{\aval}{\run}, \clique)$ and $\cliqueof{\aval}{\run} \subseteq \clique$. 
\end{lemma}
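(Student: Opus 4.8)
The plan is to simulate $\run: \config_0 \step{} \cdots \step{} \config_k$ one step at a time in the abstract semantics, keeping the covered component equal to $S$ throughout (so that no \emph{gang reset} is ever used) and following the single value $\aval$. First I would record that in a one-register network no value is ever created---a store action merely copies the broadcaster's value---and that all initial register contents are distinct, so $\aval$ is the initial register value of a unique agent; call it $a_0$. I would then carry along, after $i$ mirrored steps, an "abstract configuration" $(S, b_i, K_i)$ subject to the invariant that $b_i = \st{\config_i}(a_0)$ when $\data{\config_j}(a_0) = \aval$ for every $j \le i$ and $b_i = \noboss$ otherwise, and that $K_i \supseteq \set{\st{\config_j}(a) \mid j \le i,\ a \ne a_0,\ \data{\config_j}(a) = \aval}$. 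The configuration $(S, q_0, \emptyset)$ satisfies this at $i = 0$, because $a_0$ is the only agent carrying $\aval$ in $\config_0$; and at $i = k$ the invariant yields exactly $b_k = \bossof{\aval}{\run}$ and $K_k \supseteq \cliqueof{\aval}{\run}$, so $\clique := K_k$ would witness the statement.

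For the inductive step I would look at the step $\config_i \step{} \config_{i+1}$, in which some agent broadcasts $(m,v)$ with $v = \data{\config_i}(a_{br})$, and split into three cases realised by the three broadcast rules. When $v = \aval$ and the broadcaster is the still-living boss $a_0$ (i.e. $b_i \neq \noboss$), I would apply a \emph{broadcast from boss} along $(\st{\config_i}(a_0), \brone{m}, \st{\config_{i+1}}(a_0)) \in \transitions$. When $v = \aval$ but the broadcaster is some other $\aval$-carrier (or $a_0$ after it has already dropped $\aval$), I would apply a \emph{broadcast from clique}. When $v \neq \aval$, I would apply an \emph{external broadcast} along the broadcaster's transition, which is available since $\st{\config_i}(a_{br}) \in \statesin{\run} \subseteq S$.

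The bulk of the work is then to verify, for each rule, that a suitable choice of $\boss'$ restores the invariant. For the boss label: whenever $a_0$ keeps $\aval$ (by broadcasting, by an equality/dummy/store-$\aval$ reception, or by idling) the chosen rule lets $\boss'$ be $\boss$ or the image of $\boss$ under the transition $a_0$ takes, which is $\st{\config_{i+1}}(a_0)$; and exactly when $a_0$ first stores the foreign value $v \neq \aval$ during an external broadcast, the external-broadcast clause permits $\boss' = \noboss$ via a store transition, after which the label stays $\noboss$. For the clique: any agent $a \neq a_0$ carrying $\aval$ in $\config_{i+1}$ is either the broadcaster (its target is added by the $\set{\statebr'}$ term of \emph{broadcast from clique}), or already carried $\aval$ and preserved it through an equality/dummy/store-$\aval$ reception (its new state lying in $\cliquesucc{K_i}{m}{\set{\eqtestact,\dummyact,\enregact}}$, using $\st{\config_i}(a) \in K_i$) or while idle (already in $K_i$), or freshly stored $\aval$ from a covered state (its new state lying in $\cliquesucc{\covset}{m}{\set{\enregact}}$); under an external broadcast no agent can acquire $\aval$ and a surviving gang member can only have read via $\dummyact$, which $\cliquesucc{\clique}{m}{\set{\dummyact}}$ records. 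The states the invariant \emph{demands} be present in $K_{i+1}$ are all occupied by gang agents in $\run$, hence lie in $\statesin{\run} \subseteq S$, and $b_{i+1} \in S \cup \set{\noboss}$; the point remaining to discharge is that the rule's forced clique-update itself stays inside $\covset' = S$, so that the global side conditions $\clique' \subseteq \covset'$ and $\boss' \in \covset' \cup \set{\noboss}$ hold.

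The step I expect to be the main obstacle is precisely this bookkeeping around $a_0$: keeping the boss label faithful to whether $a_0$ has carried $\aval$ at \emph{every} step so far, demoting it to $\noboss$ at the exact moment $a_0$ overwrites $\aval$ and never resurrecting it (so that a later re-storing of $\aval$ by $a_0$ is absorbed into the clique via $\cliquesucc{\covset}{m}{\set{\enregact}}$ rather than the boss), while simultaneously checking that the rigid clique-update terms never produce a state outside $S$ and thus that each mirrored step is a legitimate "abstract step".
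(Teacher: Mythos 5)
Your proof is correct and follows essentially the same route as the paper's: fix the unique agent $a_0$ that initially holds $\aval$, mirror the concrete run step by step with the covered set frozen at $S$, and classify each concrete step as a "broadcast from boss", "broadcast from clique" or "external broadcast" according to whether the broadcaster's value is $\aval$ and whether it has held $\aval$ at every point so far, maintaining exactly the invariant the paper uses on $(b_i,K_i)$. The one obligation you leave open (that the forced clique-update terms such as $\cliquesucc{\clique}{m}{\cdot}$ land inside the covered component) is likewise not discussed in the paper's proof, so this does not constitute a gap relative to it.
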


\begin{proof}
	Let $\agents = \agentsof{\run}$.	
	As $\aval$ appears in $\run$, it must appear in $\config_0$; let $a_0$ be the unique agent such that $\data{\config_0}(a_0) = \aval$. We write $\run : \config_0 \step{} \config_1 \step{} \dots \step{} \config_k = \config$. For every $i \leq k$, let $\run_i : \config_0 \step{*} \config_i$ be the prefix of $\run$ of length $i$. We set $\aconfig^0 = (\covset, q_0, \emptyset)$.
	
	We construct by induction on $i$ a sequence of "abstract configurations" $\aconfig^i = (\covset, \boss_i, \clique_i)$ such that $\aconfig^0 \step{*} \aconfig^i$ and $\cliqueof{\aval}{\run_i} \subseteq \clique_i$.
	The statement is clear for $i=0$. 
	
	Suppose now that $(\covset, q_0, \emptyset) \step{*} \aconfig^i$. 
	If suffices to prove that $\aconfig^i \step{} \aconfig^{i+1}$. 
	We consider the last step of $\run_{i+1}$, which is referred to under the name $s_{i+1}$ in what follows; $s_{i+1}: \config_i \step{} \config_{i+1}$. Let $\agentbr$ the agent making the broadcast transition in $s_{i+1}$ and $A_{\recsymb}$ the set of agents receiving this broadcast in $s_{i+1}$. Let $(\statebr, \brone{\amessage}, \statebr') \in \transitions$ denote the transition taken by $\agentbr$ in $s_{i+1}$.
	
	We now make the following case distinction to determine the type of the "abstract step" $\aconfig^i \step{} \aconfig^{i+1}$:
	\begin{enumerate}
		\item\label{proof_completeness:case_broadcast_clique} if $\data{\config_{i}}(\agentbr) = \aval$ but there exists $j<i$ such that $\data{\config_{j}}(\agentbr) \ne \aval$ then it is a "broadcast from clique",
		\item\label{proof_completeness:case_broadcast_boss} if, for all $j \leq i$, $\data{\config_j}(\agentbr) = \aval$ then it is a "broadcast from boss",
		\item\label{proof_completeness:case_external_broadcast} otherwise it is an "external broadcast". 
	\end{enumerate}
	Note that $\agentbr$ may not change its register value in $s_{i+1}$ hence $\data{\config_i}(\agentbr) = \data{\config_{i+1}}(\agentbr)$. 
	
	Let $\agentboss$ the agent such that $\data{\config_0}(\agentboss) = \aval$. In case~\ref{proof_completeness:case_broadcast_boss}, $\agentboss = \agentbr$; in the other two cases, $\agentboss \ne \agentbr$. 
	
	It is easy to check that in all cases, the "gang" given by applying the abstract semantics covers $\gangof{\aval}{\run_{i+1}}$:

	\begin{enumerate}[label = (\arabic*)]
		
		\item In case~\ref{proof_completeness:case_broadcast_boss}, this condition is automatically satisfied. In the other two cases, we look at what $\agentboss$ does in $s_{i+1}$. If it remains idle then we have $\boss_i = \boss_{i+1}$. Otherwise it takes a reception transition as $\agentbr \ne \agentboss$. In case~\ref{proof_completeness:case_broadcast_clique} the condition is then satisfied.
		In case~\ref{proof_completeness:case_external_broadcast}, this reception cannot have action $\quotemarks{\eqtestact}$ as the broadcast is from an agent with register value that is not $\aval$ ($\data{\config_i}(\agentbr) \ne \aval$ by hypothesis). For the same reason, if this reception has action $\quotemarks{\enregact}$ then $\boss_{i+1}= \noboss$. If this reception has action $\quotemarks{\dummyact}$ and $\boss_i \ne \noboss$ then $\boss_{i+1} \ne \noboss$ as $\agentboss$ keeps value $\aval$.
		
		\item In all cases $K_{i+1}$ is defined by adding to $K_i$ all states reachable with value $\aval$ from it by receiving the broadcast message $m$ from a state of $K_i$ with value $\aval$ or from a state of $S$ with value $v' \neq v$ (plus $\statebr'$ in case~\ref{proof_completeness:case_broadcast_clique}). As $\cliqueof{\aval}{\run_i} \subseteq K_i$, necessarily  $\cliqueof{\aval}{\run_{i+1}} \subseteq K_{i+1}$
	\end{enumerate}
	
	We have proven that $\aconfig^i \step{} \aconfig^{i+1}$, which concludes the induction step. Applying the result with $i = k$ proves Lemma~\ref{lem:proof_completeness_covset_constant}. 
\end{proof}

We may now prove completeness of the abstraction. Intuitively, we start with $S= \set{q_0}$ and we increase it in the following way: we look at the first state $q \notin S$ that is covered in $\run$, we follow the "gang" associated with the value of an agent that covered $q$ until the "gang" covers it too (using Lemma~\ref{lem:proof_completeness_covset_constant}), then we do a gang reset to add $q$ to $S$. 

\begin{lemma}
	\label{lem:abstraction_complete}
	If $\run$ is an "initial run" and $\aval \in \valsof{\run}$, then there exist $S, \clique$ such that $\aconfiginit \step{*} (S, \bossof{\aval}{\run}, \clique)$ and $\statesin{\run} \subseteq S$ and $\cliqueof{\aval}{\run} \subseteq \clique$. 
\end{lemma}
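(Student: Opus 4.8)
The plan is to reduce the statement to a purely \quotemarks{set-of-coverable-states} claim and then invoke Lemma~\ref{lem:proof_completeness_covset_constant}. Concretely, I would first establish that for every "initial run" $\run$ there is a set $S \supseteq \statesin{\run}$ with $\aconfiginit \step{*} (S, q_0, \emptyset)$; once this is available, since $\statesin{\run} \subseteq S$, Lemma~\ref{lem:proof_completeness_covset_constant} applied to $\run$, $S$ and $\aval$ yields a $\clique$ with $(S, q_0, \emptyset) \step{*} (S, \bossof{\aval}{\run}, \clique)$ and $\cliqueof{\aval}{\run} \subseteq \clique$. Composing the two "abstract runs" gives $\aconfiginit \step{*} (S, \bossof{\aval}{\run}, \clique)$ with the required inclusions, which is exactly the conclusion. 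Throughout I would assume, thanks to Lemma~\ref{lem:removing_diseq_tests}, that $\prot$ has no disequality tests, so that every reception carries an action in $\set{\eqtestact, \dummyact, \enregact}$.

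For the auxiliary statement I would prove the slightly stronger claim: for every "initial run" $\run$ and every reachable $(S, q_0, \emptyset)$ (i.e.\ $\aconfiginit \step{*} (S, q_0, \emptyset)$), there exists $S' \supseteq S \cup \statesin{\run}$ with $(S, q_0, \emptyset) \step{*} (S', q_0, \emptyset)$. The argument goes by induction on $|\statesin{\run} \setminus S|$, the base case $\statesin{\run} \subseteq S$ being immediate with $S' = S$; applying the claim to $S = \set{q_0}$ then yields the auxiliary statement, using $q_0 \in \statesin{\run}$.

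For the inductive step, assume $\statesin{\run} \not\subseteq S$ and let $t \ge 1$ be the first time index at which a state outside $S$ is occupied in $\run$; fix such a state $q \notin S$, an agent $a$ reaching $q$ at step $t$, and let $v := \data{\config_t}(a)$ be its value. By minimality of $t$, the prefix $\run_{t-1}$ of $\run$ of length $t-1$ satisfies $\statesin{\run_{t-1}} \subseteq S$, and a short case distinction (whether $a$ stores a value at step $t$ or not, the broadcaster keeping its own value) shows $v \in \valsof{\run_{t-1}}$. I would then apply Lemma~\ref{lem:proof_completeness_covset_constant} to $\run_{t-1}$, $S$ and $v$ to reach $(S, \bossof{v}{\run_{t-1}}, \clique)$ with $\cliqueof{v}{\run_{t-1}} \subseteq \clique$; simulate the single concrete step $\config_{t-1} \step{} \config_t$ by one "abstract step" (of type "broadcast from boss", "broadcast from clique", or "external broadcast", following the same case analysis as in Lemma~\ref{lem:proof_completeness_covset_constant}) in which I let $\covset$ grow so as to absorb $q$; and finally perform a "gang reset" to return to a configuration of the form $(S'', q_0, \emptyset)$ with $q \in S''$ and $S'' \supseteq S$. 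Then $|\statesin{\run}\setminus S''| < |\statesin{\run}\setminus S|$, and the induction hypothesis applied to $S''$ finishes the step.

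The main obstacle is the single \quotemarks{growing} "abstract step". I must check it is legal while enlarging $\covset$: Lemma~\ref{lem:short-run} already relies on $\covset$ being allowed to increase along an "abstract run", so choosing $\covset' := S \cup \clique' \cup (\set{\boss'} \setminus \set{\noboss})$ keeps the side conditions $\clique' \subseteq \covset'$ and $\boss' \in \covset' \cup \set{\noboss}$ satisfied. I then must verify that the clique-update rule genuinely inserts $q$ into $\clique'$: if $a$ already carried $v$ it moves via some $(s, \recone{m}{\anact}, q)$ with $s \in \cliqueof{v}{\run_{t-1}} \subseteq \clique$ and $\anact \in \set{\eqtestact, \dummyact, \enregact}$, so $q \in \cliquesucc{\clique}{m}{\set{\eqtestact,\dummyact,\enregact}}$; if $a$ stores $v$ at step $t$ then $s \in \statesin{\run_{t-1}} \subseteq \covset$ gives $q \in \cliquesucc{\covset}{m}{\set{\enregact}}$; and if $a = \agentbr$ is the broadcaster, then $q = \statebr'$ is added directly. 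This is exactly where the elimination of disequality tests (Lemma~\ref{lem:removing_diseq_tests}) is used, since a $\diseqtestact$ reception would not be captured by any of these successor sets. After the reset the boss and clique are back to $q_0$ and $\emptyset$, so the induction hypothesis applies verbatim.
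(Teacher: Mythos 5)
Your proposal is correct and follows essentially the same route as the paper: both build up the covered-state set by repeatedly taking the longest prefix of $\run$ whose states lie in the current $S$, invoking Lemma~\ref{lem:proof_completeness_covset_constant} on that prefix, mimicking the one offending concrete step abstractly, and performing a "gang reset", before a final application of Lemma~\ref{lem:proof_completeness_covset_constant} to land on $\bossof{\aval}{\run}$ and $\cliqueof{\aval}{\run}$. The only differences are cosmetic (induction on $|\statesin{\run}\setminus S|$ rather than on an increasing chain $C_0 \subseteq \cdots \subseteq C_m$, and an explicit check that the enlarged $\covset'$ satisfies the side conditions, which the paper leaves implicit).
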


\begin{proof}
	Let $\run$ an "initial run".
	We construct by induction an increasing sequence of sets $C_0, \ldots, C_m$ such that $C_m = \statesin{\run}$ and for all $i$, $C_i \subseteq \statesin{\run}$, $q_0 \in C_i$, and there exists $S_i$ such that $\aconfiginit \step{*} (S_i, q_0, \emptyset)$ and $C_i \subseteq S_i$. 
	First, we set $C_0 = \set{q_0}$ and the property is verified as $\aconfiginit \step{*} (\set{q_0}, q_0, \emptyset)$.
	
	Now suppose we constructed $C_0, \ldots, C_j$. If $C_j = \statesin{\run}$ we can stop. Otherwise let $\run_p: \config_0 \step{*}\config_p$ the longest suffix of $\run$ such that  $\statesin{\run_p} \subseteq C_j$. Write $s$ the step immediately after $\run_p$ in $\run$. By maximality of $\run_p$, $\run_p \cdot s$ covers some state $q$ that is not in $C_j$. Let $a$ be an agent that is in $q$ after step $s$, let $\aval$ be its value at that point. We set $C_{i+1} = C_i \cup \set{q}$.
	
	By induction hypothesis, there exist $S_{j}$ such that $\aconfiginit \step{*} (S_{j}, q_0, \emptyset)$ and $C_j \subseteq S_{j}$. Furthermore, as $\statesin{\run_p} \subseteq C_j$, by Lemma~\ref{lem:proof_completeness_covset_constant} there exists $\clique_j$ such that $\aconfiginit \step{*} (S_{j}, \bossof{\aval}{\run_p}, \clique_j)$ and $\cliqueof{\aval}{\run_p} \subseteq \clique_j$.
	
	If $q \in \clique_j$ then applying a "gang reset" suffices. If not, we mimic step $s$ with a step in the abstract semantics, as in Lemma~\ref{lem:proof_completeness_covset_constant} so that $q$ is added to the "clique", then apply a gang reset to reach $(S_{i+1}, q_0, \emptyset)$ with $C_{i+1} = S_i \cup \set{q} \subseteq S_{i+1}$. 
	
	This concludes our induction.
	
	In the end, there exist $S_m$ such that $\aconfiginit \step{*} (S_{m}, q_0, \emptyset)$ and $\statesin{\run} \subseteq S_{j}$. Lemma~\ref{lem:proof_completeness_covset_constant} allows us to conclude the proof.¨
\end{proof}

\subsubsection{Soundness}
\label{sec:one-soundness}

It is left to prove that our abstraction is sound, which we will do by considering an "abstract run" $\aconfiginit \step{\ast} \aconfig = (S, \boss, \clique)$ and constructing a concrete "run" $\run : \config_0 \step{\ast} \config $ that follows the abstract run with an exponential amount of agents to make sure that we can send some of them through all possible transitions at each step and never run out.

\begin{lemma}
	\label{cor:soundness}
	For all $\sigma_0 \in \aconfiginitset$ and $\sigma = (S, b, K) \in \Sigma$ such that $\sigma_0 \step{*} \sigma$, for all $s \in S$, there exists a reachable configuration $\gamma$ covering $s$.
\end{lemma}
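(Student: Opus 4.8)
The plan is to prove a quantitative strengthening of the statement by induction on the length of the abstract run $\aconfiginit \step{*} \aconfig$, and then read off the lemma. For a concrete "initial run" reaching a "configuration" $\config$, a value $v \in \nats$ and an integer $N$, say that the run \emph{$N$-realizes} $(\covset,\boss,\clique)$ for $v$ when: (i) every $s \in \covset$ is the state of at least $N$ agents in $\config$, all carrying pairwise distinct values, none equal to $v$; (ii) if $\boss \neq \noboss$, some agent in $\config$ is in state $\boss$ with value $v$; and (iii) every $s \in \clique$ is the state of at least $N$ agents in $\config$, all with value $v$. I will show that for every abstract run $\aconfiginit \step{*} (\covset,\boss,\clique)$ and every $N$, there is a concrete "initial run" that $N$-realizes $(\covset,\boss,\clique)$ for some $v$. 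Applying this with $N=1$ immediately yields, for each $s \in \covset$, a reachable "configuration" covering $s$, which is exactly the lemma.

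The base case $(\set{q_0},q_0,\emptyset)$ is witnessed by an "initial configuration" with $N{+}1$ agents in $q_0$ carrying distinct values, one of which is designated as the boss value $v$ (clause (iii) is then vacuous). For the inductive step I first apply the induction hypothesis with a larger multiplicity $N_0 := (\size{\transitions}+2)\,N$ to obtain a run $N_0$-realizing the penultimate "abstract configuration" $(\covset,\boss,\clique)$ for a value $v$, and then extend that run to realize the final "abstract step" while re-establishing the invariant at level $N$. The device that preserves the clique and boss components is that a single concrete broadcast of $(m,v)$ may be received \emph{simultaneously} by arbitrarily many agents: to put $N$ agents into a reception-successor state it suffices to let $N$ agents in the source state take the matching reception on one broadcast, and to keep $N$ agents in a clique state that also feeds successors I reserve a disjoint idle batch of $N$ agents — this is why $N_0$ is a bounded multiple of $N$. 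To fill the \emph{broadcaster's own} successor state $\statebr'$ with $N$ agents I instead repeat the broadcast $N$ times, letting nobody receive on the extra repetitions, which is legitimate since reception is optional.

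The four cases of the "abstract step" are then treated in turn. For a "gang reset", where $\covset' = \covset \cup \clique \cup \set{\boss}$, I invoke the "copycat principle" (Remark~\ref{rem:copycat-principle}): running $N$ disjoint copies of the current run on pairwise value-disjoint agent sets turns each clique state and the boss state, which are covered with the \emph{shared} value $v$, into states covered by $N$ agents with pairwise \emph{distinct} values; together with the existing coverage of $\covset$ this establishes clause (i) for $\covset'$, while fresh agents held in $q_0$ supply the new boss $q_0$ (for a fresh value) and $\clique'=\emptyset$ makes (iii) vacuous. A "broadcast from boss" is realized by the boss broadcasting $(m,v)$ once and moving to $\boss'$, with clique and covered agents receiving to fill $\clique'$; since $\boss'$ needs only one agent this is direct. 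A "broadcast from clique" combines both devices: clique agents of value $v$ broadcast (repeated to fill $\statebr'$), while clique agents receive via $\eqtestact,\dummyact,\enregact$ and covered agents receive via $\enregact$, thereby adopting value $v$ and joining the clique, matching the definition of $\clique' = \clique \cup \set{\statebr'} \cup \cliquesucc{\clique}{m}{\set{\eqtestact,\dummyact,\enregact}} \cup \cliquesucc{\covset}{m}{\set{\enregact}}$. An "external broadcast" is realized by an agent of $\covset$ broadcasting with a value $v' \neq v$, so that clique agents may only receive via $\dummyact$ and the boss may keep $v$, drop to $\noboss$ by storing $v'$, or stay idle, exactly as the case permits.

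The main obstacle is the multiplicity bookkeeping: at each step the same agents are needed both to sustain the clique and covered states already present and to populate the newly created ones, so a careful count must confirm that the bounded blow-up factor $(\size{\transitions}+2)$ per step suffices — whence, starting from sufficiently many agents (exponentially many in the length of the abstract run, which is itself bounded by Lemma~\ref{lem:short-run}), one never runs out. A secondary point to verify is that realizing a clique or external broadcast only ever requires states the invariant guarantees to be populated, namely that the broadcaster comes from $\clique$ or from $\covset$ respectively, so that every "abstract step" occurring along the given run is indeed concretely realizable.
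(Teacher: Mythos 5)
Your proposal is correct and follows essentially the same strategy as the paper's proof (Lemma~\ref{lem:correctness-construction}): an induction along the abstract run maintaining a multiplicity invariant that shrinks by a bounded factor $\Theta(\size{\transitions})$ per step, realized concretely by splitting each state's agent pool into one batch per transition plus an idle batch, repeating broadcasts with no receivers to populate the broadcaster's successor, and letting many agents receive a single broadcast to populate reception successors. The only cosmetic differences are that you account for the budget backwards ($N_0=(\size{\transitions}+2)N$ per step) rather than fixing $M^n$ agents upfront, and that your extra ``pairwise distinct values'' clause forces a copycat-style parallel composition at gang resets, which the paper avoids by not constraining the values of agents covering $S$.
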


We in fact prove the following stronger lemma, which directly implies Lemma~\ref{cor:soundness}.

\begin{lemma}
	\label{lem:correctness-construction}

	Let $\sigma_0 \in \aconfiginitset$, and $\sigma_0 \to \sigma_1 \to \cdots \to \sigma_n$ an "abstract run". For all $i$ let $(S_i, b_i, K_i) := \sigma_i$. Let $M = \size{\Delta}+1$.
	
	For all $i$, there exist a set of agents $\agents_i$, an "initial run" $\run_i : \config_0 \step{*} \config_i$ over $\agents_i$, agents $a_0, \cdots, a_n \in \agents_i$ and values $v_0, \ldots, v_n \in \nats$ such that:
	\begin{itemize}
		\item for all $s \in S_i$, there are at least $M^{n-i}$ agents (different from $a_i$) in state $s$ 
		
		\item for all $s \in K_i$, there are at least $M^{n-i}$ agents (different from $a_i$) in state $s$ with value $v_i$
		
		\item if $b_i \neq \noboss$, then $a_i$ is in state $b_i$ with value $v_i$.
	\end{itemize}
\end{lemma}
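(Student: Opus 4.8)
The plan is to prove Lemma~\ref{lem:correctness-construction} by a forward induction on $i$, building each concrete "run" $\run_{i+1}$ as an extension of $\run_i$ and maintaining the three stated invariants. The guiding quantitative idea is that the guaranteed supply of agents shrinks by exactly the factor $M = \size{\Delta}+1$ at each step: we start with $M^n$ agents at $i=0$ and need only $1$ at $i=n$. For the base case $i=0$ we take an "initial configuration" with $M^n+1$ agents, all in $q_0$ with distinct values, designate one of them as $a_0$, let $v_0$ be its value, and take $\run_0$ to be the empty run; since $\aconfig_0 = (\set{q_0}, q_0, \emptyset)$ this satisfies all three conditions.

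For the inductive step, we inspect the type of the "abstract step" $\aconfig_i \step{} \aconfig_{i+1}$ and simulate it concretely. The principle common to the three broadcast cases is that, to pass from exponent $M^{n-i}$ to $M^{n-i-1}$, the agents sitting in any source state may be split among at most $1 + \size{\Delta} = M$ fates---staying put, or following one of its at most $\size{\Delta}$ outgoing reception transitions on the broadcast message---so that each resulting group still contains at least $M^{n-i}/M = M^{n-i-1}$ agents. Concretely, for a "broadcast from clique" with transition $(\statebr, \brone{m}, \statebr')$ we repeat $M^{n-i-1}$ times a broadcast in which one "clique" agent (state $\statebr \in \clique$, value $v_i$) emits $(m,v_i)$ and moves to $\statebr'$, while selected receivers realize the successors recorded in $\cliquesucc{\clique}{m}{\set{\eqtestact,\dummyact,\enregact}}$ (from clique agents, preserving value $v_i$) and in $\cliquesucc{\covset}{m}{\set{\enregact}}$ (from background agents of $S$, which store $v_i$ and thereby join the clique), and the boss $a_i$ optionally updates its state according to condition~\ref{item:broadcast_from_clique_boss}. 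The "broadcast from boss" and "external broadcast" cases are analogous: in the former the boss $a_i$ itself is the emitter; in the latter we pick an emitter in state $\statebr \in S$ carrying a value different from $v_i$ (available since, by the copycat principle of Remark~\ref{rem:copycat-principle}, we may assume background agents carry fresh distinct values), so that clique agents can only receive with a dummy action, matching $\cliquesucc{\clique}{m}{\set{\dummyact}}$ and condition~\ref{item:external_broadcast_clique}. In each case one checks, using that the number of fates per source is at most $M$, that no source state is depleted below $M^{n-i-1}$.

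For a "gang reset" ($\covset' = \covset \cup \clique \cup \set{\boss}$, $\clique' = \emptyset$, $\boss' = q_0$) no new broadcast is required, but the state $b_i$---covered so far only by the single boss $a_i$---must be populated with $M^{n-i-1}$ agents, since $b_i \in S_{i+1}$. We obtain them via the copycat principle: replicating the whole run $\run_i$ a further $M^{n-i-1}$ times over disjoint fresh agents yields that many extra copies of $a_i$ ending in $b_i$ with fresh values (harmless, as $S_{i+1}$ imposes no value constraint), while merely multiplying every other count. The states of $K_i$ pass into $S_{i+1}$ with their value-$v_i$ witnesses, and the new boss $a_{i+1}$ is chosen among the agents present in $q_0 \in S_0 \subseteq S_i$, with $v_{i+1}$ its value. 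Applying the construction up to $i=n$ and reading off the first invariant yields Lemma~\ref{cor:soundness}.

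The main obstacle I anticipate is not any single case but the simultaneous bookkeeping across all four step types: one must verify that every required successor state (and, for clique states, the value-$v_i$ tag) reaches the count $M^{n-i-1}$ \emph{while} no source is over-consumed, and this is exactly tight because the factor $M = \size{\Delta}+1$ precisely matches the maximal number of fates of a source state. The delicate points are tracking which agents carry value $v_i$---store actions migrate background $S$-agents into the clique, and external broadcasts must be emitted by an agent of value $\neq v_i$---and ensuring the copycat enlargement used at resets is applied before the concrete steps so that the resulting longer run still realizes $\aconfig_{i+1}$ with all invariants intact.
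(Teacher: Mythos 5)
Your proof follows the paper's argument in all essentials: the same forward induction on the abstract run, the same case analysis on the type of abstract step, and the same key counting device of splitting each guaranteed pool of $M^{n-i}$ agents into $M=\size{\transitions}+1$ disjoint sub-pools of size $M^{n-i-1}$, one per candidate reception transition plus one idle pool (the paper indexes these by $\transitions\cup\set{\epsilon}$). The only genuine divergence is the gang-reset case. You assume that $b_i$ may be covered only by the single agent $a_i$ and therefore replicate the whole run $M^{n-i-1}$ further times via the copycat principle; the paper instead does nothing at a reset, because the side condition $\boss'\in \covset'\cup\set{\noboss}$, imposed on the target of every abstract step, already guarantees $b_i\in S_i$ whenever $b_i\ne\noboss$, so the induction hypothesis supplies $M^{n-i}$ agents in $b_i$ besides $a_i$, and the reset only transfers states of $K_i\cup\set{b_i}\subseteq S_i$ into $S_{i+1}$. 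Your copycat patch is still sound (it only multiplies counts, and the fresh values are harmless since membership in $S_{i+1}$ carries no value constraint), but it addresses a non-problem. Likewise, in the external-broadcast case your insistence on an emitter whose value differs from $v_i$ is unnecessary: the clique only performs dummy receptions there, and the boss either stays, does a dummy reception, or stores the received value (after which the abstraction claims nothing about it), so the loosely justified appeal to copycat for fresh background values can simply be dropped. With these two simplifications your construction coincides with the paper's.
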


\begin{proof}
	
	We proceed by induction on $i$.
	We set $\agents_0 = \set{1, \ldots, M^n}$, and we set $\config_0(a) = (q_0, a)$ for all $a$. Clearly $\config_0$ satisfies the requirements with respect to $\sigma_0$, with $a_0 = v_0 \in \agents$.
	
	Now assume we constructed $\config_0 \step{*} \cdots \step{*} \config_{i}$ over $\agents_i$ satisfying the conditions of the lemma, we construct $\config_{i+1}$ using a case distinction on the form of the transition $\sigma_i \to \sigma_{i+1}$.
	For each $s \in S\setminus K$ we define $\agents_{i,s}$ as the set of agents in state $s$ in $\config_{i}$. We have $\size{\agents_{i,s}} \geq M^{n-i}$ thus we can extract $M = \size{\Delta}+1$ disjoint sets of agents $(\agents_{i,s}^d)_{d \in \Delta\cup\set{\epsilon}}$ from it, each set having $M^{n-i-1}$ agents.
	Similarly, for each $s \in K$ we define $\agents_{i,s}$ as the set of agents in state $s$ \textbf{with value $\mathbf{v_i}$} in $\config_{i}$. We have $\size{\agents_{i,s}} \geq M^{n-i}$ thus we can extract $\size{\Delta}+1$ disjoint sets of agents $(\agents_{i,s}^d)_{d \in \Delta\cup\set{\epsilon}}$ from it each set having $M^{n-i-1}$ agents.
	\\
	
	\textbf{Case 1: } If $\sigma_i \to \sigma_{i+1}$ is a \kl{broadcast from clique} $d = (q, \brone{m}, q')$ with $q \in K_i$, then we make all agents $a \in \agents_{i,q}^{d}$ (which all have value $v_i$) execute that transition one by one.
	None of those broadcasts are received by any other agent, except for the last one:
	If $b \neq b'$ then there is a transition $(b, \recone{m}{\alpha}, b')$ and we make $a_i$ execute it upon receiving the broadcast. We then set $a_{i+1} = a_i$.
	For all $k' \in K_{i+1} \setminus (K_i \cup \set{q'})$ there exists a transition $d'=(k, \recone{m}{\alpha}, k')$ such that either $\alpha$ is $\eqtestact$ or $*$ and $k \in K_i$ or $\alpha$ is $\enregact$ and $k\in S$.
	In both cases we make all agents of $\agents_{i,k}^{d'}$ take that transition.
	We set $v_{i+1} = v_i$.
	\\
	
	\textbf{Case 2: }If $\sigma_i \to \sigma_{i+1}$ is a \kl{broadcast from boss} $d = (b_i, \brone{m}, b_{i+1})$, then we make $a_i$ (which has value $v_i$) execute that transition, and we set $a_{i+1} = a_i$.
	The agents receiving that message are as follows:
	
	For all $k' \in K_{i+1} \setminus K_i $ there exists a transition $d'=(k, \recone{m}{\alpha}, k')$ such that $\alpha$ is either $\eqtestact$ or $*$ and $k \in K_i$ or $\alpha$ is $\enregact$ and $k\in S$.
	In both cases we make all agents of $\agents_k^{d'}$ take that transition.
	By definition of an "abstract run", we must have $b_i \in S_i$.
	Hence we can make all agents of $\agents_{i,s}^{d}$ execute $d$, with no agent receiving the corresponding broadcasts.
	We set $v_{i+1} = v_i$.
	\\
	
	\textbf{Case 3: } If $\sigma_i \to \sigma_{i+1}$ is an \kl{external broadcast} $d = (q, \brone{m}, q')$ , then we make all agents $a \in \agents_q^{d}$ execute that transition one by one. None of those broadcasts are received by any other agent, except for the last one:
	If $b_i \neq b_{i+1}$ then there is a transition $(b_i, \recone{m}{\alpha}, b')$ and either $b_{i+1} = b'' \neq \noboss$ and $\alpha = *$ or $b_{i+1} = \noboss$ and $\alpha=\enregact$. In both cases we make $a_i$ execute that transition, and we set $a_{i+1} = a_i$.
	For all $k' \in K_{i+1} \setminus K_i$ there exists a transition $d'=(k, \recone{m}{*}, k')$ with $k \in K_i$. We make all agents of $\agents_k^{d'}$ take that transition.
	We set $v_{i+1} = v_i$.
	\\
	
	\textbf{Case 4: }  If $\sigma_i \to \sigma_{i+1}$ is a \kl{gang reset} then no agent moves and we select some $a_{i+1}$ in $\agents_{q_0}$ and set $v_{i+1}$ to be its value.
	\\
	
	Throughout the case distinction we have ensured that:
	\begin{itemize}
		\item If $b_{i+1} \neq \noboss$ then $a_{i+1}$ is an agent of value $v_{i+1}$.
		
		\item If the step is not a gang reset, then $v_{i+1} = v_i$ and for all $k' \in K_{i+1} \setminus K_i$, there exists $d \in \Delta$ from some $k$ to $k'$ such that all agents of $\agents_{i,k}^d$ take that transition. Furthermore, if $d$ is of the form $(k,\recone{m}{\enregact},k')$ then the broadcasting agent has value $v_i$, thus all those agents keep value $v_i = v_{i+1}$. For all $k \in K_{i}$, the agents of $\agents_{i,k}^\epsilon$ do not move between configurations $\config_{i}$ and $\config_{i+1}$, hence they have state $k$ and value $v_{i+1}$ in $\config_{i+1}$.
		
		\item If the step is a gang reset, the conditions of the lemma hold trivially.
	\end{itemize}
	
	As a result, we have ensured that the conditions of the lemma were respected.
	This concludes our induction.
\end{proof}

To obtain Lemma~\ref{cor:soundness}, we apply Lemma~\ref{lem:correctness-construction} to an "abstract run" $\sigma_0 \to \cdots \sigma_n = \sigma$ from $\sigma_0$ to $\sigma$ by setting $i = n$.

\subsection{Conclusion}

\begin{proposition}
	\label{prop:sound-and-complete}
	Let $q_f$ be a state, there exists a reachable "configuration" covering $q_f$ if and only if there exists a reachable "abstract configuration" $(S,b,K)$ with $q_f \in S$.
\end{proposition}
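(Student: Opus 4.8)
The plan is to obtain Proposition~\ref{prop:sound-and-complete} by directly combining the completeness result (Lemma~\ref{lem:abstraction_complete}) with the soundness result (Lemma~\ref{cor:soundness}), which together establish that the abstract semantics faithfully captures which states are coverable. No new construction is needed at this point: both implications are one-line consequences of the respective lemmas, so the whole argument is a matter of instantiating them correctly.

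For the forward implication, I would start from a reachable "configuration" covering $q_f$, that is, an "initial run" $\run$ such that $q_f \in \statesin{\run}$. Since $\run$ is over a non-empty set of "agents", the set $\valsof{\run}$ is non-empty, so I may pick any value $\aval \in \valsof{\run}$. Applying Lemma~\ref{lem:abstraction_complete} to $\run$ and $\aval$ yields sets $S, \clique$ with $\aconfiginit \step{*} (S, \bossof{\aval}{\run}, \clique)$ and $\statesin{\run} \subseteq S$. As $q_f \in \statesin{\run} \subseteq S$, the "abstract configuration" $(S, \bossof{\aval}{\run}, \clique)$ is reachable in the abstract semantics and contains $q_f$ in its first component, which is exactly what is required.

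For the backward implication, I would take a reachable "abstract configuration" $\aconfig = (S, \boss, \clique)$ with $q_f \in S$, so that $\aconfiginit \step{*} \aconfig$. Lemma~\ref{cor:soundness}, applied with the state $s := q_f \in S$, immediately produces a reachable "configuration" $\config$ that covers $q_f$. The two directions together give the stated equivalence.

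There is essentially no technical obstacle remaining at this final step: all the difficulty has already been absorbed into the two lemmas, namely the inductive construction of an "abstract run" tracking the "gang" of a single value through a concrete "run" (completeness, via Lemmas~\ref{lem:proof_completeness_covset_constant} and~\ref{lem:abstraction_complete}), and the reverse construction that replays an "abstract run" with an exponentially large supply of "agents" — $M^{n-i}$ copies at step $i$ — so that every required transition can be fired without running out (soundness, via Lemma~\ref{lem:correctness-construction}). The only minor points to verify are that a value is always available to feed into Lemma~\ref{lem:abstraction_complete}, which holds because the set of "agents" is non-empty, and that $q_f \in S$ so that Lemma~\ref{cor:soundness} applies; both are immediate.
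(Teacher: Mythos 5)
Your proof is correct and matches the paper's own argument exactly: both implications are obtained by instantiating Lemma~\ref{lem:abstraction_complete} (completeness) and Lemma~\ref{cor:soundness} (soundness) respectively. The extra care you take in noting that $\valsof{\run}$ is non-empty is a harmless elaboration of the same one-line proof.
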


\begin{proof}
	The two implications follow from Lemmas~\ref{lem:abstraction_complete} and~\ref{cor:soundness} respectively.
\end{proof}

\textbf{NP-hardness.} We present here a reduction from the 3SAT problem to the "cover problem" in 1-BNRAs.

\begin{figure}[h]
	\begin{tikzpicture}[xscale=0.5,AUT style,node distance=2cm,auto,>= triangle
	45]
	\tikzstyle{initial}= [initial by arrow,initial text=,initial
	distance=.7cm]
	
	\node[state,initial, minimum width=0.1pt] (0) at (0,0) {0};
	
	\node[state] [right of=0, xshift=-15pt] (1) {1};
	
	\node [right of=1, xshift=-35pt] (dots) {\large $\cdots$};

	\node[state] [right of=dots, xshift=-35pt] (N-1) {N-1};
	\node[state] [right of=N-1, xshift=-15pt] (N) {N};
	
	\node[state] [right of=N, xshift=5pt] (1') {1'};
	
	\node [right of=1', xshift=-35pt] (dots2) {\large $\cdots$};

	\node[state] [right of=dots2, xshift=-35pt] (m-1) {m-1'};
	\node[state] [right of=m-1, xshift=7pt] (m) {m'};
	
	\coordinate[below of=1] (stop);
	\node[state] [below right of=stop, xshift=-20pt] (r1) {$x_1$};
	\node [above left of=r1, yshift=-20pt, xshift=10pt] (r1') {$\recone{x_1}{\enregact}$};
	\node[state] [right of=r1] (r1b) {$\neg x_1$};
	\node [above left of=r1b, yshift=-20pt, xshift=10pt] (r1b') {$\recone{\neg x_1}{\enregact}$};
	\node [right of= r1b, xshift=-10pt] (dots3) {\Large $\cdots$};
	\node[state] [right of=dots3, xshift=-10pt] (rn) {$x_n$};
	\node [above left of=rn, yshift=-20pt, xshift=10pt] (rn') {$\recone{x_n}{\enregact}$};
	\node[state] [right of=rn] (rnb) {$\neg x_n$};
	\node [above left of=rnb, yshift=-20pt, xshift=10pt] (rnb') {$\recone{\neg x_n}{\enregact}$};
	
	\draw (0) .. controls +(0,-2) and +(-1,0) .. (stop);
	\draw[->] (stop) .. controls +(2,0) and +(0,1) .. (r1);
	\draw[->] (stop) .. controls +(6,0) and +(0,1) .. (r1b);
	\draw[->] (stop) .. controls +(14,0) and +(0,1) .. (rn);
	\draw[->] (stop) .. controls +(18,0) and +(0,1) .. (rnb);
	\path[->, bend left=20]
	(0) edge node[above] {$\brone{x_1}$} (1)
	(N-1) edge node[above] {$\brone{x_N}$} (N) 	
	;
	\path[->, bend left=40]
	(N) edge node[above] {$\recone{\ell_1^1}{\eqtestact}$} (1')
	(m-1) edge node[above] {$\recone{\ell_m^1}{\eqtestact}$} (m) 	
	;
	\path[->, bend right=20] 
	(0) edge node[below] {$\brone{\neg x_1}$} (1)
	(N-1) edge node[below] {$\brone{\neg x_N}$} (N) 
	;
	\path[->, bend right=30] 
	(N) edge node[below] {$\recone{\ell_1^3}{\eqtestact}$} (1')
	(m-1) edge node[below] {$\recone{\ell_m^3}{\eqtestact}$} (m)	
	;
	\path[->]
	(N) edge node[above] {$\recone{\ell_1^2}{\eqtestact}$} (1')
	(m-1) edge node[above] {$\recone{\ell_m^2}{\eqtestact}$} (m) 	
	;
	\path[->, loop below]
	(r1) edge node[left] {$\brone{x_1}$} (r1)
	(r1b) edge node[left] {$\brone{\neg x_1}$} (r1b) 	
	(rn) edge node[left] {$\brone{x_n}$} (rn)
	(rnb) edge node[left] {$\brone{\neg x_n}$} (rnb) 	
	;
\end{tikzpicture}
	\caption{The "protocol" used for the NP-hardness proof.}
	\label{fig:np-hard}
\end{figure}
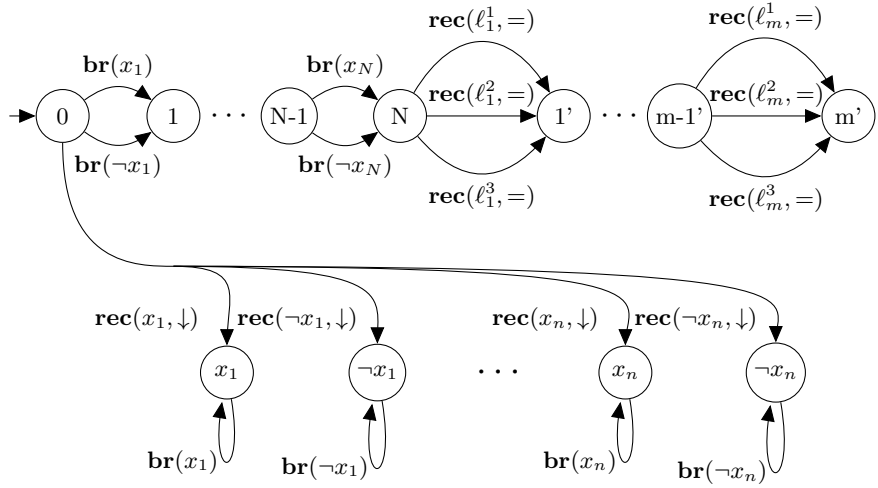

\begin{proposition}
	\label{prop:np-hard-query-cover}
	The "cover problem" is NP-hard.
\end{proposition}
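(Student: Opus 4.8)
The plan is to reduce from 3SAT using the "protocol" of \cref{fig:np-hard}. Fix a 3SAT instance with variables $x_1, \dots, x_n$ and clauses $C_1, \dots, C_m$, where $C_j = \ell_j^1 \vee \ell_j^2 \vee \ell_j^3$ and each $\ell_j^k$ is a literal over $\set{x_1, \neg x_1, \dots, x_n, \neg x_n}$; here $N := n$. The "protocol" combines two gadgets. A \emph{main} agent first walks the chain $0 \to 1 \to \cdots \to N$, taking for each variable $x_i$ exactly one of the broadcasts $\brone{x_i}$ or $\brone{\neg x_i}$; this defines an assignment, and the key point is that the main agent performs no "store action" (its chain uses only broadcasts and then equality tests), so its register keeps its initial identifier $v$ throughout the run. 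From state $N$ it then walks the clause chain $N \to 1' \to \cdots \to m'$, where moving past $C_j$ requires a reception $\recone{\ell_j^k}{\eqtestact}$ of some literal of $C_j$ with value equal to $v$. The \emph{memory} agents act as an external store: starting in state $0$, such an agent waits (ignoring messages) until it receives a chosen literal $\ell$ via $\recone{\ell}{\enregact}$, stores the broadcaster's value, and then loops broadcasting $\ell$ with that value. I claim that $(\prot, m')$ is a positive instance of \COVER iff the formula is satisfiable.

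For completeness, assume $\phi$ satisfies the formula. I would build an "initial run" covering $m'$ with one main agent of identifier $v$ and $n$ memory agents, one per variable. The main agent broadcasts $\phi$ along the variable chain; while it does so, the memory agent for $x_i$ receives and stores (with value $v$) the literal of $x_i$ that is true under $\phi$, ending in the corresponding literal state and looping on it. Once in state $N$, for each clause $C_j$ the main agent picks a literal $\ell_j^k$ satisfied by $\phi$, the memory agent holding $\ell_j^k$ rebroadcasts it with value $v$ via its self-loop, and the main agent fires $\recone{\ell_j^k}{\eqtestact}$; the equality test passes because the rebroadcast value is $v$. This drives the main agent to $m'$.

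For soundness, suppose an "initial run" covers $m'$. Since only the main chain reaches $m'$, the covering agent $a$ traversed both chains; let $v$ be its register value (constant, as argued above) and $\phi$ the assignment given by $a$'s choices on $0 \to \cdots \to N$, which is well defined since exactly one literal per variable is broadcast by $a$. The crux, and the main obstacle, is a no-cheating lemma: \emph{every} literal ever broadcast with value $v$ is true under $\phi$. I would prove this by induction on the time of broadcast. The earliest value-$v$ literal broadcast must be made by $a$ itself, since any other broadcaster would have had to acquire $v$ through a $\recone{\cdot}{\enregact}$ reception of an even earlier value-$v$ literal broadcast (no agent other than $a$ starts with $v$, as identifiers are distinct), contradicting minimality. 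A later broadcast by a memory agent rebroadcasts exactly the literal it stored, received with value $v$ at an earlier time, hence true under $\phi$ by induction. Since each clause transition $\recone{\ell_j^k}{\eqtestact}$ forces $a$ to receive a literal of $C_j$ with value $v$, every clause contains a literal true under $\phi$, so $\phi$ satisfies the formula. As the construction is polynomial in the size of the instance, NP-hardness follows.
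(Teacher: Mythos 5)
Your proof is correct and follows essentially the same reduction as the paper: the same 3SAT protocol with a main agent broadcasting an assignment and memory agents storing and rebroadcasting literals with the main agent's identifier. Your explicit induction for the no-cheating lemma just spells out in more detail the paper's one-line soundness argument that a literal received with value $v$ must have been broadcast by $a$ earlier.
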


\begin{proof}
	Let $x_1, \ldots, x_n$ be variables and $\query = \bigwedge_{j=1}^m C_j$ with, for all $j$, $C_j = \ell_j^1 \lor \ell_j^2 \lor \ell_j^3$ and $\ell_j^1, \ell_j^2, \ell_j^3 \in \set{x_i, \neg x_i \mid 1 \leq i \leq n}$. 
	
	Consider the "protocol" displayed in Figure~\ref{fig:np-hard}.
	Our alphabet of messages is the set of literals $\set{x_i, \neg x_i \mid 1 \leq i \leq n}$.
	Agents may either receive a message, and repeat it forever or it may broadcast one of $x_i, \neg x_i$ for each $i$ and then try to receive a message with one of $\ell_j^1, \ell_j^2, \ell_j^3$ for each $j$, with their own register value.
	
	If $\query$ is satisfied by some assignment $\nu$, then we construct a run where an agent $a$ broadcasts the satisfied literals while going from $0$ to $N$ and other agents receive the messages and go to the corresponding states in the lower part of the protocol while storing the register value of $a$.
	Then for each $j$ we select some $\ell_j^p$ satisfied by $\nu$. There exists $i$ such that there is an agent in state $\ell_j^p$, which broadcasts $\ell_j^p$ along with the initial register value of $a$, allowing $a$ to go to the next state.
	As a result, there is a "run" in which an agent $a$ reaches $m'$.
	
	Now suppose there is a "run" $\run$ over some set of agents $\agents$ such that some agent $a \in \agents$ is in state $m'$ in the final configuration.
	For each $i$, $a$ has broadcast either $x_i$ or $\neg x_i$, but not both.
	Let $\nu$ be the valuation assigning $\top$ to $x_i$ if and only if $a$ has broadcast it.
	For each $j$ $a$ has received one of $\ell_j^1, \ell_j^2, \ell_j^3$ along with its own initial register value (which we call $r$). For this to happen, $a$ must have broadcast this literal before, hence it is satisfied by $\nu$.
	
	As a result, $\nu$ satisfies a literal of each clause of $\query$, and thus satisfies $\query$. This concludes our reduction.
\end{proof}


\thmNPComplete*

\begin{proof}
	The lower bound is given by Proposition~\ref{prop:np-hard-query-cover}.
	For the upper bound, say we are given a "protocol" $\prot = (Q, \messages, \Delta, q_0, \regnum)$ and a state $q_f$. By Proposition~\ref{prop:sound-and-complete}, there is a reachable configuration covering $q_f$ if and only if there is an "abstract run" to an "abstract configuration" $(S,b, K)$ with $q_f \in S$.
	Furthermore, by Lemma~\ref{lem:short-run} if there is such an "abstract run" then there is one with at most $(\size{Q}+2)^3$ steps. 
	Thus we can simply guess such an "abstract run" and verify it in polynomial time.
	As a result, the "cover problem" is in \NP. 
\end{proof}
\end{document}